\newtheorem{example}{Example}
\newtheorem{theorem}{Theorem}
\newtheorem{lemma}{Lemma}
\newtheorem{definition}{Definition}
\newcommand{\mat}[1]{\ensuremath{\textbf{#1}}}
\newcommand{\Nat}{\ensuremath{\mathds{N}}}
\newcommand{\Real}{\ensuremath{\mathds{R}}}
\newcommand{\bigO}[1]{\ensuremath{\mathcal{O}(#1)}}
\newcommand{\Vars}{\ensuremath{\mathcal{V}}}
\newcommand{\Dom}{\ensuremath{\Delta}}
\newcommand{\Preds}{\ensuremath{\mathcal{P}}}
\newcommand{\w}{\ensuremath{w, \overline{w}}}
\newcommand{\fotwo}{\ensuremath{\textbf{FO}^2}}
\newcommand{\ufotwo}{\ensuremath{\textbf{UFO}^2}}
\newcommand{\ccs}{\ensuremath{\fotwo+\textbf{CC}}}
\newcommand{\ufoccs}{\ensuremath{\ufotwo+\textbf{CC}}}
\newcommand{\ctwo}{\ensuremath{\textbf{C}^2}}
\title{Complexity of Weighted First-Order Model Counting in the Two-Variable Fragment with Counting Quantifiers:\\A Bound to Beat}
\date{%
Faculty of Electrical Engineering \\
Czech Technical University in Prague \\
Prague, Czech Republic \\%
% \{tothjan2, ondrej.kuzelka\}@fel.cvut.cz\\
}
\author{%
 %\\
  % examples of more authors
  Jan T\'{o}th
  % tothjan2@fel.cvut.cz
   % Faculty of Electrical Engineering \\
   % Czech Technical University in Prague \\
   % Prague, Czech Republic \\
   %\texttt{tothjan2@fel.cvut.cz} \\
   \and
   Ond\v{r}ej Ku\v{z}elka
   % ondrej.kuzelka@fel.cvut.cz
   % Faculty of Electrical Engineering \\
   % Czech Technical University in Prague \\
   % Prague, Czech Republic \\
   %\texttt{ondrej.kuzelka@fel.cvut.cz} \\
  % \And
  % Coauthor \\
  % Affiliation \\
  % Address \\
  % \texttt{email} \\
  % \And
  % Coauthor \\
  % Affiliation \\
  % Address \\
  % \texttt{email} \\
}
\begin{document}

\maketitle

\begin{abstract}
    We study the time complexity of weighted first-order model counting (WFOMC) over the logical language with two variables and counting quantifiers.
    The problem is known to be solvable in time polynomial in the domain size. However, the degree of the polynomial, which turns out to be relatively high for most practical applications, has never been properly addressed.
    First, we formulate a time complexity bound for the existing techniques for solving WFOMC with counting quantifiers.
    The bound is already known to be a polynomial with its degree depending on the number of cells of the input formula.
    We observe that the number of cells depends, in turn, exponentially on the parameters of the counting quantifiers appearing in the formula.
    Second, we propose a new approach to dealing with counting quantifiers, reducing the exponential dependency to a quadratic one, therefore obtaining a tighter upper bound.
    It remains an open question whether the dependency of the polynomial degree on the counting quantifiers can be reduced further, thus making our new bound a \emph{bound to beat}.
\end{abstract}

\section{Introduction}
The weighted first-order model counting (WFOMC) problem was originally proposed in the area of lifted inference as a method to perform probabilistic inference over statistical relational learning (SRL) models on the lifted level \citep{broeck-etal11:knowledge-compilation}.
It allowed, among other things, fast learning of various SRL models \citep{vanHaaren-etal15:lifted-learning}.
However, its applications have ranged beyond (symbolic) probabilistic reasoning since then, including conjecturing recursive formulas in enumerative combinatorics \citep{barvinek-etal21:p-recursions} and discovering combinatorial integer sequences \citep{svatos-etal23:fluffy}.

Regardless of the particular application context, WFOMC is also used to define a class of tractable (referred to as \emph{domain-liftable}) modeling languages, i.e., languages that permit WFOMC computation in time polynomial in the domain size. The logical fragment with two variables was the first to be identified as such \citep{broeck11:domain-liftability-fo2,broeck-etal14:wfomc-skolem}.
Negative result proving that logic with three variables contains $\verb|#|P_1$-complete counting problems followed \citep{beame-etal15:fo3-intractable}, spawning many attempts to recover at least some of the expressive power provided by three and more variables, yet retaining the domain-liftability property.

\citet{kazemi-etal16:sfo2+sru} introduced two new liftable classes, namely $\textbf{S}^2\fotwo$ and $\textbf{S}^2\textbf{RU}$.
\citet{kuusisto-lutz18:function-constraint} extended the two-variable fragment with one function constraint and showed such language to be domain-liftable, too.
That result was later generalized to the two-variable fragment with counting quantifiers, denoted by \ctwo{} \citep{kuzelka21:wfomc-in-c2}.
Moreover, several axioms can be added on top of the counting quantifiers, still retaining domain-liftability as well \citep{bremen-kuzelka21b:tree-axiom,toth-kuzelka23:linear-order,malhotra-serafini23:dag-axiom,malhotra-etal23:graph-axioms}. 

It follows from the domain-liftability of \ctwo{} that WFOMC computation time over formulas from \ctwo{} can be upper-bounded by a polynomial in the domain size.
However, it turns out that the polynomial's degree depends exponentially on the particular counting quantifiers appearing in the formula.
In this paper, we propose a new approach to dealing with counting quantifiers when computing WFOMC over \ctwo, which decreases the degree's dependency on the counting parameters from an exponential to a quadratic one, leading to a super-exponential speedup overall.

First, we review the necessary preliminaries in Section~\ref{sec:background}.
In Section~\ref{sec:old-ub}, we derive an upper bound on the time complexity of computing WFOMC over \ctwo{} for existing techniques.
In Section~\ref{sec:new-ub}, we continue by presenting a new technique for solving the problem, which improves the old upper bound considerably.
Section~\ref{sec:experiments} contains several experimental scenarios to support our theoretical results.
The runtime measurements therein show our improvement for several \ctwo{} sentences and also sentences from one of the domain-liftable \ctwo{} extensions.
Finally, we review related works in Section~\ref{sec:other-works} and conclude in Section~\ref{sec:conclusion}.

\section{Background}
\label{sec:background}
Notation-wise, we adhere to the standard way of writing both algebraic and logical formulas.
For readability purposes, we sometimes use $\cdot$ to denote multiplication, and other times, as is also common, we drop the operation sign.
We use boldface letters such as $\mat{x}$ to denote vectors and for any $n\in\Nat$, $\left[n\right]$ denotes the set $\{1, 2, \ldots, n\}$.

\subsection{First-Order Logic}
We work with a function-free subset of first-order logic (FOL).
A particular language is defined by a finite set of variables \Vars, a finite set of constants (also called the domain) \Dom\; and a finite set of predicates \Preds.
Assuming a predicate $P \in \Preds$ with arity $k$, we also write $P/k \in\Preds$.
An \emph{atom} has the form $P(t_1, t_2, \ldots, t_k)$ where $P/k \in \mathcal{P}$ and $t_i \in \Delta$ $\cup$ $\mathcal{V}$ are called \emph{terms}.
A \textit{literal} is an atom or its negation.
A set of \emph{formulas} can be then defined inductively.
Both atoms and literals are formulas.
Given some formulas, more complex formulas may be formed by combining them using logical connectives or by surrounding them with a universal ($\forall x$) or an existential ($\exists x$) quantifier where $x \in \Vars$.

A variable $x$ in a formula is called \emph{free} if the formula contains no quantification over $x$; otherwise, $x$ is called \emph{bound}.
A formula is called a \emph{sentence} if it contains no free variables.
A formula is called \emph{ground} if it contains no variables.

We use the definition of truth (i.e., semantics) from \emph{Herbrand Logic} \citep{hinrichs-geneserath06:hebrand-logic}.
A language's \emph{Herbrand Base} (HB) is the set of all ground atoms that can be constructed given the sets \Preds\; and \Dom.
A possible world, usually denoted by $\omega$, is any subset of HB.
Atoms contained in a possible world $\omega$ are considered to be true, the rest, i.e., those contained in $\text{HB} \setminus \omega$, are considered false.
The truth value of a more complex formula in a possible world is defined naturally.
A possible world $\omega$ is a \emph{model} of a formula $\varphi$ (denoted by $\omega \models \varphi$) if $\varphi$ is satisfied in $\omega$.

\subsubsection{FOL Fragments}
We often do not work with the entire language of FOL but rather some of its fragments. 
The simplest fragment we work with is the logic with at most two variables, denoted as \fotwo.
As the name suggests, any formula from \fotwo{} contains at most two logical variables.

The second fragment that we consider is the two-variable fragment with cardinality constraints.
We keep the restriction of at most two variables, but we recover some of the expressive power of logics with more variables by introducing an additional syntactic construct, namely a \emph{cardinality constraint} (CC).
We denote such language as \ccs.
CCs have the form $(|P| = k)$, where $P\in\Preds$ and $k\in\Nat$. 
Intuitively, such a cardinality constraint is satisfied in $\omega$ if there are exactly $k$ ground atoms with predicate $P$ therein.

Finally, the fragment that we pay most of our attention to is the two-variable logic with counting quantifiers, denoted by \ctwo{}.
A \emph{counting quantifier} is a generalization of the traditional existential quantifier.
For a variable $x \in \mathcal{V}$, a quantifier of the form $(\exists^{= k}x)$, where $k \in \Nat$, is allowed by our syntax.
Satisfaction of formulas with counting quantifiers is defined naturally, similar to the satisfaction of cardinality constraints.
For example, $\exists^{=k}x\;\psi(x)$ is satisfied in $\omega$ if there are exactly $k$ constants $\{A_1, A_2, \ldots, A_k\} \subseteq \Dom$ such that $\forall i \in [k]: \omega \models \psi(A_i)$.%
\footnote{Both CCs and counting quantifiers can also be defined to allow inequality operators. However, in the techniques that we study, inequalities are handled by transforming them to equalities \citep{kuzelka21:wfomc-in-c2}. After such transformation, one must repeatedly solve the case with equalities only. Hence, for brevity, we present only that one case.}

Note the distinction between cardinality constraints and counting quantifiers.
While the counting formula $\exists^{=k} x\; R(x)$ can be equivalently written using a single cardinality constraint $(|R|=k)$, the formula $\forall x \exists^{=k} y\; R(x, y)$ no longer permits such a simple transformation.

\subsection{Weighted First-Order Model Counting}
Let us start by formally defining the task that we study.
\begin{definition}{(Weighted First-Order Model Counting)}
    Let $\varphi$ be a formula over a fixed logical language $\mathcal{L}$.
    Let $(\w)$ be a pair of weight functions assigning a weight to each predicate in $\mathcal{L}$.
    Let $n$ be a natural number.
    Denote $\mathcal{MOD}(\varphi, n)$ the set of all models of $\varphi$ on a domain on size $n$.
    We define
\begin{align*}
    \textup{WFOMC}(\varphi, n, \w) = \sum_{\omega \in \mathcal{MOD}(\varphi, n)} \prod_{l \in \omega} w(\mathsf{pred}(l)) \prod_{l \in \textup{HB}\setminus\omega} \overline{w}(\mathsf{pred}(l)),
\end{align*}
where the function $\mathsf{pred}$ maps each literal to its predicate.
\end{definition}

In general, WFOMC is a difficult problem.
There exists a sentence with three logical variables, for which the computation is $\verb|#|P_1$-complete with respect to $n$ \citep{beame-etal15:fo3-intractable}.
However, for some logical languages, WFOMC can be computed in time polynomial in the domain size, which is also referred to as the language being \emph{domain-liftable}.

\begin{definition}{(Domain-Liftability)}
    Consider a logical language $\mathcal{L}$.
    The language is said to be \emph{domain-liftable} if and only if for any fixed $\varphi \in \mathcal{L}$ and any $n\in\Nat$, it holds that WFOMC($\varphi, n, \w$) can be computed in time polynomial in $n$.
\end{definition}
Thus, when we study domain-liftable languages, we focus on the time complexity with respect to the domain size, which is assumed to be the only varying input.
That is also called \emph{data complexity} of WFOMC in other literature \citep{beame-etal15:fo3-intractable}.

In the remainder of this text, it is often the case that we reduce one WFOMC computation (instance) to another over a different (larger) formula, possibly with fresh (not appearing in the original vocabulary) predicates.
Even then, the assumed input language remains fixed in the context of the new WFOMC instance.
Moreover, if we do not specify some of the weights for the new predicates, they are assumed to be equal to 1.

The first language proved to be domain-liftable was the language of \ufotwo, i.e., universally quantified \fotwo{} \citep{broeck11:domain-liftability-fo2}, which was later generalized to the entire \fotwo{} fragment \citep{broeck-etal14:wfomc-skolem}.
The original proof, making use of first-order knowledge compilation \citep{broeck-etal11:knowledge-compilation}, was later reformulated using \emph{1-types} (which we call \emph{cells}) from logic literature \citep{beame-etal15:fo3-intractable}.

\begin{definition}{(Valid Cell)}
A cell of a first-order formula $\varphi$ is a maximal consistent conjunction of literals formed from atoms in $\varphi$ using only a single variable.
Moreover, a cell $C(x)$ of a first-order formula $\varphi(x, y)$ is valid if and only if $\varphi(t, t) \wedge C(t)$ is satisfiable for any constant $t\in\Dom$.
% Moreover, a cell $C(x)$ of a first-order formula $\varphi$ is valid, if it is a model of $\varphi$, where all variables were substituted by $x$.
\end{definition}

\begin{example}
Consider $\varphi = G(x) \vee H(x, y)$.

\noindent Then $\varphi$ has four cells:
\begin{align*}
    C_1(x) &= \neg G(x) \wedge \neg H(x, x), \\
    C_2(x) &= \neg G(x) \wedge H(x, x), \\
    C_3(x) &= G(x) \wedge \neg H(x, x), \\
    C_4(x) &= G(x) \wedge H(x, x).
\end{align*}

However, only cells $C_2, C_3$ and $C_4$ are valid.
\end{example}

Another domain-liftable language is \ccs.
WFOMC over \ccs{} is solved by repeated calls to an oracle solving WFOMC over \fotwo.
The number of required calls depends on the arities of predicates that appear in cardinality constraints.
Consider $\Upsilon=\bigwedge_{i=1}^m\left(|R_i|=k_i\right)$ and let us denote %
$\alpha(\Upsilon) = \sum_{i=1}^m {\left(arity(R_i)+1\right)}.$
For an \ccs{} formula $\Gamma = \Phi \wedge \Upsilon$ such that $\Phi\in\fotwo{}$ and $\Upsilon$ same as above, we will require $n^\alpha$ calls to the oracle \citep{kuzelka21:wfomc-in-c2}.

Having defined WFOMC, valid cells, and function $\alpha$, we can state known upper bounds for computing WFOMC over \fotwo{} and \ccs{}.
We concentrate those results into a single theorem.
\begin{theorem}
    \label{th:fo+cc-tractable}
    Let $\Gamma$ be an $\textup{\fotwo{}}$ sentence with $p$ valid cells.
    Let $\Upsilon=\bigwedge_{i=1}^m\left(|R_i|=k_i\right)$ be $m$ cardinality constraints, where $R_1, R_2, \ldots, R_m$ are some predicates from the language of $\Gamma$ and each $k_i\in\Nat$.
    % Hence, $\Gamma\wedge\Upsilon\in\textup{\ccs}$.
    For any $n\in\Nat$ and any fixed weights $(\w)$, \textup{WFOMC}($\Gamma, n, \w$) can be computed in time $\bigO{n^{p+1}}$, and
    \textup{WFOMC}($\Gamma \wedge \Upsilon, n, \w$) can be computed in time $\bigO{n^{\alpha(\Upsilon)}\cdot n^{p+1}}$.
    Since both the bounds are polynomials in $n$, both languages are domain-liftable.
\end{theorem}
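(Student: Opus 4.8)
The plan is to establish the two bounds separately, the second building on the first. For the $\fotwo$ bound, I would recall the cell-based algorithm: fix a domain $\Dom = \{A_1, \ldots, A_n\}$ and, for each way of assigning cells to the $n$ domain elements, count the weighted models consistent with that assignment. Two elements in the same pair of cells contribute the same (weighted) count of ways to populate the binary atoms between them, so the whole computation factorizes over \emph{cell configurations}, i.e., vectors $(n_1, \ldots, n_p)$ with $\sum_i n_i = n$ recording how many elements sit in each of the $p$ valid cells. The number of such configurations is $\binom{n + p - 1}{p - 1} = \bigO{n^{p-1}}$. For each configuration, the weighted model count is a closed-form product of: a multinomial coefficient $\binom{n}{n_1, \ldots, n_p}$; per-cell ``diagonal'' weight factors raised to $n_i$; and ``off-diagonal'' factors of the form $r_{ij}^{n_i n_j}$ (pairs in distinct cells) and $s_i^{\binom{n_i}{2}}$ (pairs within cell $i$), where each $r_{ij}, s_i$ is itself a small fixed sum over the consistent $2$-type extensions. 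These base quantities $r_{ij}, s_i$ do not depend on $n$ and are computed once in constant time (w.r.t.\ $n$). Evaluating one configuration therefore costs $\bigO{\mathrm{poly}(p) \cdot \log n}$ arithmetic, and summing over $\bigO{n^{p-1}}$ configurations gives $\bigO{n^{p+1}}$ after accounting conservatively for arithmetic on numbers of bit-length $\bigO{n \log n}$; this yields the first bound.

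For the second bound, I would invoke the reduction already cited in the excerpt: WFOMC over $\Gamma \wedge \Upsilon$ reduces to $n^{\alpha(\Upsilon)}$ calls to a $\fotwo$ oracle. The standard mechanism is to introduce, for each cardinality constraint $(|R_i| = k_i)$, a fresh polynomial indeterminate (or root of unity) weight on $R_i$, so that an ordinary $\fotwo$ WFOMC computation returns a polynomial whose coefficient on the appropriate monomial is the desired constrained count; extracting that coefficient requires evaluating the polynomial at sufficiently many points. Since $R_i$ has $n^{arity(R_i)}$ ground atoms, the relevant polynomial has degree at most $n^{arity(R_i)}$ in the $i$-th indeterminate, so $n^{arity(R_i)} + 1$ evaluation points suffice in that variable; taking the product over $i = 1, \ldots, m$ and absorbing the ``$+1$''s into the exponent gives $\bigO{n^{\alpha(\Upsilon)}}$ oracle calls, with $\alpha(\Upsilon) = \sum_{i=1}^m (arity(R_i) + 1)$. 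Each call runs in $\bigO{n^{p+1}}$ by the first part (the fresh predicates do not change the cell count of $\Gamma$ in a way that matters here — or, if one prefers, $p$ is taken to already count cells of the extended formula), so the total is $\bigO{n^{\alpha(\Upsilon)} \cdot n^{p+1}}$. Domain-liftability of both languages is then immediate, since for fixed $\Gamma$ and $\Upsilon$ the quantities $p$ and $\alpha(\Upsilon)$ are constants.

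The main obstacle is the first part: justifying that the weighted model count genuinely factorizes into the claimed closed form over cell configurations, and in particular that the per-pair base quantities $r_{ij}, s_i$ are well-defined and independent of $n$. This requires the observation that, once each domain element's cell is fixed, the truth values of the unary atoms and the reflexive binary atoms are pinned down, the remaining binary atoms between any two \emph{distinct} elements can be chosen independently of all other pairs, and the set of admissible choices for a pair depends only on the ordered pair of cells — here one uses that $\Gamma \in \fotwo$ so that satisfaction decomposes into conditions on $1$-types and $2$-types. For full $\fotwo$ (as opposed to $\ufotwo$) one additionally needs the Skolemization step of \citet{broeck-etal14:wfomc-skolem} to remove existential quantifiers at the cost of extra predicates with possibly negative weights, which preserves the polynomial-time bound; I would either cite this directly or restrict the detailed argument to $\ufotwo$ and note the reduction. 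The bookkeeping on bit-lengths of the large integers/rationals involved is routine and I would dispatch it with a coarse polynomial bound rather than a tight analysis.
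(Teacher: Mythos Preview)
Your proposal is correct and matches the paper's approach: the paper does not give a self-contained proof of this theorem but simply cites \citet{beame-etal15:fo3-intractable} for the cell-configuration bound and Propositions~4 and~5 of \citet{kuzelka21:wfomc-in-c2} for the interpolation-based reduction of cardinality constraints to $n^{\alpha(\Upsilon)}$ oracle calls, which is exactly the content you sketch. One small framing mismatch worth noting: the paper explicitly states that Theorem~\ref{th:fo+cc-tractable} assumes constant-time arithmetic and defers all bit-complexity considerations to the appendix, so the exponent $p+1$ is inherited as a (possibly loose) bound from the cited work rather than justified via your bit-length accounting; under the paper's convention you should not mix bit complexity into the derivation of the stated bound.
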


The first bound follows from the cell-based domain-liftability proof \citep{beame-etal15:fo3-intractable}.%
{\footnote{The state-of-the-art algorithm for computing WFOMC over \fotwo{}, i.e., FastWFOMC, improves the bound considerably in some cases \citep{bremen-kuzelka21:fast-wfomc}. 
However, as the improvements are not guaranteed in the general case, we work with this bound as an effective worst case.
Moreover, as we demonstrate in the experimental section, our new encoding described further in the text improves the FastWFOMC runtime for \ctwo{} sentences reduced to \ccs{} as well.}}
The second bound follows from Propositions $4$ and $5$ in \citet{kuzelka21:wfomc-in-c2}.

It is important to note that Theorem \ref{th:fo+cc-tractable} assumes all mathematical operations to take constant time.
Hence, the theorem omits factors relating to \emph{bit complexity}, which \citet{kuzelka21:wfomc-in-c2} also addresses.
However, those factors remain the same in all transformations that we consider.
Therefore, we omit them for improved readability.
For a more detailed discussion on bit complexity, see the appendix.

\newpage
\subsection{Solving WFOMC with Counting Quantifiers}
Yet another domain-liftable language is the language of \ctwo.
WFOMC over \ctwo{} is solved by a reduction to WFOMC over \ccs{} \citep{kuzelka21:wfomc-in-c2}.
The reduction consists of several steps that we review in the lemmas below.
The lemmas concentrate results from other publications. Hence, we make appropriate references to each of them.

First, we review a specialized \emph{Skolemization} procedure for WFOMC \citep{broeck-etal14:wfomc-skolem}, which turns an arbitrary \fotwo{} sentence into a \ufotwo{} sentence.
Since all algorithms for solving WFOMC over \fotwo{} expect a universally quantified sentence as an input, this is a paramount procedure.
Compared to the source publication, we present a slightly modified Skolemization procedure.
The modification is due to \citet{beame-etal15:fo3-intractable}.
\begin{lemma}
\label{lem:skolem}
    Let $\Gamma=Q_1x_1Q_2x_2\ldots Q_kx_k \Phi(x_1,\ldots,x_k)$ be a~first-order sentence in prenex normal form with each quantifier $Q_i$ being either $\forall$ or $\exists$ and $\Phi$ being a quantifier-free formula.
    Denote by $j$ the first position of $\exists$.
    Let $\mat{x} = (x_1, \ldots, x_{j-1})$ and $\varphi(\mat{x}, x_j) = Q_{j+1} x_{j+1} \ldots Q_{k} x_k\Phi$.
    Set
    $$\Gamma' = \forall \mat{x} \left(\left(\exists x_j \varphi(\mat{x}, x_j)\right) \Rightarrow A(\mat{x})\right),$$
    where $A$ is a fresh predicate. 
    Then, for any $n\in\Nat$ and any weights $(\w)$ with $w(A) = 1$ and $\overline{w}(A)=-1$, it holds that
    $$\textup{WFOMC}(\Gamma, n, \w) = \textup{WFOMC}(\Gamma', n, \w).$$
\end{lemma}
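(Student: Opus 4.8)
The plan is to leverage the fresh predicate $A$ together with its weighting $w(A)=1$, $\overline{w}(A)=-1$ so that every possible world in which the existential fails cancels against a twin world. First I would rewrite the two sentences in a convenient form: since $Q_1=\dots=Q_{j-1}=\forall$ and $Q_j=\exists$, we have $\Gamma=\forall\mat{x}\,\exists x_j\,\varphi(\mat{x},x_j)$ and $\Gamma'=\forall\mat{x}\,\big(\exists x_j\,\varphi(\mat{x},x_j)\Rightarrow A(\mat{x})\big)$, where $A$ has arity $j-1$ and occurs neither in $\Gamma$ nor in $\varphi$.

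Next, fix a domain of size $n$. Because $A$ is fresh, every possible world $\omega'$ over the vocabulary of $\Gamma'$ splits uniquely into its restriction $\omega$ to the predicates of $\Gamma$ and an $A$-part $\omega_A\subseteq\{A(\mat{a}):\mat{a}\in\Dom^{\,j-1}\}$. Since $A$ does not appear in $\varphi$, the truth value of $\varphi(\mat{a},a_j)$ in $\omega'$ is determined by $\omega$ alone, and the total weight of $\omega'$ factors as $W(\omega)\cdot\prod_{\mat{a}\in\Dom^{\,j-1}} u_{\omega_A}(\mat{a})$, where $W(\omega)$ is the weight of $\omega$ as a world over the original vocabulary and $u_{\omega_A}(\mat{a})$ equals $w(A)=1$ if $A(\mat{a})\in\omega_A$ and $\overline w(A)=-1$ otherwise.

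Then I would compute $\textup{WFOMC}(\Gamma',n,\w)$ by summing over $\omega$ and, for each $\omega$, over the admissible $\omega_A$. The condition $\omega'\models\Gamma'$ says precisely: for each $\mat{a}\in\Dom^{\,j-1}$, if $\omega\models\exists x_j\,\varphi(\mat{a},x_j)$ then $A(\mat{a})$ is true. So the tuples $\mat{a}$ partition into those where this antecedent holds --- here $A(\mat{a})$ is forced true and contributes the factor $w(A)=1$ --- and those where it fails --- here $A(\mat{a})$ is free and contributes $w(A)+\overline w(A)=1+(-1)=0$. Hence $\sum_{\omega_A}\prod_{\mat{a}}u_{\omega_A}(\mat{a})$ equals $1$ when $\omega\models\forall\mat{x}\,\exists x_j\,\varphi(\mat{x},x_j)=\Gamma$ and $0$ otherwise; multiplying by $W(\omega)$ and summing over all $\omega$ yields $\sum_{\omega\models\Gamma}W(\omega)=\textup{WFOMC}(\Gamma,n,\w)$, which is the claim.

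I expect the only delicate point to be the bookkeeping itself: verifying that the weight of $\omega'$ genuinely factors into an $\omega$-part and an $A$-part that ranges over all $n^{\,j-1}$ ground $A$-atoms, and that $\omega\models\exists x_j\,\varphi(\mat{a},x_j)$ is well defined even though $\varphi$ may still carry the quantifiers $Q_{j+1},\dots,Q_k$ --- which it is, since $\omega$ pins down the truth value of every ground atom over the original predicates. It is also worth noting that the specific weights are essential: $\overline w(A)=-w(A)$ is what makes the "bad" tuples cancel, and $w(A)=1$ is what makes the surviving product exactly $1$ rather than an $\omega$-dependent power. Everything else is the one-line identity $1+(-1)=0$.
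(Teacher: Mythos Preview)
Your argument is correct and is precisely the standard cancellation proof of the WFOMC Skolemization trick. The paper itself does not prove this lemma; it merely states it as a known result, attributing it to \citet{broeck-etal14:wfomc-skolem} with the modification due to \citet{beame-etal15:fo3-intractable}, so there is no in-paper proof to compare against. Your write-up matches the argument in those sources: factor each world over the extended vocabulary into its original part and its $A$-part, observe that tuples $\mat{a}$ with a false antecedent contribute $w(A)+\overline{w}(A)=0$ while those with a true antecedent contribute $w(A)=1$, and conclude that only worlds satisfying $\Gamma$ survive with unchanged weight.
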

Lemma \ref{lem:skolem} suggests how to eliminate one existential quantifier.
By transforming the implication inside $\Gamma'$ into a disjunction, we obtain a universally quantified sentence.
Repeating the procedure for each sentence in the input formula will eventually lead to one universally quantified sentence.

Next, we present a technique to eliminate negation of a subformula without distributing it inside.
The procedure builds on ideas from the Skolemization procedure, and it was presented as Lemma 3.4 in \citet{beame-etal15:fo3-intractable}.
It was also described as a \emph{relaxed Tseitin transform} in \citet{meert-etal16:relaxed-tseitin}.
\begin{lemma}
\label{lem:relaxed-tseitin}
    Let $\neg \psi(\mat{x})$ be a subformula of a first-order logic sentence $\Gamma$ with $k$ free variables $\mat{x} = (x_1, \ldots, x_k)$.
    Let $C/k$ and $D/k$ be two fresh predicates with $w(C)=\overline{w}(C)=w(D)=1$ and $\overline{w}(D)=-1$.
    Denote $\Gamma'$ the formula obtained from $\Gamma$ by replacing the subformula $\neg \psi(\mat{x})$ with $C(\mat{x})$.
    Let $$\Upsilon = \left(\forall \mat{x}\; C(\mat{x}) \vee D(\mat{x})\right) \wedge \left(\forall \mat{x}\; C(\mat{x}) \vee \psi(\mat{x})\right) \wedge \left(\forall \mat{x}\; D(\mat{x}) \vee \psi(\mat{x})\right).$$
    Then, it holds that
    $$\textup{WFOMC}(\Gamma, n, \w) = \textup{WFOMC}(\Gamma' \wedge \Upsilon, n, \w).$$
\end{lemma}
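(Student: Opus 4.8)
The plan is to fix the interpretation of the original predicates (those occurring in $\Gamma$) and show that the weighted summation over the interpretations of the fresh predicates $C$ and $D$, restricted by $\Gamma' \wedge \Upsilon$, collapses to the indicator of whether the fixed interpretation models $\Gamma$. Write an arbitrary possible world of the extended language as a triple $(\omega_0, \omega_C, \omega_D)$, according to whether an atom uses an original predicate, $C$, or $D$; the weight then factorises as $W_0(\omega_0) \cdot W_C(\omega_C) \cdot W_D(\omega_D)$. Since $w(C)=\overline{w}(C)=1$ we have $W_C(\omega_C)=1$ for every $\omega_C$, and since $\Gamma$ mentions neither $C$ nor $D$, the left-hand side equals $\sum_{\omega_0 \models \Gamma} W_0(\omega_0)$. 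As $\Gamma'$ does not involve $D$ either, it therefore suffices to prove that for every fixed $\omega_0$,
$$\sum_{\substack{\omega_C,\,\omega_D \\ (\omega_0,\omega_C,\omega_D)\models \Gamma'\wedge\Upsilon}} W_D(\omega_D) \;=\; \bigl[\omega_0 \models \Gamma\bigr], \qquad W_D(\omega_D) = (-1)^{\#\{\text{false }D\text{-atoms}\}}.$$

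Next I would analyse $\Upsilon$ tuple by tuple. Fix $\omega_0$; then $\psi(\mat{a})$ has a determined truth value $p_{\mat{a}}\in\{0,1\}$ for every $\mat{a}\in\Dom^k$ (as $\psi$ contains only original predicates), and the three conjuncts of $\Upsilon$ amount, for each $\mat{a}$, to the clauses $C(\mat{a})\vee D(\mat{a})$, $C(\mat{a})\vee p_{\mat{a}}$, $D(\mat{a})\vee p_{\mat{a}}$, which are independent across distinct tuples. Since also $W_D$ factorises over tuples, the inner sum above splits as $\sum_{\omega_C}\bigl[(\omega_0,\omega_C)\models\Gamma'\bigr]\cdot\prod_{\mat{a}} f(C(\mat{a}),p_{\mat{a}})$, where $f(c,p)$ is the local contribution obtained by summing $W_D$ over the allowed values of $D(\mat{a})$. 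A short case check gives: if $p_{\mat{a}}=0$ the clauses force $C(\mat{a})=D(\mat{a})=1$, so $f(1,0)=1$ and $f(0,0)=0$; if $p_{\mat{a}}=1$ the clauses reduce to $C(\mat{a})\vee D(\mat{a})$, so $C(\mat{a})=0$ forces $D(\mat{a})=1$ and yields $f(0,1)=1$, while $C(\mat{a})=1$ leaves $D(\mat{a})$ free and yields $f(1,1)=(+1)+(-1)=0$. Hence $f(c,p)=1$ exactly when $c=\neg p$ and $0$ otherwise, so the product over $\mat{a}$ annihilates every $\omega_C$ except the unique one, call it $\omega_C^{\star}$, with $C(\mat{a})\equiv\neg\psi(\mat{a})$, for which it equals $1$.

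Finally I would close with the substitution argument. For $\omega_C^{\star}$, the atom $C(\mat{a})$ and the formula $\neg\psi(\mat{a})$ have the same truth value for every tuple of domain elements, so a routine induction on the structure of $\Gamma$ shows that $\Gamma'$ — obtained from $\Gamma$ by replacing the subformula $\neg\psi(\mat{x})$ with $C(\mat{x})$ — is satisfied by $(\omega_0,\omega_C^{\star},\omega_D)$ iff $\Gamma$ is satisfied by $\omega_0$. Substituting back, the inner sum equals $\bigl[\omega_0\models\Gamma\bigr]$, and multiplying by $W_0(\omega_0)$ and summing over all $\omega_0$ gives $\textup{WFOMC}(\Gamma'\wedge\Upsilon, n, \w) = \textup{WFOMC}(\Gamma, n, \w)$.

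The main obstacle is the local computation in the second step: one must verify that the negatively weighted predicate $D$ implements exactly the right inclusion–exclusion gadget, killing the assignments in which $C(\mat{a})$ is permitted by the clause $C(\mat{a})\vee\psi(\mat{a})$ yet disagrees with $\neg\psi(\mat{a})$ (the case $p_{\mat{a}}=1$, $C(\mat{a})=1$, where $+1$ and $-1$ cancel), while leaving the faithful assignment untouched with net weight $1$. This is the same cancellation mechanism underlying the $\overline{w}(A)=-1$ trick in the Skolemization of Lemma~\ref{lem:skolem}; the only care needed is that the per-tuple factors genuinely multiply out to a single surviving $\omega_C$ before the substitution step is applied.
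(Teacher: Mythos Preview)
Your proof is correct. The paper does not actually prove this lemma; it is presented as a background result imported from \citet{beame-etal15:fo3-intractable} (their Lemma~3.4) and \citet{meert-etal16:relaxed-tseitin}, so there is no in-paper argument to compare against. Your argument---fixing the interpretation of the original predicates, factorising the weighted sum over $C$ and $D$ tuple by tuple, and checking that the $\overline{w}(D)=-1$ gadget annihilates every assignment except the one with $C(\mat{a})\equiv\neg\psi(\mat{a})$---is exactly the standard cancellation proof for this kind of WFOMC-preserving transformation, and it is the same mechanism that underlies Lemma~\ref{lem:skolem} as you note. The per-tuple case analysis and the final substitution step are both sound; nothing is missing.
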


Finally, we move to dealing with counting quantifiers.%
\footnote{For brevity, the counting subformula in Lemmas \ref{lem:existsK}, \ref{lem:forall-existsK} and \ref{lem:A-or-existsK} contains only a single atom on a predicate $R$. That does not impede generality as the atom may represent a general subformula $\varphi$ equated to the atom using an additional universally quantified sentence, i.e., $\forall x \; R(x) \Leftrightarrow \varphi(x)$ or $\forall x \forall y\; R(x, y) \Leftrightarrow \varphi(x, y)$.}
We start with a single counting quantifier.
The approach follows from Lemma 3 in \citet{kuzelka21:wfomc-in-c2}.
\begin{lemma}
    \label{lem:existsK}
    Let $\Gamma$ be a first-order logic sentence.
    Let $\Psi$ be a \ctwo{} sentence such that $\Psi = \exists^{=k} x\; R(x)$.
    Let $\Psi' = (|R|=k)$ be a cardinality constraint.
    Then, it holds that
    \begin{align*}
        \textup{WFOMC}(\Gamma \wedge \Psi, n, \w) = \textup{WFOMC}(\Gamma \wedge \Psi', n, \w).
    \end{align*}
\end{lemma}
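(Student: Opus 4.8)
The plan is to show that the two WFOMC instances range over exactly the same set of models (with the same weights), so their WFOMC values coincide term by term. The key observation is that the subformula $\Psi = \exists^{=k} x\; R(x)$ is a \emph{sentence} whose only free predicate is the unary $R$, and whose satisfaction in a world $\omega$ depends only on how many ground atoms of the form $R(A)$, $A\in\Dom$, are true in $\omega$. Specifically, $\omega \models \exists^{=k} x\; R(x)$ precisely when there exist exactly $k$ distinct constants $A_1,\dots,A_k$ with $\omega \models R(A_i)$ for each $i$ and no further constant $B$ with $\omega\models R(B)$; equivalently, $|\{A\in\Dom : R(A)\in\omega\}| = k$. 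But that is verbatim the satisfaction condition for the cardinality constraint $\Psi' = (|R|=k)$.

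First I would unfold the definition of WFOMC for both sides. Since the domain $\Dom$, the Herbrand base $\mathrm{HB}$, and the weight functions $(\w)$ are identical in both instances (the only change is replacing the conjunct $\Psi$ by $\Psi'$, and no new predicates are introduced), the summand $\prod_{l\in\omega} w(\mathsf{pred}(l)) \prod_{l\in\mathrm{HB}\setminus\omega}\overline{w}(\mathsf{pred}(l))$ is the same expression for any fixed world $\omega$. Hence it suffices to prove the two model sets are equal, i.e. $\mathcal{MOD}(\Gamma\wedge\Psi, n) = \mathcal{MOD}(\Gamma\wedge\Psi', n)$. Because $\omega\models \Gamma\wedge\Psi$ iff $\omega\models\Gamma$ and $\omega\models\Psi$, and similarly for $\Psi'$, this in turn reduces to: for every world $\omega$ over $\mathrm{HB}$, $\omega\models\Psi \iff \omega\models\Psi'$.

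Then I would verify that last equivalence directly from the semantics recalled in the Background section. For the forward direction, if $\omega\models\exists^{=k}x\;R(x)$ then by definition there are exactly $k$ constants satisfying $R$, so the number of true ground atoms with predicate $R$ is exactly $k$, which is the defining condition of $(|R|=k)$; the converse is the same reasoning read backwards. The footnote in the excerpt already licenses restricting attention to a single atom $R(x)$ rather than a general subformula, since a general $\varphi(x)$ would first be equated to a fresh atom via $\forall x\, R(x)\Leftrightarrow\varphi(x)$, so no generality is lost. Summing the identical weight products over the identical model sets yields $\textup{WFOMC}(\Gamma\wedge\Psi,n,\w) = \textup{WFOMC}(\Gamma\wedge\Psi',n,\w)$.

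I do not anticipate a genuine obstacle here — the lemma is essentially a restatement of definitions — but the one point that needs a little care is making explicit that the equivalence $\omega\models\Psi\iff\omega\models\Psi'$ holds \emph{worldwise}, so that it can be conjoined with the arbitrary sentence $\Gamma$ without interaction; and, if one wants to be fully rigorous, noting that "exactly $k$ constants $A_i$" in the counting-quantifier semantics does force the $A_i$ to be pairwise distinct and maximal, which is exactly what pins the count of true $R$-atoms to $k$ rather than to $\ge k$.
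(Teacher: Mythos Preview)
Your proposal is correct and is exactly the natural definitional argument: the counting-quantifier sentence $\exists^{=k}x\;R(x)$ and the cardinality constraint $(|R|=k)$ are satisfied in precisely the same possible worlds, so the model sets and hence the WFOMC values coincide. The paper itself does not give an explicit proof of this lemma, instead citing it as Lemma~3 of \citet{kuzelka21:wfomc-in-c2}; your argument is the straightforward unfolding one would expect there as well.
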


Next, we deal with a specific case of a formula quantified as $\forall\exists^{=k}$.
The following lemma was Lemma 2 in \citet{kuzelka21:wfomc-in-c2}
\begin{lemma}
\label{lem:forall-existsK}
    Let $\Gamma$ be a first-order logic sentence.
    Let $\Psi$ be a \ctwo{} sentence such that $\Psi = \forall x \exists^{=k} y\; R(x, y)$.
    Let $\Upsilon$ be an \textup{\ccs{}} sentence defined as
    \begin{align*}
        \Upsilon &=(|R|=k\cdot n) \wedge (\forall x \forall y\; R(x, y) \Leftrightarrow \bigvee_{i=1}^k f_i(x, y))\\
        &\wedge \bigwedge_{1\leq i<j\leq k} (\forall x \forall y\; f_i(x, y) \Rightarrow \neg f_j(x, y))\\
        &\wedge \bigwedge_{i=1}^k (\forall x \exists y\; f_i(x, y)),
    \end{align*}
    where $f_i/2$ are fresh predicates not appearing anywhere else.
    Then, it holds that
\begin{equation*}
    \textup{WFOMC}(\Gamma \wedge \Psi, n, \w) = \frac{1}{(k!)^n}\textup{WFOMC}(\Gamma \wedge \Upsilon, n, \w).
\end{equation*}%
\end{lemma}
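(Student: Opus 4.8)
The plan is to prove the identity by a model-theoretic fibration argument: I will exhibit the ``forget the fresh atoms'' map from models of $\Gamma\wedge\Upsilon$ to models of $\Gamma\wedge\Psi$, show it is well defined and surjective, and show that every fibre has size exactly $(k!)^n$ with all worlds in a fibre carrying the same weight. Summing the weights then produces the factor $(k!)^n$, and dividing by it gives the claim; this mirrors the argument behind Lemma~2 of \citet{kuzelka21:wfomc-in-c2}.

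First I would analyze the structure that $\Upsilon$ forces. Fix a model $\omega'\models\Gamma\wedge\Upsilon$ on a domain of size $n$. The conjuncts $\forall x\exists y\, f_i(x,y)$ give $|f_i|\ge n$ for each $i\in[k]$; the pairwise-disjointness conjuncts make $f_1,\dots,f_k$ disjoint as sets of pairs; and $\forall x\forall y\,(R(x,y)\Leftrightarrow\bigvee_i f_i(x,y))$ identifies $R$ with their union, so $|R|=\sum_{i=1}^k|f_i|\ge kn$. The cardinality constraint $|R|=kn$ therefore forces $|f_i|=n$ for every $i$, and combined with $\forall x\exists y\, f_i(x,y)$ this makes each $f_i$ the graph of a total function $\Dom\to\Dom$. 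Disjointness then says that for every $a\in\Dom$ the $k$ values picked by $f_1(a,\cdot),\dots,f_k(a,\cdot)$ are pairwise distinct and that their union is $\{b:\omega'\models R(a,b)\}$, a set of size exactly $k$. Hence the projection $\omega$ of $\omega'$ to the original vocabulary satisfies $\forall x\exists^{=k}y\, R(x,y)$, and since neither $\Gamma$ nor $\Psi$ mentions the fresh $f_i$, we get $\omega\models\Gamma\wedge\Psi$.

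Next I would count the fibre and check weights. Given $\omega\models\Gamma\wedge\Psi$, write $S_a=\{b:\omega\models R(a,b)\}$, which has size $k$ for each $a$. Extending $\omega$ to a model of $\Upsilon$ amounts to choosing, for each $a$, which element of $S_a$ gets label $i$, i.e.\ an enumeration of $S_a$; conversely any such family of enumerations defines interpretations of $f_1,\dots,f_k$ satisfying every conjunct of $\Upsilon$ (equivalence with $R$, pairwise disjointness, the $\forall x\exists y$ conjuncts, and $|R|=kn$ are all immediate), distinct families give distinct worlds, and $\Gamma$ is unaffected. So the fibre over $\omega$ is in bijection with $\prod_{a\in\Dom}(\text{enumerations of }S_a)$, of size $(k!)^n$. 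Since $w(f_i)=\overline{w}(f_i)=1$ (weights of fresh predicates default to $1$), adjoining the $f_i$-atoms multiplies the weight of a world by $1$, so every world in the fibre has the same weight as $\omega$. Summing over all $\omega\models\Gamma\wedge\Psi$ yields $\textup{WFOMC}(\Gamma\wedge\Upsilon,n,\w)=(k!)^n\,\textup{WFOMC}(\Gamma\wedge\Psi,n,\w)$.

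I expect the main obstacle to be the first step: pinning down exactly what $\Upsilon$ imposes. It is the conjunction of $|f_i|\ge n$, disjointness, $|R|=\sum_i|f_i|$, and $|R|=kn$ that simultaneously forces each $f_i$ to be a total function and forces every $R$-neighbourhood to have size precisely $k$. Without the $|R|=kn$ conjunct the $f_i$ could be multivalued and the neighbourhoods could be larger than $k$, so this tight accounting is where the correctness really lives; the fibre count and the weight bookkeeping afterwards are routine.
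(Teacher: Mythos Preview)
Your argument is correct and complete. The paper does not supply its own proof of this lemma; it simply records that the statement is Lemma~2 of \citet{kuzelka21:wfomc-in-c2}, and your fibration/overcounting argument is exactly the standard one underlying that result (as you yourself note).
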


Finally, we present a case that helps deal with an arbitrary counting formula.
It was originally presented as Lemma 4 in \citet{kuzelka21:wfomc-in-c2}.
\begin{lemma}
\label{lem:A-or-existsK}
    Let $\Gamma$ be a first-order logic sentence.
    Let $\Psi$ be a \textup{\ctwo{}} sentence such that $\Psi = \forall x\; A(x) \vee (\exists^{=k} y\; R(x, y))$.
    Define $\Upsilon = \Upsilon_1 \wedge \Upsilon_2 \wedge \Upsilon_3 \wedge \Upsilon_4$ such that
    \begin{align*}
        \Upsilon_1 &= \forall x \forall y\; \neg A(x) \Rightarrow (R(x, y) \Leftrightarrow B^R(x, y))\\
        \Upsilon_2 &= \forall x \forall y\; (A(x) \wedge B^R(x, y)) \Rightarrow U^R(y))\\
        \Upsilon_3 &= (|U^R|=k)\\
        \Upsilon_4 &= \forall x \exists^{=k} y\; B^R(x, y),
    \end{align*}
    where $U^R/1$ and $B^R/2$ are fresh predicates not appearing anywhere else.
    Then, it holds that
    $$\textup{WFOMC}(\Gamma \wedge \Psi, n, \w) = \frac{1}{\binom{n}{k}}\textup{WFOMC}(\Gamma \wedge \Upsilon, n, \w).$$
\end{lemma}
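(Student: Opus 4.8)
The plan is a weight-preserving fiber-counting argument. I would show that ``forgetting'' the two fresh predicates $B^R$ and $U^R$ sends every model of $\Gamma \wedge \Upsilon$ to a model of $\Gamma \wedge \Psi$ without changing its weight, and that each such model of $\Gamma \wedge \Psi$ is the image of exactly $\binom{n}{k}$ models of $\Gamma \wedge \Upsilon$; the claimed identity then follows by rearranging. I assume $n \geq k$ throughout; for $n<k$ the sentence $\Upsilon_4$ is unsatisfiable and the identity degenerates.

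\emph{The forgetful map is sound and weight-preserving.} Let $\omega' \models \Gamma \wedge \Upsilon$ over the expanded vocabulary and let $\omega$ be its restriction to the vocabulary of $\Gamma$ and $\Psi$. Since $B^R$ and $U^R$ occur in neither $\Gamma$ nor $\Psi$, we get $\omega \models \Gamma$ for free, and for $\Psi$ I would argue pointwise: at a domain element $a$ with $A(a)$ true, the first disjunct of $\Psi$ holds; at $a$ with $A(a)$ false, $\Upsilon_1$ forces $R(a,b) \Leftrightarrow B^R(a,b)$ for all $b$ and $\Upsilon_4$ gives exactly $k$ witnesses $b$ of $B^R(a,b)$, hence exactly $k$ witnesses of $R(a,b)$, so $\exists^{=k} y\, R(a,y)$ holds at $a$. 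Because the four weights attached to $B^R$ and $U^R$ are all $1$, the $B^R$- and $U^R$-literals contribute a factor $1$ to the product, so $\omega'$ and $\omega$ have the same weight.

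\emph{Every fiber has size $\binom{n}{k}$.} Fix $\omega \models \Gamma \wedge \Psi$ and count the ways to extend it by interpretations of $B^R$ and $U^R$ satisfying $\Upsilon$. For each non-$A$ element $a$, $\Upsilon_1$ forces the row $B^R(a,\cdot)$ to equal $R(a,\cdot)$, which already has exactly $k$ ones because $\neg A(a)$ and $\omega \models \Psi$, so $\Upsilon_4$ holds at $a$ automatically. Now I would split on whether $\omega$ has any $A$-element. If not, $\Upsilon_2$ is vacuous, $B^R$ is fully forced, and the only remaining choice is that of $U^R$, constrained only by $\Upsilon_3$ to be a $k$-subset of the domain --- exactly $\binom{n}{k}$ choices, each giving a model. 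If $\omega$ does have an $A$-element, then for every $A$-element $a$: $\Upsilon_4$ makes $|B^R(a,\cdot)| = k$, $\Upsilon_2$ makes $B^R(a,\cdot) \subseteq U^R$, and $\Upsilon_3$ makes $|U^R| = k$, so $B^R(a,\cdot) = U^R$; the extension is therefore pinned down by the single choice of the $k$-set $U^R$, and conversely one checks that any such $U^R$ yields a valid model. Either way the fiber is in bijection with the $k$-subsets of the $n$-element domain, so its size is $\binom{n}{k}$, the same for all $\omega$.

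\emph{Conclusion.} Partitioning the models of $\Gamma \wedge \Upsilon$ according to their image under the forgetful map and using that it is weight-preserving,
\[
  \textup{WFOMC}(\Gamma \wedge \Upsilon, n, \w)
  = \sum_{\omega \models \Gamma \wedge \Psi} \binom{n}{k}\,\mathrm{weight}(\omega)
  = \binom{n}{k}\,\textup{WFOMC}(\Gamma \wedge \Psi, n, \w),
\]
which rearranges to the statement. The step I expect to be the main obstacle is the fiber-size case analysis: at first glance, having many $A$-elements seems to \emph{shrink} the fiber (it ties all of their $B^R$-rows to $U^R$) while only ``freeing'' $U^R$ inside those same constraints, so it is not obvious the count is unchanged. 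The resolution is the small rigidity fact that a $k$-element subset of a $k$-element set is that set, which collapses the $A$-element case to a single free choice of $U^R$ --- exactly the count obtained in the $A$-free case, hence $\binom{n}{k}$ uniformly.
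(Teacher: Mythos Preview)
Your argument is correct. The paper does not supply its own proof of this lemma; it simply cites it as Lemma~4 of \citet{kuzelka21:wfomc-in-c2}, so there is nothing in the present paper to compare against directly. Your fiber-counting argument is exactly the kind of reasoning used in the original source: forget the fresh predicates, check that this lands in the models of $\Gamma\wedge\Psi$, and count extensions. The key observation---that for every $A$-element $a$ the constraints $|B^R(a,\cdot)|=k$, $B^R(a,\cdot)\subseteq U^R$, and $|U^R|=k$ force $B^R(a,\cdot)=U^R$, collapsing all freedom to the single choice of a $k$-subset $U^R$---is precisely the mechanism that makes the fiber size uniformly $\binom{n}{k}$, and you have identified it cleanly. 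Your remark on the degenerate case $n<k$ is also apt: both $\Upsilon_4$ and the binomial coefficient vanish there, so the identity as stated tacitly assumes $n\geq k$.
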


Algorithm~\ref{algo:c2-to-ccs} shows how to combine the lemmas above to reduce WFOMC over \ctwo{} to WFOMC over \ufoccs.

\begin{algorithm}[ht]
\caption{Converts \ctwo{} formulas into \ufoccs{}}
\label{algo:c2-to-ccs}
\textbf{Input}: Sentence $\Gamma \in \ctwo{}$\\
\textbf{Output}: Sentence $\Gamma^* \in \ufoccs{}$

\begin{algorithmic}[1] %[1] enables line numbers
% \FORALL{predicate $R/k$ in $\Gamma$ with $k\geq 3$}
%     \STATE Apply reduction from \citet{graedeletal97:fo2-decidable} to have $k\leq 2$
% \ENDFOR

\FORALL{sentence $\exists^{=k} x\; \psi(x)$ in $\Gamma$}
    \STATE Apply Lemma 3
\ENDFOR

\FORALL{sentence $\forall x \exists^{=k} y\; \psi(x, y)$ in $\Gamma$}
    \STATE Apply Lemma 4
\ENDFOR

\FORALL{subformula $\varphi(x) = \exists^{=k} y\;\psi(x, y)$ in $\Gamma$}
    \STATE Create new predicates $R/2$ and $A/1$
    \STATE Let $\mu \gets \forall x \forall y\; R(x, y) \Leftrightarrow \psi(x, y)$
    \STATE Let $\nu \gets \forall x\; A(x) \Leftrightarrow (\exists^{=k} y\; R(x, y))$
    \STATE Apply Lemmas 2, 5, and 4 to $\nu$
    \STATE Replace $\varphi(x)$ by $A(x)$
    \STATE Append $\mu \wedge \nu$ to $\Gamma$
\ENDFOR

\FORALL{sentence with an existential quantifier in $\Gamma$}
    \STATE Apply Lemma 1
\ENDFOR

\RETURN $\Gamma$
\end{algorithmic}
\end{algorithm}

\newpage
\section{An Upper Bound for Existing Techniques}
\label{sec:old-ub}
As we have already mentioned above, WFOMC($\varphi, n, \w$) for $\varphi\in\ccs\;$ can be computed in time $\bigO{n^{p+1+\alpha}}$, where $p$ is the number of valid cells of $\varphi$ \citep{beame-etal15:fo3-intractable,kuzelka21:wfomc-in-c2} and $\alpha = \sum_{i=1}^m {\left(arity(R_i)+1\right)}$ with $R_1, R_2, \ldots, R_m$ being all the predicates appearing in cardinality constraints.
Let us see how the bound increases when computing WFOMC over \ctwo.

\subsection{A Worked Example on Removing Counting}
To be able to compute WFOMC of a particular \ctwo{} sentence, we must first encode the sentence in \ufoccs{}.
Let us start with an example of applying Algorithm~\ref{algo:c2-to-ccs} to do just that.

Consider computing WFOMC($\varphi, n, \w$) for the sentence
\begin{equation}
    \varphi = \exists^{=k} x \exists^{=l} y\; \psi(x, y),\label{eq:1}
\end{equation}
where $\psi$ is a quantifier-free formula from the two-variable fragment and $k, l \in \Nat$.

First, let us introduce two new fresh predicates, namely $R/2$ and $P/1$.
The predicate $R$ will \textit{replace} the formula $\psi$ and $P$ will do the same for the counting subformula $\exists^{=l} y\; \psi(x, y)$.
Specifically, we obtain
\begin{align}
    \varphi^{(1)} &= \left(\exists^{=k} x\; P(x)\right)\label{eq:2}\\
            &\wedge \left(\forall x\; P(x) \Leftrightarrow \left(\exists^{=l} y\; R(x, y)\right)\right).\label{eq:3}\\
            &\wedge \left(\forall x\forall y\; R(x, y) \Leftrightarrow \psi(x, y)\right)
\end{align}
While Sentence \ref{eq:2} is already easily encoded using a single cardinality constraint as Lemma \ref{lem:existsK} suggests, Sentence \ref{eq:3} requires more work.
Let us split the sentence into two implications:
\begin{align}
    \varphi^{(2)} &= \left(|P|=k\right)\wedge \left(\forall x\forall y\; R(x, y) \Leftrightarrow \psi(x, y)\right)\\
            &\wedge \left(\forall x\; P(x) \Rightarrow \left(\exists^{=l} y\; R(x, y)\right)\right)\label{eq:4}\\
            &\wedge \left(\forall x\; P(x) \Leftarrow \left(\exists^{=l} y\; R(x, y)\right)\right).\label{eq:5} 
\end{align}
Sentence $\ref{eq:4}$ can easily be rewritten into a form processable by Lemma \ref{lem:A-or-existsK}, whereas Sentence \ref{eq:5} will first need to be transformed using Lemma \ref{lem:relaxed-tseitin}, since the sentence can be rewritten~as
\begin{align*}
    \forall x\; P(x)\vee\neg\left(\exists^{=l} y\; R(x, y)\right).
\end{align*}
After applying Lemma \ref{lem:relaxed-tseitin}, we obtain
\begin{align}
    \varphi^{(3)} &= \left(|P|=k\right)\wedge \left(\forall x\forall y\; R(x, y) \Leftrightarrow \psi(x, y)\right)\\
        &\wedge \left(\forall x\; \overline{P}(x) \vee \left(\exists^{=l} y\; R(x, y)\right)\right)\label{eq:6}\\
        &\wedge \left(\forall x\; C(x) \vee \left(\exists^{=l} y\; R(x, y)\right)\right)\label{eq:7}\\
        &\wedge \left(\forall x\; D(x) \vee \left(\exists^{=l} y\; R(x, y)\right)\right)\label{eq:8}\\
        &\wedge \left(C(x)\vee D(x)\right) \wedge \left(\forall x\; \overline{P}(x) \Leftrightarrow \neg P(x)\right),
\end{align}
with $\overline{w}(D)=-1$.
Apart from applying Lemma~\ref{lem:relaxed-tseitin}, we also introduced another fresh predicate $\overline{P}/1$, which wraps the negation of $P$ for brevity further down the line.

Next, we need to apply Lemma \ref{lem:A-or-existsK} three times to Sentences \ref{eq:6} through \ref{eq:8}.
The repetitions can, luckily, be avoided.
By the distributive property of conjunctions and disjunctions, we can \emph{factor} $\overline{P}/1, C/1$ and $D/1$ out from the sentences, thus obtaining
\begin{align}
    \varphi^{(4)} &= \left(|P|=k\right)\wedge \left(\forall x\forall y\; R(x, y) \Leftrightarrow \psi(x, y)\right)\\
        &\wedge \left( \forall x \; F(x) \Leftrightarrow \left(C(x) \wedge D(x) \wedge \overline{P}(x)\right) \right) \\
        &\wedge \left(\forall x\; F(x) \vee \left(\exists^{=l} y\; R(x, y)\right)\right)\label{eq:678}\\
        &\wedge \left(C(x)\vee D(x)\right) \wedge \left(\forall x\; \overline{P}(x) \Leftrightarrow \neg P(x)\right),
\end{align}

Let us denote $\mathcal{T}_l(K, L)$ the transformation result when applying Lemma \ref{lem:A-or-existsK} to a sentence 
\begin{align*}
 \left(\forall x\; K(x) \vee \left(\exists^{=l} y\; L(x, y)\right)\right),
\end{align*}
which is now the form of Sentence~\ref{eq:678}.
After that application, our original sentence $\varphi$ becomes
\begin{align*}
    \varphi^{(5)} &= \left(|P|=k\right)\wedge \left(\forall x\forall y\; R(x, y) \Leftrightarrow \psi(x, y)\right)\\
        &\wedge \mathcal{T}_l(F, R) \wedge \left( \forall x \; F(x) \Leftrightarrow \left(C(x) \wedge D(x) \wedge \overline{P}(x)\right) \right)\\
        &\wedge \left(C(x)\vee D(x)\right) \wedge \left(\forall x\; \overline{P}(x) \Leftrightarrow \neg P(x)\right).
\end{align*}
We are still not done.
The result of Lemma \ref{lem:A-or-existsK} still contains another counting quantifier.
Specifically, the sentence $\mathcal{T}_l(K, L)$ contains the subformula
\begin{align*}
\forall x \exists^{=l} y\; B^L(x, y),    
\end{align*}
where $B^L/2$ is a fresh predicate.
We need Lemma~\ref{lem:forall-existsK} to eliminate this counting construct.
Using said lemma will turn $\mathcal{T}_l(K, L)$ into
\begin{align}
        \mathcal{T}'_l(K, L) &= \forall x \forall y\; \neg K(x) \Rightarrow \left(L(x, y) \Leftrightarrow B^L(x, y)\right)\\
        &\wedge \forall x \forall y\; (K(x) \wedge B^L(x, y)) \Rightarrow U^L(y))\\
        &\wedge \left(|U^L|=l\right) \wedge \left(|B^L|=n\cdot l\right)\\
        &\wedge \forall x \forall y\; B^L(x, y) \Leftrightarrow \bigvee_{i=1}^l f_i(x, y)\\
        &\wedge \bigwedge_{1\leq i<j\leq l} \left(\forall x \forall y\; f_i(x, y) \Rightarrow \neg f_j(x, y)\right)\\
        &\wedge \bigwedge_{i=1}^l \left(\forall x \exists y\; f_i(x, y)\right)\label{eq:9}.
\end{align}

We have obtained an \ccs{} sentence.
Unfortunately, we are still unable to directly compute WFOMC for such a formula either.
The problem lies in the $l$ sentences making up Formula~\ref{eq:9}, each containing an existential quantifier.
Following Lemma~\ref{lem:skolem}, we can replace each $\forall x \exists y\; f_i(x, y)$ with $\forall x \forall y\; \neg f_i(x, y) \vee A_i(x)$, where $A_i/1$ is a fresh predicate with $\overline{w}(A_i)=-1$ for each $i$.
Denote $\mathcal{T}''_l(K, L)$ the result of applying such change to $\mathcal{T}'_l(K, L)$.
Finally, we obtain a \ufoccs{} sentence
\begin{align*}
    \varphi^* &= \left(|P|=k\right)\wedge \left(\forall x\forall y\; R(x, y) \Leftrightarrow \psi(x, y)\right)\\
        &\wedge \mathcal{T}''_l(F, R) \wedge \left( \forall x \; F(x) \Leftrightarrow \left(C(x) \wedge D(x) \wedge \overline{P}(x)\right) \right)\\
        &\wedge \left(C(x)\vee D(x)\right) \wedge \left(\forall x\; \overline{P}(x) \Leftrightarrow \neg P(x)\right).
\end{align*}

\subsection{Deriving the Upper Bound}
As one can observe, the formula $\varphi^*$ in the example above has grown considerably compared to its original form.
The question is whether the formula growth can influence the asymptotic bound from Theorem~\ref{th:fo+cc-tractable}.
At first glance, the answer may seem negative.
That is due to the fact that the transformation only extends the vocabulary (which is assumed to be fixed), adds cardinality constraints (which are concentrated in the function $\alpha$), and increases the length of the input formula (which is also constant with respect to $n$).
However, there is one caveat to be aware of.
When extending the vocabulary, we may introduce new valid cells.
The vocabulary is fixed once WFOMC computation starts, but if we have formula $\Gamma\wedge\Phi$ such that $\Gamma\in\ccs{}$ and $\Phi\in\ctwo{}$, then this formula already has $p$ valid cells.
Once we construct $\Gamma\wedge\Phi^*$, where $\Phi^*\in\ccs{}$ is obtained from $\Phi$ using Algorithm~\ref{algo:c2-to-ccs}, the new formula will have $p^*$ valid cells and, possibly, $p\leq p^*$.
If we wish to express a complexity bound for \ctwo, we should inspect the possible increase in $p$ to obtain $p^*$.

To deal with an arbitrary \ctwo{} formula, we need to be able to deal with subformulas such as the one in Sentence~\ref{eq:3}.
As one can observe from both the example above and Algorithm~\ref{algo:c2-to-ccs}, encoding such a sentence in \ufoccs{} requires, in order of appearance, Lemmas~\ref{lem:relaxed-tseitin}, \ref{lem:A-or-existsK}, \ref{lem:forall-existsK} and \ref{lem:skolem}.
Applying Lemmas~\ref{lem:relaxed-tseitin} and \ref{lem:A-or-existsK} introduces only a constant number of fresh predicates.
Hence, the increase in $p$ can be expressed by multiplying with a constant $\beta$.
See the proof of Theorem~\ref{th:c2-bound} for the derivation of a value for $\beta$.

Although the constant $\beta$ may increase our polynomial degree considerably, there is another, much more substantial, influence.
An application of Lemma~\ref{lem:forall-existsK}, which additionally requires Lemma~\ref{lem:skolem} to deal with \emph{unskolemized} formulas such as in Formula~\ref{eq:9} will introduce $2k$ new predicates.
Although $k$ is a parameter of the counting quantifiers, i.e., part of the language, and the language is assumed to be fixed, the size of the encoding of \ctwo{} in \ufoccs{} obviously depends on $k$.
Hence, the number of cells may also increase with respect to $k$, and, as we formally state below, it, in fact, does.

\begin{lemma}
    \label{lem:bound-forall-existsK}
    For any $m\in\Nat$, there exists a sentence $\Gamma = \varphi \wedge \bigwedge_{i=1}^m \left(\forall x \exists^{=k_i} y\; \psi_i(x, y)\right)$ such that the \ufoccs{} encoding of $\Gamma$ obtained using Lemma~\ref{lem:forall-existsK} has $\bigO{p \cdot \prod_{i=1}^m \gamma(k_i)}$ valid cells, where $p$ is the number of valid cells of $\varphi$ and $\gamma(k) = (k + 2) \cdot 2^{k-1}$.
\end{lemma}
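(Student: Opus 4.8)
The plan is to construct, for each $m$, an explicit family of sentences witnessing the claimed cell count, and then to count the valid cells of its \ufoccs{} encoding directly. First I would take $\psi_i(x,y)$ to be distinct fresh binary predicates $R_i(x,y)$ (using the footnote's remark that equating an atom to a general subformula costs only a constant factor), so that $\Gamma = \varphi \wedge \bigwedge_{i=1}^m (\forall x \exists^{=k_i} y\; R_i(x,y))$. Applying Lemma~\ref{lem:forall-existsK} independently to each conjunct introduces, for the $i$-th conjunct, the fresh binary predicates $f^{(i)}_1, \ldots, f^{(i)}_{k_i}$, together with the cardinality constraints and the universal sentences relating $R_i$ to the $f^{(i)}_j$'s. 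The key point is that these $m$ blocks of fresh predicates are on disjoint vocabularies, so a cell of the whole encoding is (essentially) a choice of the original cell of $\varphi$ together with one independent choice per block; hence the cell count multiplies as $p \cdot \prod_{i=1}^m (\text{cells contributed by block } i)$, and it suffices to show each block contributes $\Theta(\gamma(k_i)) = \Theta((k_i+2)\,2^{k_i-1})$ valid cells.

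The heart of the argument is therefore the single-block count. I would fix one index, write $k = k_i$ and look at the fragment $(\forall x \exists^{=k} y\; R(x,y)) \wedge (\forall x\forall y\; R(x,y)\Leftrightarrow\bigvee_{j=1}^k f_j(x,y)) \wedge \bigwedge_{j<j'}(\forall x\forall y\; f_j(x,y)\Rightarrow\neg f_{j'}(x,y)) \wedge \bigwedge_j(\forall x\exists y\; f_j(x,y))$. A cell over this vocabulary in the variable $x$ is a maximal consistent conjunction of literals over the unary atoms available — after Skolemization of the $k$ existentials $\forall x\exists y\; f_j(x,y)$ there are also $k$ fresh unary Skolem predicates $A_j(x)$ — and the diagonal atoms $R(x,x)$, $f_1(x,x),\ldots,f_k(x,x)$. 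The mutual-exclusivity clauses $f_j(x,x)\Rightarrow\neg f_{j'}(x,x)$ force at most one of the $f_j(x,x)$ to be true, and the equivalence $R(x,x)\Leftrightarrow\bigvee_j f_j(x,x)$ then pins down $R(x,x)$; this yields exactly $k+1$ valid combinations of the diagonal literals $\{R(x,x), f_1(x,x),\ldots,f_k(x,x)\}$ (the all-false case giving $\neg R(x,x)$, plus one case for each $j$ with $f_j(x,x)$ true). Independently, the $k$ Skolem predicates $A_j(x)$ can be set freely, contributing a factor $2^k$, except that a short case analysis on the $R(x,x)/f_j(x,x)$ diagonal configuration shows that roughly half of these are killed by the Skolemized constraints (when $f_j(x,x)$ holds, the clause $\neg f_j(x,y)\vee A_j(x)$ instantiated at $y=x$ forces $A_j(x)$, removing one bit), so that the surviving count per diagonal-configuration is $2^{k-1}$ in the cases where some $f_j(x,x)$ is true and $2^k$... — and combining the per-configuration counts gives the stated $(k+2)\,2^{k-1}$. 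I would make this bookkeeping precise with a small table over the at-most-one-hot diagonal pattern.

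The main obstacle I anticipate is exactly this last piece of accounting: getting the constant inside $\gamma(k) = (k+2)\,2^{k-1}$ right rather than merely an $O(k\,2^k)$ upper bound. One has to be careful about which fresh unary predicates actually arise (the Skolem predicates $A_j$ from the $\forall x\exists y\; f_j(x,y)$ clauses, and any unary predicates introduced incidentally by Lemma~\ref{lem:skolem} when the original $\varphi$ or the $R_i\Leftrightarrow\psi_i$ definitions are Skolemized), about the fact that a cell must be \emph{valid}, i.e. $\Gamma(t,t)\wedge C(t)$ satisfiable, which is what prunes the diagonal configurations, and about the interaction of the diagonal literals $f_j(x,x)$ with both the exclusivity clauses and the Skolem clauses. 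I would also need to argue the lower bound — that $\Omega(\gamma(k_i))$ cells genuinely survive — by exhibiting, for each surviving pattern, an actual model of $\Gamma(t,t)\wedge C(t)$ on a small domain; since the $f_j$ are otherwise unconstrained binary predicates this is routine but must be stated. The cross-block independence and the $p$ factor from $\varphi$ then follow because the encoding never links two different blocks' fresh symbols, so no additional consistency constraints couple the choices, and the total multiplies as claimed.
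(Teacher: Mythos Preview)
Your proposal is correct and follows essentially the same approach as the paper. The paper's witness is $\varphi = \bigwedge_{i=1}^m(\forall x\forall y\; E_i(x,y)\Rightarrow E_i(y,x))$ with $\psi_i = E_i$, while you take fresh unconstrained $R_i$'s, but the core counting argument is identical: split on whether the diagonal literal $E_i(x,x)$ (resp.\ $R_i(x,x)$) is positive or negative, observe that in the negative case all $k$ Skolem atoms $A_j(x)$ are free (factor $2^{k}$) and in the positive case exactly one $f_j(x,x)$ is true ($k$ choices) forcing the corresponding $A_j(x)$ and leaving the rest free (factor $k\cdot 2^{k-1}$), then sum to get $2^k + k\cdot 2^{k-1} = (k+2)\,2^{k-1}$ per block and multiply across the $m$ independent blocks.
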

\begin{proof}
    Consider the sentence 
    \begin{align}
        \Gamma = &\bigwedge_{i=1}^m \left( \forall x \forall y\; E_i(x, y) \Rightarrow E_i(y, x) \right) \wedge\label{eq:19}\\
        &\bigwedge_{i=1}^m \left( \forall x \exists^{=k_i} y\; E_i(x, y)\right).\label{eq:20}
    \end{align}
    In this setting, $\varphi$ is Formula~\ref{eq:19} and it has $p = 2^m$ valid cells.
    We need to apply Lemmas~\ref{lem:forall-existsK} and \ref{lem:skolem} $m$ times to encode sentences in Formula~\ref{eq:20} into \ufoccs.
    Let us investigate one such application.

    First, consider valid cells of $\varphi$, that contain $E_i(x, x)$ negatively.
    Then all $f_{ij}(x, x)$ must also be negative (the index $j$ now refers to the predicates introduced in a single application of Lemma~\ref{lem:forall-existsK}), which will immediately satisfy all \emph{skolemization clauses} obtained by application of Lemma~\ref{lem:skolem}.
    Hence, the atoms $A_{ij}(x)$ will be allowed to be present either positively or negatively for all $j$.
    Thus, the number of such cells will increase $2^{k_i}$ times.

    Second, consider valid cells of $\varphi$, that contain $E_i(x, x)$ positively.
    Then exactly one of $f_{ij}(x, x)$ can be satisfied in each cell.
    That will cause the number of cells to be multiplied by $k_i$.
    Next, for a particular cell, denote $t$ the index such that $f_{it}(x, x)$ is positive in that cell.
    Then all $A_{ij}(x)$ such that $j \neq t$ will again be free to assume either a positive or a negative form.
    Only $A_{it}(x)$ will be fixed to being positive.
    Hence, the number of such cells will further be multiplied by $2^{k_i-1}$.

    Overall, for a single application of Lemma~\ref{lem:forall-existsK}, the number of cells will be $\bigO{p \cdot 2^{k_i} + p \cdot k_i \cdot 2^{k_i-1}}\in\bigO{p \cdot  (k_i+2)\cdot2^{k_i-1}}$, where we upper bound both partitions of the valid cells of $\varphi$ by their total number.
    After repeated application of Lemmas \ref{lem:forall-existsK} and \ref{lem:skolem}, the bound above directly leads to the bound we sought to prove.%
    \footnote{In the proof, we opted for as simple formula as possible. It would be easy to handle computing WFOMC for $\Gamma$ by decomposing the problem into $m$ identical and independent problems. It is, however, also easy to envision a case where such decomposition is not as trivial. Consider adding constraints such that for each $x$ and $y$, there is only one $i$ such that $E_i(x, y)$ is satisfied.}
\end{proof}

One last consideration for dealing with all of \ctwo{} is Lemma~\ref{lem:existsK}.
The lemma adds one new cardinality constraint to the formula, which does not increase the number of cells in any way.
It will, however, require more calls to an oracle for WFOMC over \fotwo.
That influence on the overall bound can still be concentrated in the function $\alpha$.
To distinguish the values from before encoding \ctwo{} and after, let us denote the new value $\alpha'$.

\newpage
Finally, we are ready to state the overall time complexity bound for computing WFOMC over \ctwo{}.
\begin{theorem}
    \label{th:c2-bound}
    Consider an arbitrary $\ctwo{}$ sentence rewritten as 
    $$\varphi=\Gamma\wedge\bigwedge_{i=1}^{m}\left(\forall x\; P_i(x) \Leftrightarrow \left(\exists^{=k_i} y\; R_i(x, y)\right)\right),$$ where
    $\Gamma\in\ccs$.
    For any $n\in\Nat$ and any fixed weights $(\w)$, $\textup{WFOMC}(\varphi, n, \w)$ can be computed in time $\bigO{n^{\alpha'}\cdot n^{1 + p \cdot \prod_{i=1}^m \beta \cdot \gamma(k_i)}}$, where $p$ is the number of valid cells of $\Gamma$, $\alpha'$ and $\beta$ are constants with respect to both $n$ and the counting parameters $k_i$ and $\gamma(k) = (k + 2) \cdot 2^{k-1}$.
\end{theorem}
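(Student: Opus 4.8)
The plan is to feed $\varphi$ to Algorithm~\ref{algo:c2-to-ccs}, producing an equivalent (up to a multiplicative constant independent of $n$) \ufoccs{} sentence $\varphi^*$, and then to apply the second bound of Theorem~\ref{th:fo+cc-tractable}: if $\varphi^*$ has $p^*$ valid cells and its cardinality constraints contribute $\alpha^*$ to the function $\alpha$, then $\textup{WFOMC}(\varphi^*, n, \w)$ is computable in time $\bigO{n^{\alpha^*}\cdot n^{p^*+1}}$. Since the conversion preserves WFOMC up to a factor that does not depend on $n$ (each of Lemmas~\ref{lem:skolem}--\ref{lem:A-or-existsK} only rescales the count by a closed-form constant such as $(k!)^{-n}$ or $\binom{n}{k}^{-1}$, computable in constant time under the theorem's arithmetic assumptions), the whole statement reduces to two bookkeeping tasks: bounding $\alpha^*$ and bounding $p^*$.

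For $\alpha^*$, I would trace the lemma applications inside Algorithm~\ref{algo:c2-to-ccs} and observe that each one adds only $O(1)$ cardinality constraints of arity at most $2$: Lemmas~\ref{lem:existsK} and \ref{lem:A-or-existsK} add unary ones (e.g.\ $(|U^{R_i}|=k_i)$), Lemma~\ref{lem:forall-existsK} additionally adds the arity-$2$ constraint $(|B^{R_i}|=k_i\cdot n)$, and Lemmas~\ref{lem:relaxed-tseitin} and \ref{lem:skolem} add none. The arities are independent of the $k_i$ (only the numerical right-hand sides depend on them, which is irrelevant to $\alpha$), so $\alpha^*=\alpha(\Gamma)+O(m)$ is a constant with respect to both $n$ and the counting parameters; this is the value denoted $\alpha'$.

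For $p^*$, note first that $\Gamma$ has $p$ valid cells and that the conjuncts $\forall x\,P_i(x)\Leftrightarrow(\exists^{=k_i}y\,R_i(x,y))$ mention only predicates already in $\Gamma$, so $\varphi$ itself still has $p$ valid cells before the conversion. I would then process one conjunct at a time. Splitting the biconditional into its two implications, rewriting the ``$\Leftarrow$'' direction via Lemma~\ref{lem:relaxed-tseitin}, factoring the several ``$\cdots\vee(\exists^{=k_i}y\,R_i(x,y))$'' clauses into a single one exactly as in Sentences~\ref{eq:678}, and applying Lemma~\ref{lem:A-or-existsK} once, introduces only a fixed number $c$ of fresh predicates of arity at most $2$ --- concretely $\overline{P_i},C_i,D_i,F_i,U^{R_i}$ (unary) and $B^{R_i}$ (binary), so $c\le 6$ --- and each such predicate adds at most its single-variable atom to every cell, hence at most doubles the cell count; this stage thus multiplies the number of valid cells by at most $\beta:=2^{c}$, a constant independent of $n$ and of $k_i$. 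The residual subformula $\forall x\exists^{=k_i}y\,B^{R_i}(x,y)$ produced by Lemma~\ref{lem:A-or-existsK} is then eliminated by Lemmas~\ref{lem:forall-existsK} and \ref{lem:skolem}; the cell-splitting analysis in the proof of Lemma~\ref{lem:bound-forall-existsK} applies verbatim to this subformula irrespective of the surrounding formula, and shows this step multiplies the cell count by $\bigO{\gamma(k_i)}$ with $\gamma(k)=(k+2)\cdot2^{k-1}$. Composing the per-conjunct factors over $i=1,\dots,m$, and folding into $\beta$ the further $O(1)$ factor coming from the final round of Lemma~\ref{lem:skolem} on any residual existential quantifiers of $\Gamma$, gives $p^*\le p\cdot\prod_{i=1}^m\beta\cdot\gamma(k_i)$. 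Plugging $p^*$ and $\alpha'$ into Theorem~\ref{th:fo+cc-tractable} and using $p^*+1\le 1+p\cdot\prod_{i=1}^m\beta\cdot\gamma(k_i)$ yields the claimed bound $\bigO{n^{\alpha'}\cdot n^{1+p\cdot\prod_{i=1}^m\beta\cdot\gamma(k_i)}}$.

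The main obstacle is the cell-counting in the step that uses Lemmas~\ref{lem:forall-existsK} and \ref{lem:skolem}: the naive estimate multiplies the cell count by $4^{k_i}$ (one binary $f_{ij}$ and one unary $A_{ij}$ for each $j\in[k_i]$, each a potential factor of $2$), and collapsing this to $\gamma(k_i)=(k_i+2)2^{k_i-1}$ requires exploiting that the $f_{ij}$ are pairwise mutually exclusive, so at most one $f_{ij}(x,x)$ is positive in any cell, together with the fact that the relaxed skolemization clause $\neg f_{ij}(x,x)\vee A_{ij}(x)$ forces only the matching $A_{ij}(x)$. This is precisely what Lemma~\ref{lem:bound-forall-existsK} establishes, so the real work here is to argue that its proof is local to the eliminated subformula and hence reusable inside the larger reduction; a secondary, purely clerical point is pinning down the value of $\beta$ by enumerating the constant-size part of the encoding's fresh predicates.
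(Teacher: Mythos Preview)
Your proposal is correct and follows essentially the same route as the paper: convert via Algorithm~\ref{algo:c2-to-ccs}, then invoke Theorem~\ref{th:fo+cc-tractable}, bounding $\alpha'$ by tracing the (constantly many, bounded-arity) cardinality constraints introduced, and bounding $p^*$ by a constant factor $\beta$ for Lemmas~\ref{lem:relaxed-tseitin}/\ref{lem:A-or-existsK} composed with the $\gamma(k_i)$ factor from Lemma~\ref{lem:bound-forall-existsK} for Lemmas~\ref{lem:forall-existsK}/\ref{lem:skolem}. The only noteworthy difference is that the paper pins down $\beta=8$ via a finer case analysis (partitioning cells by whether the relevant subformula holds) rather than your cruder $\beta\le 2^{6}$, but since the theorem only requires $\beta$ to be constant in $n$ and the $k_i$, either bound suffices.
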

\begin{proof}
Suppose that
\begin{align*}
    \Gamma &= \psi \wedge \bigwedge_{i=1}^{m_1} (|Q_i| = l_{i})
\end{align*}
with $\psi\in\fotwo{}$ and $Q_i$ being predicates from the vocabulary of $\psi$.
Consider encoding a single \ctwo{} sentence
\begin{align*}
    \mu_i = \left(\forall x\; P_i(x) \Leftrightarrow \left(\exists^{=k_i} y\; R_i(x, y)\right)\right)
\end{align*}
into \ccs{}.

First, following Algorithm~1 and the example in Section~3.1, we apply Lemma~2 and leverage distributive law, obtaining
\begin{align*}
    \mu^{(1)}_i &= \left(\forall x\; C(x) \vee D(x)\right) \wedge \left(\forall x\; \overline{P_i}(x) \Leftrightarrow \neg P_i(x)\right)\\
    &\wedge \left(\forall x\; A(x) \Leftrightarrow \left(C(x) \wedge D(x) \wedge \overline{P_i}\right)\right)\\
    &\wedge \left(\forall x\; A(x) \vee \left(\exists^{=k_i} y\; R_i(x, y)\right)\right),
\end{align*}
with $C/1, D/1$ and $A/1$ being fresh predicates and $\overline{w}(D)=-1$.
Second, using Lemma~5, we construct
\begin{align*}
    \mu^{(2)}_i &= \left(\forall x\; C(x) \vee D(x)\right) \wedge \left(\forall x\; \overline{P_i}(x) \Leftrightarrow \neg P_i(x)\right)\\
    &\wedge \left(\forall x\; A(x) \Leftrightarrow \left(C(x) \wedge D(x) \wedge \overline{P_i}\right)\right)\\
    &\wedge \left(\forall x\forall y\; \neg A(x) \Rightarrow \left(R_i(x, y) \Leftrightarrow B^{R_i}(x, y)\right)\right)\\
    &\wedge \left(\forall x\forall y\; \left( A(x) \wedge B^{R_i}(x, y)\right) \Rightarrow U^{R_i}(y)\right)\\
    &\wedge \left(|U^{R_i}|=k_i\right) \wedge \left(\forall x\exists^{=k_i} y\; B^{R_i}(x, y)\right),
\end{align*}
where $U^{R_i}/1$ and $B^{R_i}/2$ are also fresh predicates.

Let us now inspect how the number of cells may have increased so far.
If the original formula had $p$ valid cells, then after application of Lemma~2, there may be up to $4p$ cells.
To show that, consider partitioning valid cells into those that satisfy the subformula $\psi(\mat{x})$ considered by Lemma~2 and those that do not.
In the former case, the new cells must satisfy both $C(x)$ and $D(x)$. Thus, the number of cells remains unchanged.
In the latter case, $C(x)$ or $D(x)$ must be satisfied, leading to a factor of three.
Hence, we have an upper bound of $p + 3p = 4p$.

Similarly, for Lemma~5, in cells that do not satisfy $A(x)$, $B^R(x, x)$ is determined, but $U^R(x)$ is unconstrained, leading to a factor of two.
In the other cells, if $B^R(x, x)$ is also satisfied, then $U^R(x)$ is determined. However, it is unconstrained if $B^R(x, x)$ is satisfied, which can be upper bounded by considering only the second case, leading to another factor of two.
Hence, we have another factor of $2+2 = 4$.
Overall, in the worst case, the increase in the number of cells by Lemmas~2 and 5 is by a factor of
\begin{align*}
    \beta = 8,
\end{align*}
which will be applied each time we process one sentence $\mu_i$.

Next, let us investigate applying Lemmas~4 and 1 to $\mu^{(2)}_i$ which produces

\begin{align*}
    \mu_i^{(3)} &= \left(\forall x\; C(x) \vee D(x)\right) \wedge \left(\forall x\; \overline{P_i}(x) \Leftrightarrow \neg P_i(x)\right)\\
    &\wedge \left(\forall x\; A(x) \Leftrightarrow \left(C(x) \wedge D(x) \wedge \overline{P_i}\right)\right)\\
    &\wedge \left(\forall x\forall y\; \neg A(x) \Rightarrow \left(R_i(x, y) \Leftrightarrow B^{R_i}(x, y)\right)\right)\\
    &\wedge \left(\forall x\forall y\; \left( A(x) \wedge B^{R_i}(x, y)\right) \Rightarrow U^{R_i}(y)\right)\\
    &\wedge \left(|U^{R_i}|=k_i\right) \wedge \left(|B^{R_i}| = n\cdot k_i\right)\\
    &\wedge \left(\forall x \forall y\; B^L(x, y) \Leftrightarrow \bigvee_{i=1}^{k_i} f_i(x, y)\right)\\
    &\wedge \bigwedge_{1\leq i<j\leq k_i} \left(\forall x \forall y\; f_i(x, y) \Rightarrow \neg f_j(x, y)\right)\\
    &\wedge \bigwedge_{i=1}^l \left(\forall x \forall y\; \neg f_i(x, y) \vee A_i(x)\right),
\end{align*}
where $f_i/2$ and $A_i/1$ are fresh predicates with $\overline{w}(A_i)=-1$.
It follows from Lemma~6 that applying Lemmas 4 and 1 may increase the number of valid cells by a factor of
\begin{align*}
    \gamma(k_i) = (k_i + 2) \cdot 2^{k_i-1}.
\end{align*}
The factor is an upper bound because the sentence $\Gamma$ shown in the proof of Lemma~6 affords the atoms with the predicates $f_i$ and $A_i$ the highest possible number of degrees of freedom.
The truth values of the atoms are determined only by sentences added through the application of Lemmas ~4 and 1.
Therefore, there cannot be more valid cells.

Finally, assuming that each $\mu_i$ is encoded into \ccs{} independently of others, we can substitute all values into the bound from Theorem 1, obtaining the bound from our claim.

One thing remaining is to evaluate the factor related to the cardinality constraints in our final formula and make sure it is a constant with respect to both $n$ and the counting parameters.
Following \cite{kuzelka21:wfomc-in-c2}, the factor is
\begin{align*}
    \alpha' &= \left(\sum_{i=1}^m arity(U^{R_i}) + arity(B^{R_i})\right)\\
    &\hspace{1cm} + \left( \sum_{i=1}^{m_1} arity(Q_i) + 1\right)\\
    &= 3m + m_1 + \sum_{i=1}^{m_1} arity(Q_i).
\end{align*}

\end{proof}

\section{Improving the Upper Bound}
\label{sec:new-ub}
Let us now inspect the bound from Theorem~\ref{th:c2-bound}.
Although it is polynomial in $n$, meaning that $\ctwo{}$ is, in fact, domain-liftable \citep{kuzelka21:wfomc-in-c2}, we can see that the number of valid cells (a part of the polynomial's degree) grows exponentially with respect to the counting parameters $k_i$.
In this section, we propose an improved encoding to the one from Lemma~\ref{lem:forall-existsK} which reduces said growth to a quadratic one.

The new encoding does not build on entirely new principles, instead, it takes the existing transformation and makes it more efficient.
As in Lemma~\ref{lem:forall-existsK}, we will describe the situation for dealing with one $\forall\exists^{=k}$-quantified subformula.
The procedure could easily be generalized to having $m\in\Nat$ such subformulas by repeating the process for each of them independently.

The most significant issue with the current encoding are the \emph{Skolemization} predicates $A_i/1$, which increase the number of valid cells exponentially with respect to $k$.
The new encoding will seek to constrain those predicates so that the increase is reduced.
Let us start with a formula
\begin{align}
    \Gamma = \varphi \wedge \forall x \exists^{=k} y\; \psi(x, y),\label{eq:forall-existsK}
\end{align}
where $\varphi\in\ccs{}$ and $\psi$ is quantifier-free.
Let us also consider the encoding of $\Gamma$ in $\ufoccs$, i.e.,
\begin{align*}
    \Gamma^* &= \varphi \wedge \left(\forall x \forall y\; R(x, y) \Leftrightarrow \psi(x, y)\right) \wedge (|R| = n \cdot k) \\
             &\wedge \left( \forall x \forall y\; R(x, y) \Leftrightarrow \bigvee_{i=1}^k f_i(x, y) \right)\\
             &\wedge \bigwedge_{1\leq i<j\leq k} \left(\forall x \forall y\; f_i(x, y) \Rightarrow \neg f_j(x, y)\right)\\
            &\wedge \bigwedge_{i=1}^k \left(\forall x \forall y\; \neg f_i(x, y) \vee A_i(x)\right),
\end{align*}
with fresh predicates $R/2, f_i/2$ and $A_i/1$ and weights $\overline{w}(A_i) = -1$ for all $i \in \left[k\right]$.

\subsection{Canonical Models}
The new encoding will leverage a concept that we call a \emph{canonical model}, which we gradually build in this subsection.

Let $\omega$ be a model of $\Gamma^*$ and $t\in\Dom$ be an arbitrary domain element.
Denote $\mathcal{A}^t \subseteq \left[k\right]$ the set of indices such that 
\begin{align*}
    \omega \models \bigwedge_{j\in\mathcal{A}^t} A_{j}(t) \wedge \bigwedge_{j\in\left[k\right]\setminus\mathcal{A}^t} \neg A_{j}(t).
\end{align*}
Now, let us transform $\omega$ into $\omega_t$, which will be another model of $\Gamma^*$.

First, we separate all atoms in $\omega$ (atoms true in $\omega$) without the predicates $f_i/2$ and $A_i/1$ into the set $\mathcal{R}_0$, atoms on $A_i/1$ not containing the constant $t$ into $\mathcal{R}_t^A$ and atoms on $f_i/2$ not containing the constant $t$ on the first position into $\mathcal{R}_t^f$.

Second, we define an auxiliary injective function $g_t: \mathcal{A}^t \mapsto \left[k\right]$ mapping elements of $\mathcal{A}^t$ to the first $|\mathcal{A}^t|$ positive integers, i.e.,
\begin{align}
\label{def:g}
    g_t(j) = | \{j' \in \mathcal{A}^t\; |\; j' \leq j \} |
\end{align}
Third, we define a set of atoms $\mathcal{A}^{new}_t$ such that
\begin{align*}
    \mathcal{A}^{new}_t = \{ A_{{g_t(j)}}(t)\; |\; \omega \models A_{j}(t)\},
\end{align*}
i.e., we accumulate the Skolemization atoms with the constant $t$ that are satisfied in $\omega$ and we change their indices to the first $|\mathcal{A}^t|$ positive integers.
Next, we do a similar thing for atoms with $f_i$'s and the constant $t$ at the first position.
Note that we use the same function $g_t$ that was defined (with respect to $\mathcal{A}^t$) in Equation~\ref{def:g}.
Hence, we construct a set
\begin{align*}
    \mathcal{F}^{new}_t = \{ f_{{g_t(j)}}(t,t') \; |\; \omega \models f_{j}(t, t'), t' \in \Dom\}.
\end{align*}
Finally, we are ready to define the new model of $\Gamma^*$ as
\begin{align*}
    \omega_t = \mathcal{R}_0 \cup \mathcal{A}^{new}_t \cup \mathcal{F}^{new}_t \cup \bigcup_{t' \in \Delta\setminus\{t\}} \left(\mathcal{R}_{t'}^A \cup \mathcal{R}_{t'}^f\right).
\end{align*}

\begin{lemma}
\label{lem:omega-t}
    For any $\omega \models \Gamma^*$ and any $t\in\Dom$, $\omega_t$ constructed as described above is another model of $\Gamma^*$.
\end{lemma}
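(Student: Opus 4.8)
The plan is to verify that $\omega_t$ satisfies every conjunct of $\Gamma^*$ by a case analysis that separates the "data" part (the predicates appearing in $\varphi$, $\psi$, and $R$) from the freshly introduced predicates $f_i$ and $A_i$. The key observation is that the transformation from $\omega$ to $\omega_t$ does exactly two things: it leaves all atoms not involving $t$ untouched (these sit in $\mathcal{R}_0$ together with the $\mathcal{R}^A_{t'}, \mathcal{R}^f_{t'}$ for $t' \neq t$), and it permutes the indices of the $A_i(t)$ atoms and the $f_i(t,t')$ atoms via the injection $g_t$. In particular, $\omega_t$ and $\omega$ agree on every atom whose predicate is not among $f_1,\dots,f_k,A_1,\dots,A_k$, so $\varphi$, the defining sentence $\forall x\forall y\, R(x,y)\Leftrightarrow\psi(x,y)$, and the cardinality constraint $(|R|=n\cdot k)$ are all immediately preserved.

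First I would handle the sentence $\forall x\forall y\, R(x,y)\Leftrightarrow\bigvee_{i=1}^k f_i(x,y)$. For pairs $(t_1,t_2)$ with $t_1\neq t$ nothing changed, so I only need to check rows with $t$ in the first argument. Here the point is that $g_t$ is a bijection from $\mathcal{A}^t$ onto $[|\mathcal{A}^t|]$, hence re-indexing is a permutation of the set of satisfied $f_j(t,t')$ atoms for each fixed $t'$; in particular $\bigvee_i f_i(t,t')$ has the same truth value in $\omega_t$ as $\bigvee_i f_i(t,t')$ in $\omega$, namely $R(t,t')$, which is unchanged. Here I must be slightly careful: $g_t$ is only defined on $\mathcal{A}^t$, so I need the sub-claim that whenever $\omega\models f_j(t,t')$ for some $t'$, then in fact $j\in\mathcal{A}^t$, i.e.\ $\omega\models A_j(t)$ — but this is exactly the Skolemization clause $\forall x\forall y\, \neg f_j(x,y)\vee A_j(x)$ instantiated at $(t,t')$. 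So $\mathcal{F}^{new}_t$ is well defined and is genuinely a re-indexed copy of the $t$-rows of $\omega$.

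Next I would check the mutual-exclusivity clauses $\bigwedge_{i<j}\forall x\forall y\, f_i(x,y)\Rightarrow\neg f_j(x,y)$ and the Skolemization clauses $\bigwedge_i\forall x\forall y\,\neg f_i(x,y)\vee A_i(x)$. Exclusivity for rows with first argument $\neq t$ is inherited; for the $t$-rows, if $\omega_t\models f_a(t,t')$ and $\omega_t\models f_b(t,t')$ with $a\neq b$, then by construction $a=g_t(j_1), b=g_t(j_2)$ with $\omega\models f_{j_1}(t,t'), f_{j_2}(t,t')$ and $j_1\neq j_2$ (injectivity of $g_t$), contradicting exclusivity in $\omega$. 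For the Skolemization clause, the only rows that could fail are those with first argument $t$: if $\omega_t\models f_a(t,t')$ then $a=g_t(j)$ with $\omega\models f_j(t,t')$, hence $\omega\models A_j(t)$ (clause in $\omega$), hence $A_{g_t(j)}(t)=A_a(t)\in\mathcal{A}^{new}_t$, i.e.\ $\omega_t\models A_a(t)$. This uses the fact that $\mathcal{A}^{new}_t$ is the $g_t$-image of $\{A_j(t):\omega\models A_j(t)\}$, consistent with the $g_t$-image used for $\mathcal{F}^{new}_t$.

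The main obstacle, and the step I would spend the most care on, is making the bookkeeping airtight: one must argue that $\omega_t$ as a \emph{set} is well-formed, that $\mathcal{A}^{new}_t$ and $\mathcal{F}^{new}_t$ use the \emph{same} $g_t$ so that an $A_a(t)$ witness always exists for every $f_a(t,\cdot)$, and that the pieces $\mathcal{R}_0,\mathcal{A}^{new}_t,\mathcal{F}^{new}_t,\{\mathcal{R}^A_{t'},\mathcal{R}^f_{t'}\}_{t'\neq t}$ partition (rather than merely cover) the atoms of $\omega_t$ over the relevant predicates, so that "unchanged for $x\neq t$" is literally true. Once that is set up, every conjunct of $\Gamma^*$ falls out of either "atoms away from $t$ are identical" or "$g_t$ permutes the $t$-rows", so I would finish by collecting these observations and concluding $\omega_t\models\Gamma^*$.
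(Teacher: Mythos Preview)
Your proposal is correct and follows essentially the same idea as the paper's proof, though you carry it out in far more detail. The paper's argument is a two-sentence symmetry observation: since $\Gamma^*$ is invariant under any simultaneous permutation of the indices of the $f_i$'s and $A_i$'s, and the passage from $\omega$ to $\omega_t$ is precisely such an index permutation applied to the $t$-rows, $\omega_t$ is again a model. Your case-by-case verification is the unpacking of exactly this symmetry; in particular, your careful sub-claim that every $j$ with $\omega\models f_j(t,t')$ lies in $\mathcal{A}^t$ (so that $g_t$ is defined where needed) is a detail the paper leaves implicit, and it is what allows one to extend $g_t$ to a genuine permutation of $[k]$ and invoke the symmetry cleanly.
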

\begin{proof}
    To prove the claim, it is sufficient to note that we can permute the indices $i$, and the sentence $\Gamma^*$ will remain the same.
    The transformation permutes the indices so that the atoms $A_i(t)$ that are satisfied have the lowest possible indices.
\end{proof}

Now, suppose that we take some model of $\Gamma^*$ and we repeatedly perform the transformation described above for all domain elements, i.e., for some ordering of the domain such as $\Dom=\{t_1, t_2, \ldots, t_n\}$, we construct $\omega_{t_1}$ from $\omega$, then we construct $\omega_{t_2}$ from $\omega_{t_1}$ and so on, until we obtain $\omega_{t_n}=\omega^*$.
Note that several models $\omega$ can lead to the same $\omega^*$. Thus, $\omega^*$ effectively induces an equivalence class. 

\begin{definition}
    A model $\omega^*$ constructed in the way described above is a \emph{canonical model} of $\Gamma^*$.
    Moreover, all models of $\Gamma^*$ that lead to the same canonical model are called \emph{A-equivalent}.\footnote{The letter ``A'' simply refers to the Skolemization predicates that we call $A_i$, although they could be called anything else.}
\end{definition}

\noindent A property of A-equivalent models will be useful in what follows.
We formalize it as another lemma.

\begin{lemma}
\label{lem:canon-models}
    Let $\omega^*$ be a canonical model of $\Gamma^*$.
    There are
    \begin{align}
        \prod_{t\in\Dom} \binom{k}{|\mathcal{A}^t|}\label{eq:canon_models}
    \end{align}
    models that are A-equivalent to $\omega^*$.
    Moreover, any two A-equivalent models have the same weight.
\end{lemma}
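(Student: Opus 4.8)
The plan is to prove the two claims in Lemma~\ref{lem:canon-models} separately: first the counting formula for the size of each A-equivalence class, and then the invariance of the weight within a class.

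For the counting claim, I would fix a canonical model $\omega^*$ and show that the models $\omega$ that are A-equivalent to it are in bijection with the choices, independently for each domain element $t$, of which $|\mathcal{A}^t|$ indices out of $[k]$ carry the satisfied atoms $A_i(t)$. The key observation, already recorded in the proof of Lemma~\ref{lem:omega-t}, is that the transformation $\omega \mapsto \omega_t$ permutes the indices $i$ so that the satisfied $A_i(t)$ atoms occupy the lowest indices, and it does so consistently for the paired $f_i(t,t')$ atoms via the \emph{same} map $g_t$. So to reconstruct an A-equivalent $\omega$ from $\omega^*$, one must, for each $t$ independently, pick a subset $S_t \subseteq [k]$ with $|S_t| = |\mathcal{A}^t|$ specifying where the satisfied $A$-atoms for $t$ go, and then relocate the $f$-atoms $f_i(t,t')$ accordingly along the order-preserving bijection $[|\mathcal{A}^t|] \to S_t$. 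I would check that (i) every such choice yields a genuine model of $\Gamma^*$ — this reuses the index-permutation-invariance of $\Gamma^*$ exactly as in Lemma~\ref{lem:omega-t}, but now applied per-element rather than globally, which is legitimate because the offending clauses ($\forall x\forall y\; f_i(x,y) \Rightarrow \neg f_j(x,y)$, the skolemization clauses, and $R \Leftrightarrow \bigvee_i f_i$) are symmetric in $i$ and the $R$-atoms and all non-$f,A$ atoms are untouched; and (ii) distinct choices of the tuple $(S_t)_{t\in\Dom}$ give distinct models, and every A-equivalent model arises this way — here I would argue that the canonicalization map reads off precisely the data $(\mathcal{A}^t, \{f$-successor multiset$\})$ that is invariant under the relocation, so the fibre is exactly $\prod_t \binom{k}{|\mathcal{A}^t|}$.

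For the weight claim, the point is that relocating atoms $A_i(t) \mapsto A_{g_t(j)}(t)$ and $f_i(t,t') \mapsto f_{g_t(j)}(t,t')$ neither changes the set of true non-$f,A$ atoms nor changes, for each $t$, the \emph{number} of true $A$-atoms at $t$ or the \emph{number} of true $f$-atoms with $t$ in the first position. Since $w(A_i) = 1$ and $\overline{w}(A_i) = -1$ are the same for every $i$ (and likewise $f_i$ all carry weight $1$), the product $\prod_{l\in\omega} w(\mathsf{pred}(l)) \prod_{l\in \text{HB}\setminus\omega}\overline{w}(\mathsf{pred}(l))$ depends only on the count of true atoms per predicate family, not on which index is used. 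I would make this explicit: the multiset of predicates occurring among true atoms is unchanged, hence so is the weight; in fact the transformation $\omega \mapsto \omega_t$ only shuffles atoms among equally-weighted predicates, so every A-equivalent model has the same weight as $\omega^*$.

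The main obstacle I expect is being careful about the second half of item (ii): showing that the bijection is \emph{exact}, i.e. that two different subset-tuples $(S_t)$ really do give non-equal possible worlds and that no A-equivalent model is missed. The subtlety is that the $f_i(t, \cdot)$ atoms and the $A_i(t)$ atoms must move in lockstep under the same $g_t$, so one has to confirm that the data actually stored in $\omega^*$ (after canonicalization) together with a choice of $S_t$ uniquely and freely determines $\omega$ — that there is no hidden coupling between different $t$'s (there is none, since $A_i$ is a unary predicate and the $f_i(t,t')$ atoms are indexed by their first argument $t$) and no constraint in $\Gamma^*$ that forbids some subset $S_t$ (there is none, again by symmetry in $i$). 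Once this independence-across-$\Dom$ and freedom-within-each-$t$ is nailed down, the product formula $\prod_{t\in\Dom}\binom{k}{|\mathcal{A}^t|}$ drops out immediately.
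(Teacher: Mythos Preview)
Your proposal is correct and follows essentially the same approach as the paper, which argues briefly that the models mapping to $\omega^*$ correspond, for each $t$ independently, to the $\binom{k}{|\mathcal{A}^t|}$ ways of choosing which indices carry the satisfied $A_i(t)$ atoms, and that the weight is preserved because the transformation does not alter the counts of true atoms. Your write-up is more careful than the paper's sketch; one small slip to fix is the phrase ``the multiset of predicates occurring among true atoms is unchanged'' --- that multiset \emph{can} change (an atom may move from $A_4$ to $A_1$), but your subsequent and correct observation that the shuffle stays within equally-weighted predicate families is what actually carries the weight claim.
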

\begin{proof}
    It follows from the proof of Lemma~\ref{lem:omega-t} that the atoms of $A_i$'s that are true in $\omega^*$ have the lowest possible indices.
    Hence, for a fixed $t\in\Dom$, the number of models that lead to $\omega^*$ depends only on the number of ways that we can split the indices between the satisfied and the unsatisfied atoms.
    There are $k$ indices to choose from, and for a fixed assignment, the set $\mathcal{A}^t$ holds the indices of satisfied $A_j(t)$'s, which leads directly to Equation~\ref{eq:canon_models}.
    % \followup{$|\mathcal{A}^t|$ must be positive} leading directly to Equation~\ref{eq:canon_models}.

    The second claim follows from the fact that the transformation of any $\omega$ into $\omega^*$ does not change the number of true atoms of any given predicate.
\end{proof}

\subsection{The New Encoding}
In this subsection, we use the concept of canonical models and observations from Lemma~\ref{lem:canon-models} to devise an encoding that counts only canonical models.
By weighing them accordingly, we then recover the correct weighted model count of the original problem.

Let us start by introducing new fresh predicates $C_i/1$.
We will want an atom $C_j(t)$ to be true if and only if $A_1(t), A_2(t), \ldots A_j(t)$ were the only Skolemization atoms satisfied in a model of $\Gamma^*$ (the old encoding).
Intuitively, the predicates $C_i$ will constrain the models to only correspond to canonical models.
Thus, we define $C_i$'s as
\begin{align*}
    \Gamma_C = \bigwedge_{j=0}^{k} &\forall x\; C_{j}(x) \Leftrightarrow
    \left( \bigwedge_{h\in\left[j\right]} A_{h}(x) \wedge \bigwedge_{h\in\left[k\right]\setminus\left[j\right]} \neg A_{h}(x) \right).
\end{align*}
The new encoding can then be described as
\begin{align}
    \Gamma^{new} = \Gamma^* \wedge \Gamma_C \wedge \left( \forall x\; \bigvee_{j=0}^{k} C_{j}(x) \right).\label{eq:new-encoding}
\end{align}
The final disjunction was added to make sure that we only count canonical models (at least one of $C_i$'s is satisfied).

One more thing to consider is the weights for the predicates $C_i$.
A particular atom $C_j(t)$ is satisfied in a model $\omega$ if the model is a canonical model corresponding to $C_j(t)$ according to the sentence $\Gamma_C$.
Following Lemma~\ref{lem:canon-models}, such a model represents an entire set of A-equivalent models, each with the same weight.
Hence, if $C_j(t)$ is satisfied, we should count the model weight of $\omega$ as many times as how many A-equivalent models to $\omega$ there are.
If $C_j(t)$ is, on the other hand, unsatisfied, we want to keep the weight the same.
Therefore, we set $w(C_i)=\binom{k}{i}$ and $\overline{w}(C_i)=1$ for all $i\in\left[k\right]$.
\begin{lemma}
For any sentence $\Gamma\in\ctwo{}$ such as the one in Equation~\ref{eq:forall-existsK} and $\Gamma^{new}\in\ufoccs{}$ obtained from $\Gamma$ as in Equation~\ref{eq:new-encoding}, for any $n\in\Nat$ and any weights $(\w)$ extended for predicates $C_i$ as above, it holds that
    \begin{align*}
    \textup{WFOMC}(\Gamma,n,\w) = \frac{1}{(k!)^n}\textup{WFOMC}(\Gamma^{new},n,\w). 
    \end{align*}
\end{lemma}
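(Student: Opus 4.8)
The goal is to show that the new encoding $\Gamma^{new}$ counts exactly the canonical models of $\Gamma^*$, each weighted by the size of its A-equivalence class, so that after dividing by $(k!)^n$ we recover $\textup{WFOMC}(\Gamma, n, \w)$. I would chain together the facts already established: Lemma~\ref{lem:forall-existsK} (together with the Skolemization of the existential clauses via Lemma~\ref{lem:skolem}) gives
\[
\textup{WFOMC}(\Gamma, n, \w) = \frac{1}{(k!)^n}\textup{WFOMC}(\Gamma^*, n, \w),
\]
so it suffices to prove $\textup{WFOMC}(\Gamma^*, n, \w) = \textup{WFOMC}(\Gamma^{new}, n, \w)$ under the extended weights with $w(C_i)=\binom{k}{i}$, $\overline{w}(C_i)=1$.

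\textbf{Key steps.} First I would observe that adding $\Gamma_C$ alone (without the final disjunction and without the special weights) is weight-preserving: each $C_j$ is defined by a biconditional over the already-present $A_h$ atoms, so for every model $\omega$ of $\Gamma^*$ there is exactly one extension to the $C_j$ predicates, and with default weights $1$ this extension contributes factor $1$. Second, I would analyze the effect of the extra clause $\forall x\, \bigvee_{j=0}^{k} C_j(x)$: in the unique extension, $C_j(t)$ is true for some $j$ iff the set of satisfied $A_h(t)$ is exactly $\{A_1(t),\dots,A_j(t)\}$, i.e. iff the $A$-atoms at $t$ already have the ``lowest indices'' form. So the models of $\Gamma^* \wedge \Gamma_C \wedge (\forall x\,\bigvee_j C_j(x))$ are precisely (the $C$-extensions of) the canonical models of $\Gamma^*$ in the sense of the preceding definition. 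Third, I would bring in the weights: for a canonical model $\omega^*$, at each $t$ exactly one $C_{|\mathcal{A}^t|}(t)$ is true (the index equals the number of satisfied $A$-atoms at $t$, since they are $A_1,\dots,A_{|\mathcal{A}^t|}$), so the product of $C$-weights over the model is $\prod_{t\in\Dom}\binom{k}{|\mathcal{A}^t|}$ — and $\overline{w}(C_i)=1$ ensures the absent $C$-atoms contribute nothing. By Lemma~\ref{lem:canon-models} this is exactly the number of models of $\Gamma^*$ that are A-equivalent to $\omega^*$, and each of those has the same weight as $\omega^*$ (also by Lemma~\ref{lem:canon-models}, noting that $\Gamma^*$ itself does not mention the $C_i$, so the weight of a model of $\Gamma^*$ is unchanged). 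Putting this together:
\begin{align*}
\textup{WFOMC}(\Gamma^{new}, n, \w) &= \sum_{\omega^* \text{ canonical}} W(\omega^*)\cdot\prod_{t\in\Dom}\binom{k}{|\mathcal{A}^t|} \\
&= \sum_{\omega^* \text{ canonical}}\ \sum_{\omega \text{ A-equiv.\ to }\omega^*} W(\omega) = \textup{WFOMC}(\Gamma^*, n, \w),
\end{align*}
where $W(\cdot)$ denotes the weight under $(\w)$ restricted to the vocabulary of $\Gamma^*$, and the last equality holds because every model of $\Gamma^*$ is A-equivalent to exactly one canonical model (the A-equivalence classes partition $\mathcal{MOD}(\Gamma^*,n)$). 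Combining with the Lemma~\ref{lem:forall-existsK} identity finishes the proof.

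\textbf{Main obstacle.} The bookkeeping around weights is where care is needed: I must be explicit that the newly introduced $C_i$ are absent from the vocabulary of $\Gamma^*$ so that restricting a model of $\Gamma^{new}$ to the old vocabulary gives a genuine model of $\Gamma^*$ with the same old-predicate weight, and that the $C$-extension of any model of $\Gamma^*$ satisfying $\Gamma_C$ is unique (so there is no overcounting from multiple $C$-assignments). The other delicate point is matching the index: one must verify that in a canonical model the satisfied $A$-atoms at $t$ are precisely $A_1(t),\dots,A_{|\mathcal{A}^t|}(t)$ — this is exactly what the construction in Lemma~\ref{lem:omega-t} guarantees via the function $g_t$ — so that the unique true $C$-atom at $t$ is $C_{|\mathcal{A}^t|}(t)$ and its weight $\binom{k}{|\mathcal{A}^t|}$ lines up with the count in Equation~\ref{eq:canon_models}. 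Once these two points are pinned down, the summation rearrangement is routine.
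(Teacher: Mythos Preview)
Your proposal is correct and follows essentially the same approach as the paper's own proof: restrict to canonical models via $\Gamma_C$ and the final disjunction, use Lemma~\ref{lem:canon-models} to identify the $C$-weights with the sizes of the A-equivalence classes, and then invoke the $(k!)^{-n}$ factor from Lemma~\ref{lem:forall-existsK}. Your write-up is in fact more explicit than the paper's about the weight bookkeeping (uniqueness of the $C$-extension, the index-matching $j=|\mathcal{A}^t|$, and the partition of $\mathcal{MOD}(\Gamma^*,n)$ into A-equivalence classes), but the underlying argument is the same.
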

\begin{proof}
$\Gamma_C$ encodes each canonical model of $\Gamma^*$ using a single $C_j$ predicate indicating that for any $t\in\Dom$, $A_1(t), A_2(t), \ldots, A_j(t)$ are satisfied.
For all $i\in\{0,1,\ldots,k\}$, $\Gamma_C$ allows at most one $C_i(t)$ to be true for any $t$.
The disjunction in $\Gamma^{new}$ requires at least one $C_i(t)$ to be true.
Combined, each model of $\Gamma^{new}$ is one of the canonical models of $\Gamma^*$.
Hence, we are only counting canonical models of $\Gamma^*$.

It follows from Lemma~8 that for a canonical model $\omega$ represented by $C_i$, there are $\binom{k}{i}$ models of $\Gamma^*$ that are A-equivalent to $\omega$.
Hence, when we have one canonical model, it represents $\binom{k}{i}$ models that must all be considered to obtain the correct weighted model count, leading to the weight $w(C_i)=\binom{k}{i}$ for each $C_i$.

So far, we have shown that we are counting models of $\Gamma^*$.
However, we must still deal with the overcounting introduced by applying Lemma~4 to turn $\Gamma$ into $\Gamma^*$.
Therefore, we use the same factor as in Lemma~4 to only count models of $\Gamma$.
\end{proof}

\subsection{The Improved Upper Bound}
With the new encoding, we can decrease the number of valid cells of a sentence obtained after applying Lemma~\ref{lem:forall-existsK} to a \ctwo{} sentence.
Hence, we can improve the upper bound on time complexity of computing WFOMC over \ctwo{}.
Using the same notation as in Theorem~\ref{th:c2-bound}, we can formulate Theorem~\ref{th:new-c2-bound}.
\begin{theorem}
    \label{th:new-c2-bound}
    Consider an arbitrary $\ctwo{}$ sentence rewritten as $\varphi=\Gamma\wedge\Phi$, where
    $\Gamma\in\ccs$ and
    $$\Phi = \bigwedge_{i=1}^{m}\left(\forall x\; P_i(x) \Leftrightarrow \left(\exists^{=k_i} y\; R_i(x, y)\right)\right).$$
    For any $n\in\Nat$ and any fixed weights $(\w)$, $\textup{WFOMC}(\varphi, n, \w)$ can be computed in time $$\bigO{n^{\alpha'}\cdot n^{1 + p \cdot \prod_{i=1}^m \beta \cdot \gamma'(k_i)}},$$ where $p$ is the number of valid cells of $\Gamma$ and $\gamma'(k) = \bigO{k^2 + 2k + 1}$.
\end{theorem}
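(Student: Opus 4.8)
The plan is to mirror the proof of Theorem~\ref{th:c2-bound} almost verbatim, replacing only the one step where the old encoding of Lemma~\ref{lem:forall-existsK} was used by the new encoding from Equation~\ref{eq:new-encoding}, and then re-counting the valid cells introduced by that step. As in the proof of Theorem~\ref{th:c2-bound}, I would first write $\Gamma = \psi \wedge \bigwedge_{i=1}^{m_1}(|Q_i|=l_i)$ with $\psi\in\fotwo$, take one sentence $\mu_i = (\forall x\; P_i(x) \Leftrightarrow (\exists^{=k_i} y\; R_i(x,y)))$, and follow Algorithm~\ref{algo:c2-to-ccs}: apply Lemma~\ref{lem:existsK}/Lemma~\ref{lem:A-or-existsK} together with the distributive-law factoring exactly as before. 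That part is unchanged, so it still contributes the same constant factor $\beta = 8$ per processed sentence $\mu_i$, and the cardinality-constraint bookkeeping still yields the same constant $\alpha' = 3m + m_1 + \sum_{i=1}^{m_1} arity(Q_i)$, which is independent of $n$ and of the $k_i$. The only thing that must be redone is the effect of handling the residual $\forall x \exists^{=k_i} y\; B^{R_i}(x,y)$ subformula: in Theorem~\ref{th:c2-bound} this was done via the old Lemma~\ref{lem:forall-existsK} plus Skolemization (Lemma~\ref{lem:skolem}), producing the factor $\gamma(k_i) = (k_i+2)2^{k_i-1}$; here it is done via $\Gamma^{new}$, and I must show the resulting factor is $\gamma'(k_i) = \bigO{k_i^2 + 2k_i + 1}$.

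The heart of the argument is therefore a cell count for $\Gamma^{new} = \Gamma^* \wedge \Gamma_C \wedge (\forall x\; \bigvee_{j=0}^{k} C_j(x))$, analogous to the proof of Lemma~\ref{lem:bound-forall-existsK}. I would again split the valid cells of the base formula into those containing $E_i(x,x)$ (equivalently $R(x,x)$, hence some $f_j(x,x)$) negatively and those containing it positively. In the negative case all $f_j(x,x)$ are false, so all Skolemization clauses $\neg f_j(x)\vee A_j(x)$ are satisfied and the $A_j(x)$ would be free — but now $\Gamma_C$ together with the disjunction $\bigvee_{j=0}^k C_j(x)$ forces the truth pattern of $A_1(x),\dots,A_k(x)$ to be one of the $k+1$ "downward-closed prefixes'' ($A_1\wedge\dots\wedge A_j$ true, the rest false), and fixes the corresponding unique $C_j(x)$. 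So instead of a factor $2^{k_i}$ we get a factor of only $k_i+1$. In the positive case, exactly one $f_t(x,x)$ is true, giving a factor $k_i$ as before, and that forces $A_t(x)$ true; the remaining $A_j(x)$, $j\ne t$, were previously free ($2^{k_i-1}$), but again $\Gamma_C$ and the disjunction restrict them so that the full pattern is a valid prefix containing index $t$ — there are at most $k_i+1$ such prefixes (one for each length $0,\dots,k_i$, of which only those of length $\ge t$ contain $t$), and in fact at most $k_i$ of them are consistent, so the factor from the $A$'s and $C$'s is $\bigO{k_i}$, giving $\bigO{k_i^2}$ for this partition. Upper-bounding both partitions by the total cell count $p$ of the base formula, a single application contributes $\bigO{p\cdot(k_i+1) + p\cdot k_i^2} = \bigO{p\cdot(k_i^2 + 2k_i + 1)} = \bigO{p\cdot\gamma'(k_i)}$. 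As in Lemma~\ref{lem:bound-forall-existsK}, I would note the extra predicates $C_i/1$ added by $\Gamma_C$ have their truth values completely determined by the $A_j$'s (via $\Gamma_C$), so they introduce no additional freedom beyond what is already counted, confirming $\gamma'$ is a genuine upper bound.

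Finally I would assemble the pieces exactly as in Theorem~\ref{th:c2-bound}: encoding each $\mu_i$ independently multiplies the cell count by $\beta\cdot\gamma'(k_i)$, so starting from $p$ valid cells of $\Gamma$ the final \ufoccs{} sentence has $\bigO{p\cdot\prod_{i=1}^m \beta\cdot\gamma'(k_i)}$ valid cells; plugging this, together with the unchanged $\alpha'$, into the \ufoccs{} bound of Theorem~\ref{th:fo+cc-tractable} gives $\bigO{n^{\alpha'}\cdot n^{1 + p\cdot\prod_{i=1}^m \beta\cdot\gamma'(k_i)}}$. Correctness of the WFOMC value itself is not re-proved here — it is exactly the content of the lemma preceding this theorem, which states $\textup{WFOMC}(\Gamma,n,\w) = \frac{1}{(k!)^n}\textup{WFOMC}(\Gamma^{new},n,\w)$, applied once per $\mu_i$. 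The main obstacle I anticipate is the careful case analysis in the positive-$f$ case: one must argue precisely which downward-closed prefixes of $(A_1,\dots,A_{k_i})$ remain consistent once a specific $A_t$ is forced true and one $C_j$ must hold, to be sure the count is $\bigO{k_i}$ rather than something larger; getting a clean, defensible constant there (and making sure nothing from the interaction of $\Gamma_C$ with the other conjuncts of $\Gamma^*$ reintroduces exponential blowup) is the delicate part, whereas the rest is bookkeeping inherited from Theorem~\ref{th:c2-bound}.
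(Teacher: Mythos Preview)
Your proposal is correct and follows essentially the same approach as the paper: defer to the proof of Theorem~\ref{th:c2-bound} for $\alpha'$ and $\beta$, then redo the cell count of Lemma~\ref{lem:bound-forall-existsK} with the constraints from $\Gamma_C$ and the disjunction $\bigvee_j C_j(x)$ forcing at most $k+1$ admissible $A$-patterns in place of the exponential. The only minor difference is that the paper simply upper-bounds the positive-$f$ case by the same factor $(k_i+1)$ rather than your slightly sharper $\bigO{k_i}$, obtaining $p\cdot(k_i+1) + p\cdot k_i\cdot(k_i+1) = p\cdot(k_i+1)^2$ directly, which is the same $\gamma'(k_i)=\bigO{k_i^2+2k_i+1}$.
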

\begin{proof}
    Let us derive $\gamma'$. All other values can be derived identically as in the proof of Theorem~\ref{th:c2-bound}.

    Let us return to the sentence from the proof of Lemma~\ref{lem:bound-forall-existsK}, i.e, $$\Gamma = \bigwedge_{i=1}^m \left( \forall x \forall y\; E_i(x, y) \Rightarrow E_i(y, x) \right) \wedge
        \bigwedge_{i=1}^m \left( \forall x \exists^{=k_i} E_i(x, y)\right).$$
    As we already know, sentence $\Gamma$ causes the largest increase in the number of valid cells.
    
    With the old encoding, most of the truth values of the Skolemization atoms with predicates $A_i$ were unconstrained, leading to an exponential blowup.
    Due to Equation~\ref{eq:new-encoding}, that is no longer the case.
    The sentence $\Gamma_C$ forces $C_j(t)$ to be true if and only if $A_1(t), A_2(t), \ldots, A_j(t)$ are true and all other $A_{i}(t)$ with $j < i \leq k$ are false.
    Hence, if $C_{j'}(t)$ is true, then all other $C_j(t)$ with $j' \neq j$ are false.
    Moreover, at least one $C_i(t)$ must be true due to the final disjunction in Equation~\ref{eq:new-encoding}.

    Therefore, we only have $(k+1)$ possibilities for assigning truth values to atoms with $C_i$ predicates.
    The truth values of atoms with $A_i$'s are then directly determined without any more degrees of freedom.
    Hence, for both cases considered in the proof of Lemma~\ref{lem:bound-forall-existsK}, we receive a factor $(k+1)$ instead of the exponential.
    Thus, we obtain $$\gamma' = \bigO{p \cdot (k+1) + p \cdot k \cdot (k+1)} \in \bigO{p \cdot (k^2 + 2k + 1)}.$$
\end{proof}

\section{Experiments}
\label{sec:experiments}
In this section, we support our theoretical findings by providing time measurements for various WFOMC computations using both the old and the new \ctwo{} encoding.
We also provide tables comparing the number of valid cells $p$, which determines the polynomial degree, giving a more concrete idea of the speedup provided by the new encoding.
Last, but not least, we provide an experiment performing lifted inference over a Markov Logic Network \citep{richardson-domingos06:mln} defined using the language of \ctwo{} extended with the linear order axiom.

For all of our experiments, we used \emph{FastWFOMC.jl},\footnote{\url{https://github.com/jan-toth/FastWFOMC.jl}} an open source Julia implementation of the FastWFOMC algorithm \citep{bremen-kuzelka21:fast-wfomc}, which is arguably the state-of-the-art, reported by its authors to outperform the first approach to computing WFOMC in a lifted manner, i.e, ForcLIFT,%
\footnote{\url{https://dtaid.cs.kuleuven.be/wfomc}}
which is based on knowledge compilation \citep{broeck-etal11:knowledge-compilation}.
Apart from time measurements for \ctwo{} sentences, we also inspect sentences from one of the domain-liftable \ctwo{} extensions, namely \ctwo{} with the linear order axiom \citep{toth-kuzelka23:linear-order}.

Most of our experiments were performed in a single thread on a computer with an AMD Ryzen 5 7500F CPU running at 3.4GHz and having 32 GB RAM.
Problems containing the linear order axiom, which have considerably higher memory requirements, were solved using a machine with AMD EPYC 7742 CPU running at 2.25GHz and having 512 GB of RAM.

\subsection{Performance Measurements}
In this section, we present observed running times for several problems specified using the language of \ctwo{} or its extension by the linear order axiom.
Most of our experiments show simply counting the number of graphs with some specified properties.
The advantage of using counting problems is that we can easily check the obtained results using the On-Line Encyclopedia of Integer Sequences \citep{oeis}.
To perform counting as opposed to weighted counting, we simply set the weights (both positive and negative) of all predicates in the input formula to one.

\subsubsection{Counting $k$-regular Graphs}
First, consider a \ctwo{} sentence encoding $k$-regular undirected graphs without loops, i.e.,
\begin{align*}
    \Gamma_1 &= \left(\forall x\; \neg E(x, x)\right) \wedge \left(\forall x \forall y\; E(x, y) \Rightarrow E(y, x)\right)\\ 
    &\wedge \left(\forall x \exists^{=k} y\; E(x, y)\right).
\end{align*}

Figure~\ref{fig:k-regs} shows the measured running times of $\textup{WFOMC}(\Gamma_1, n, 1, 1)$ for $k\in\{3, 4, 5\}$.
As one can observe, the new encoding surpasses the old in each case.
The difference may not seem as distinct for $k=3$ compared to $k=4$ or $k=5$, but it is still substantial.
For one, runtime for $n=51$ already exceeded runtime of $1000$ seconds in the case of the old encoding, whereas the new encoding did not reach that value even for $n=70$.
Additionally to the figure, Table~\ref{tab:k-regs} shows the number of valid cells $p_{old}$ produced by the old encoding of $\Gamma_1$ into \ufoccs{} and $p_{new}$ produced by the new one.
As one can observe, e.g., for $k=5$, the new encoding reduces the runtime from $\bigO{n^{33}}$ to $\bigO{n^{7}}$.

\vfill

\begin{table}[ht]  
    \centering
    \begin{tabular}{c|c|c|c}
         $k$ & 3 & 4 & 5 \\\hline
         $p_{old}$ & 8 & 16 & 32\\
         $p_{new}$ & 4 & 5 & 6
    \end{tabular}
    \caption{Number of valid cells for $k$-regular graphs}
    \label{tab:k-regs}
\end{table}

\vfill

\begin{figure*}[t]
     \centering
     \begin{subfigure}[b]{0.32\linewidth}
         \centering
         \includegraphics[width=\textwidth]{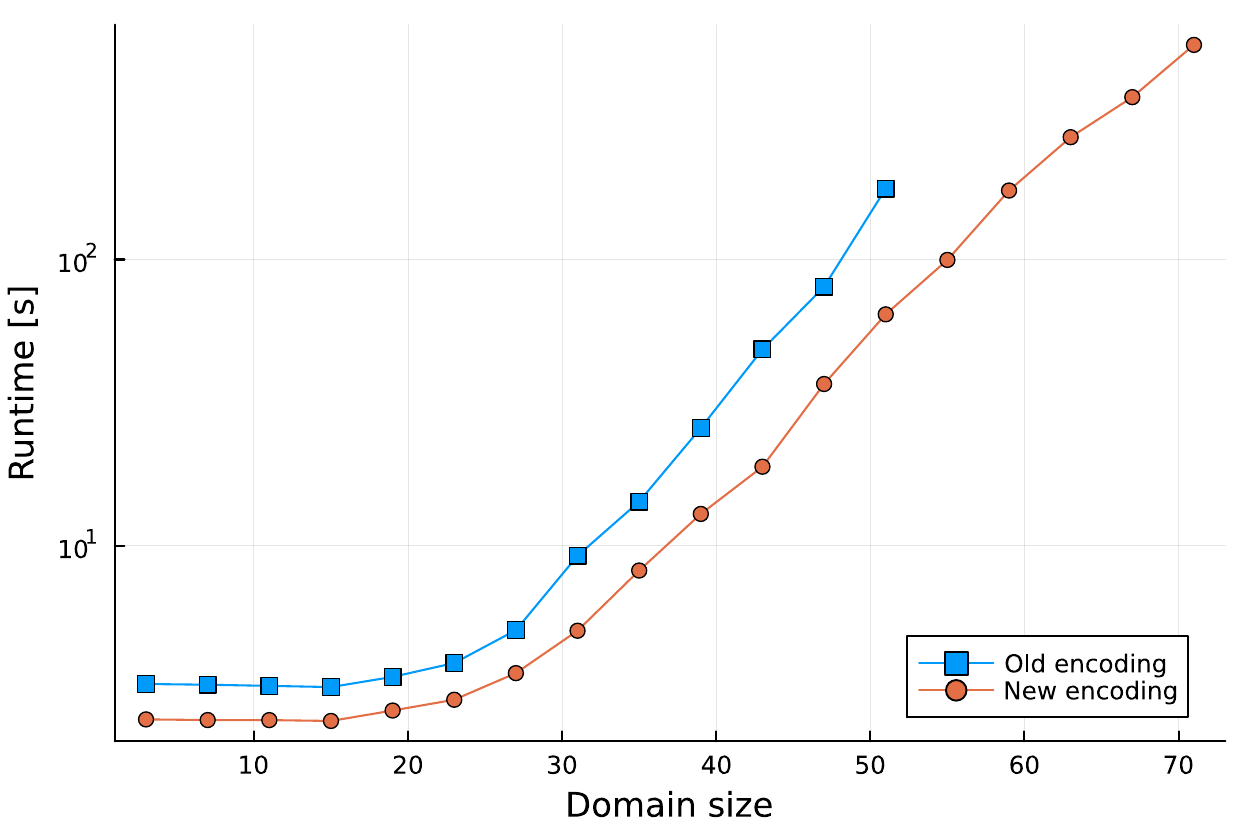}
         \caption{$3$-regular}
         \label{fig:3-reg}
     \end{subfigure}
     \begin{subfigure}[b]{0.32\linewidth}
         \centering
         \includegraphics[width=\textwidth]{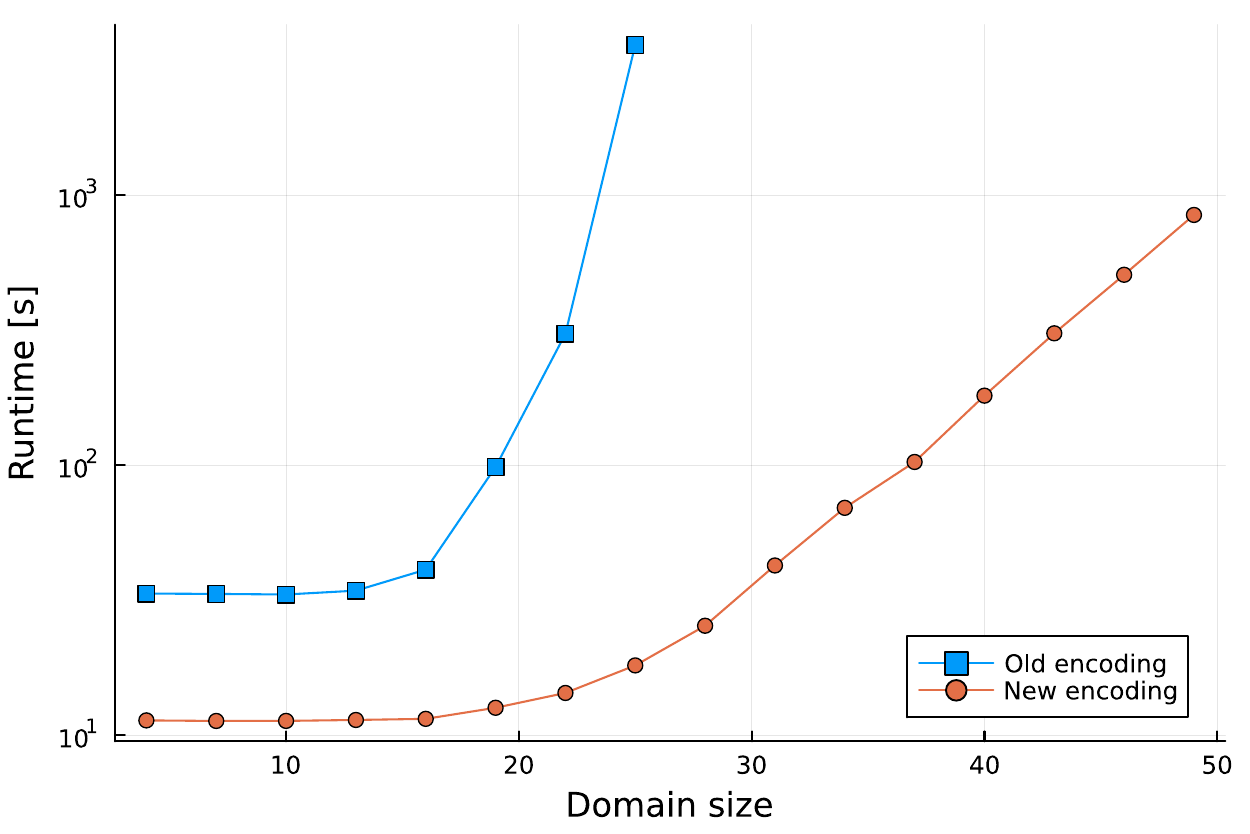}
         \caption{$4$-regular}
         \label{fig:4-reg}
     \end{subfigure}
     \begin{subfigure}[b]{0.32\linewidth}
         \centering
         \includegraphics[width=\textwidth]{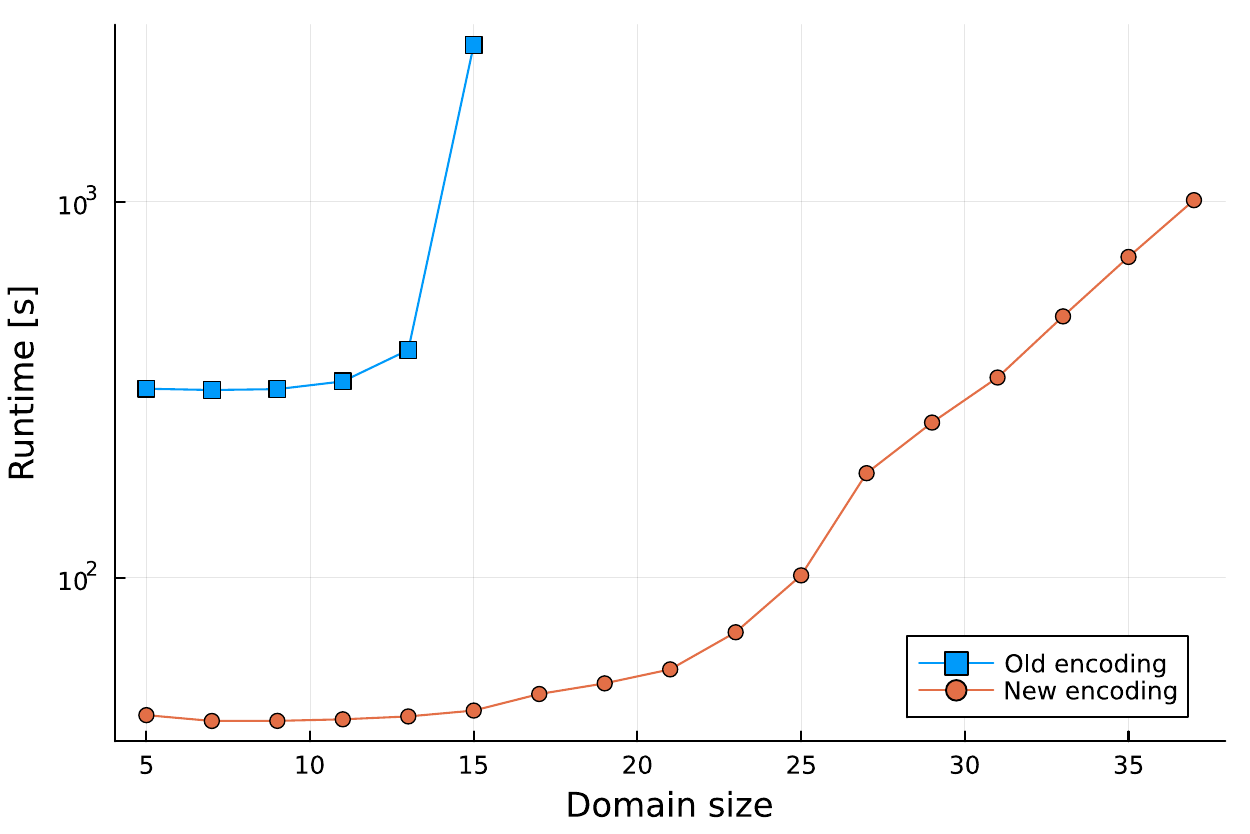}
         \caption{$5$-regular}
         \label{fig:5-reg}
     \end{subfigure}
    \caption{Runtime for counting $k$-regular graphs}
    \label{fig:k-regs}
\end{figure*}

\newpage
\subsubsection{Counting $k$-regular $l$-colored Graphs}
Next, we can extend the $k$-regular graph definition to also include a coloring by $l$ colors, effectively obtaining a logical description of $k$-regular $l$-colored graphs.%
\footnote{Note that $l$-colored graphs are graphs along with a coloring using at most $l$ colors.
That is different from $l$-colorable graphs which is a set of graphs that can be colored using $l$ colors.}
Sentence $\Gamma_2$ encodes such graphs.
\begin{align*}
    \Gamma_2 &= \left(\forall x\; \neg E(x, x)\right) \wedge \left(\forall x \forall y\; E(x, y) \Rightarrow E(y, x)\right)\\ 
    &\wedge \left(\forall x \exists^{=k} y\; E(x, y)\right)\\
    &\wedge \left(\forall x\; \left(\bigvee_{i=1}^l C_i(x)\right)\right) \wedge \bigwedge_{1\leq i < j \leq l} \left(\forall x\; C_i(x) \Rightarrow \neg C_j(x)\right)\\ 
    &\wedge \left(\forall x \forall y\; E(x, y) \Rightarrow \left(\bigwedge_{i=1}^l \neg \left(C_i(x) \wedge C_i(y)\right)\right)\right)\\
\end{align*}

The runtime measurements of $\textup{WFOMC}(\Gamma_2, n, 1, 1)$ are available in Figure~\ref{fig:kreg-lcol}.
Similarly to the previous case, Table~\ref{tab:kreg-lcols} shows the number of valid cells for various scenarios, even for some that are not included in Figure~\ref{fig:kreg-lcol} due to their computational demands.

\vfill

\begin{table}[ht]
    \hfill
    \begin{subtable}[ht]{0.45\linewidth}
    \begin{tabular}{c|c|c|c}
         \diagbox{$l$}{$k$} & 3 & 4 & 5 \\\hline
            2 & 16 & 32 & 64\\
            3 & 24 & 48 & 96\\
            4 & 32 & 64 & 128\\
    \end{tabular}
    \caption{Old encoding}
    \end{subtable}
    \hfill
    \begin{subtable}[ht]{0.45\linewidth}
    \begin{tabular}{c|c|c|c}
         \diagbox{$l$}{$k$} & 3 & 4 & 5 \\\hline
            2 & 8 & 10 & 12\\
            3 & 12 & 15 & 18\\
            4 & 16 & 20 & 24\\
    \end{tabular}
    \caption{New encoding}
    \end{subtable}
    \hfill
    \caption{Number of valid cells for $k$-regular $l$-colored graphs}
    \label{tab:kreg-lcols}
\end{table}

\vfill

\begin{figure}[ht]
     \centering
     \begin{subfigure}[b]{0.32\linewidth}
         \centering
         \includegraphics[width=\textwidth]{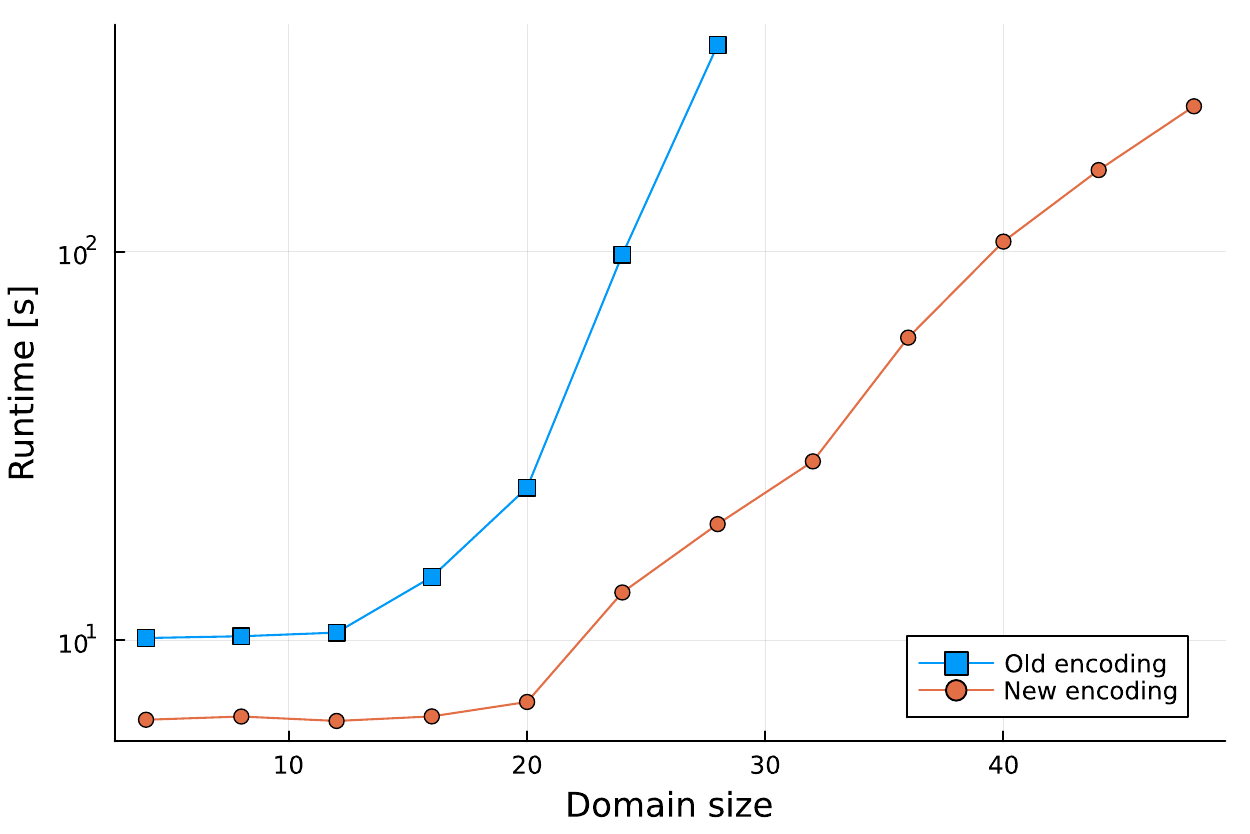}
         \caption{$3$-regular $2$-colored}
         \label{fig:3reg-2col}
     \end{subfigure}
     \begin{subfigure}[b]{0.32\linewidth}
         \centering
         \includegraphics[width=\textwidth]{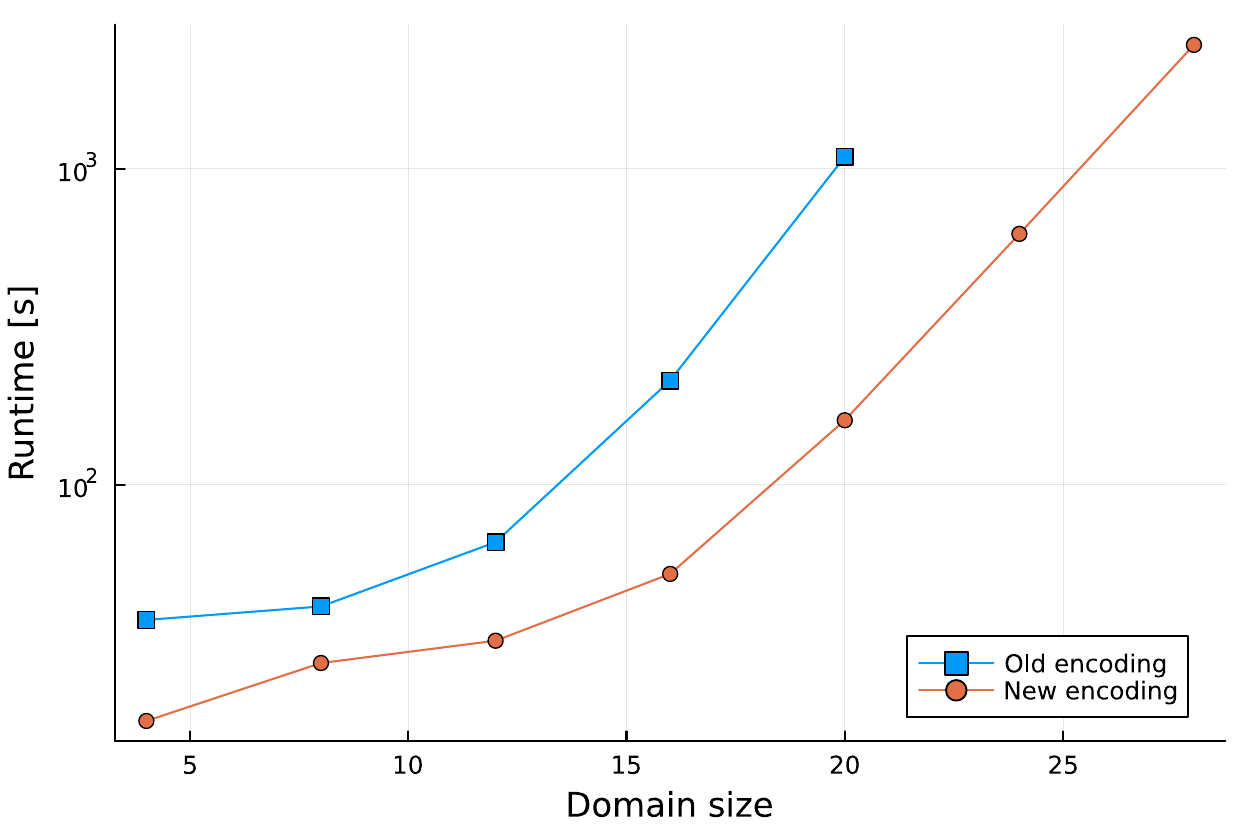}
         \caption{$3$-regular $3$-colored}
         \label{fig:3reg-3col}
     \end{subfigure}
     \begin{subfigure}[b]{0.32\linewidth}
         \centering
         \includegraphics[width=\textwidth]{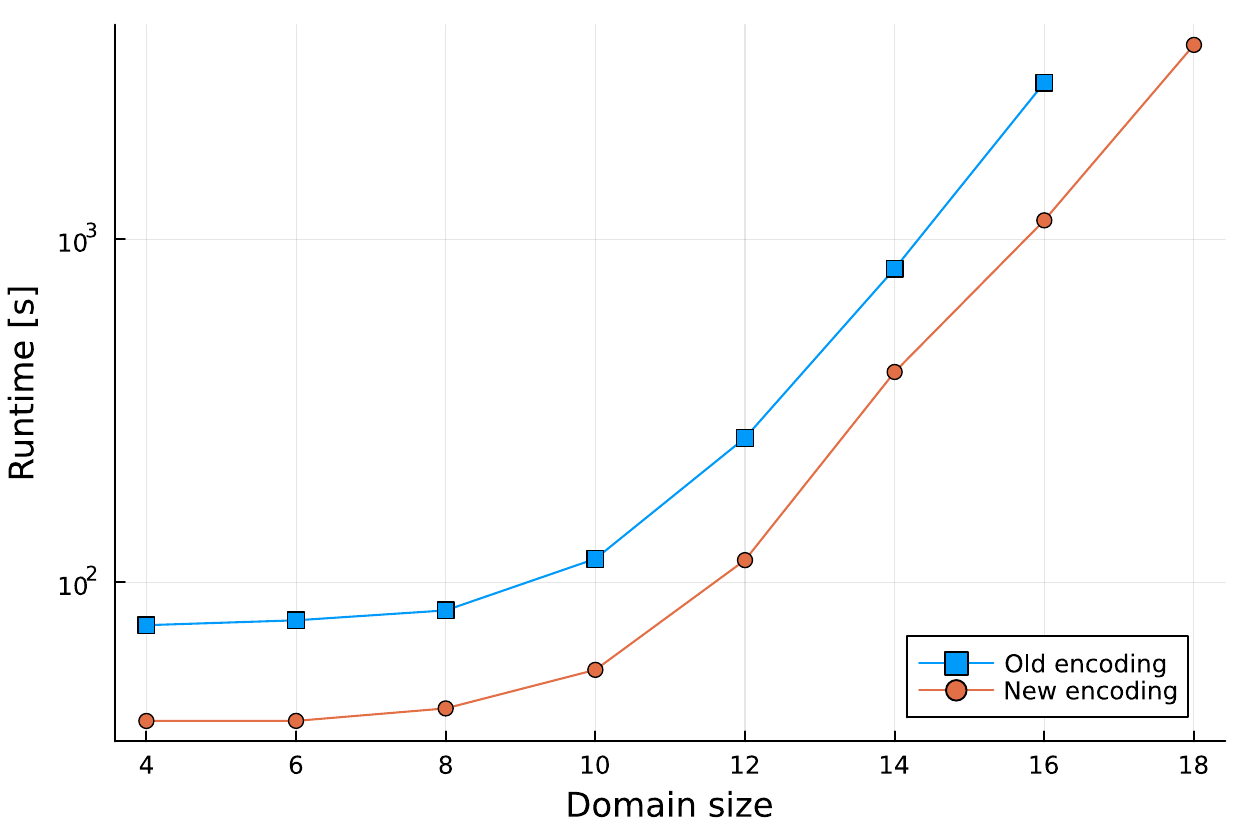}
         \caption{$3$-regular $4$-colored}
         \label{fig:3reg-4col}
     \end{subfigure}

     \begin{subfigure}[b]{0.32\linewidth}
         \centering
         \includegraphics[width=\textwidth]{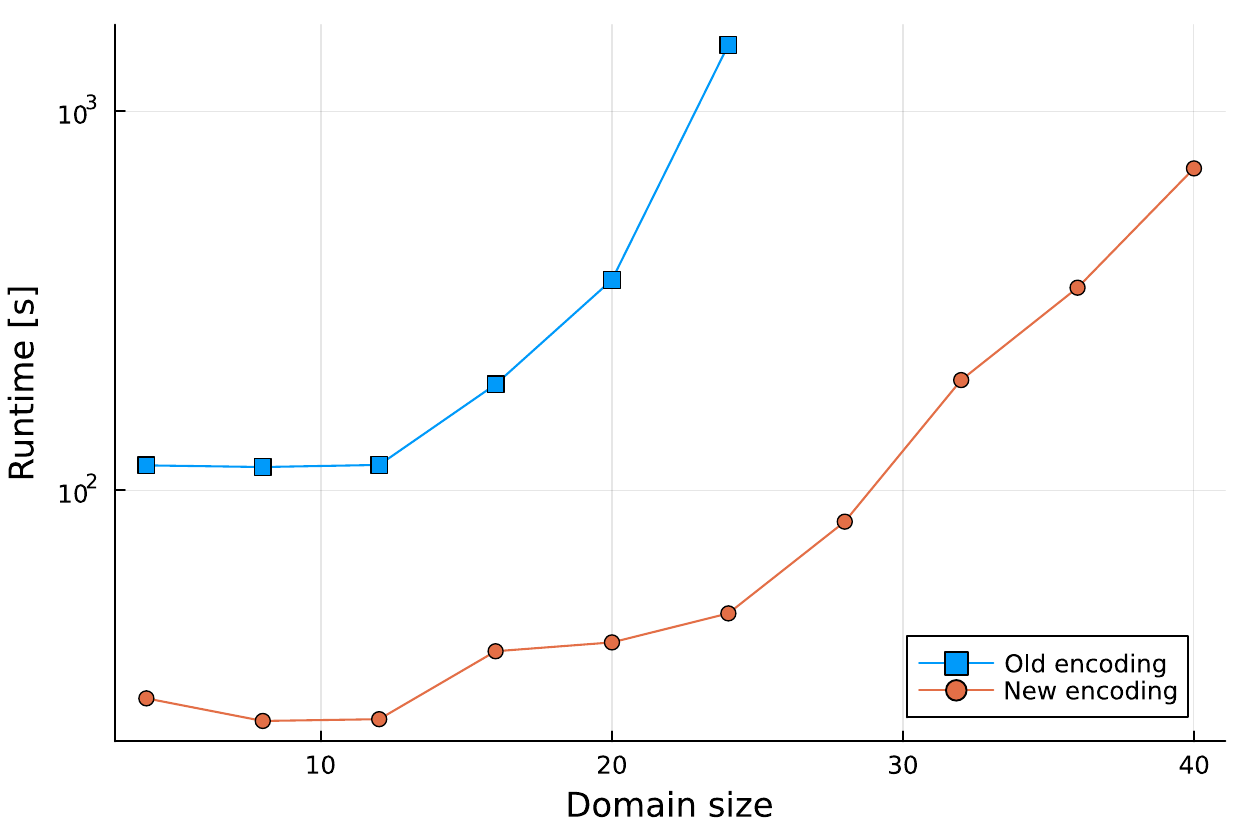}
         \caption{$4$-regular $2$-colored}
         \label{fig:4reg-2col}
     \end{subfigure}
     \begin{subfigure}[b]{0.32\linewidth}
         \centering
         \includegraphics[width=\textwidth]{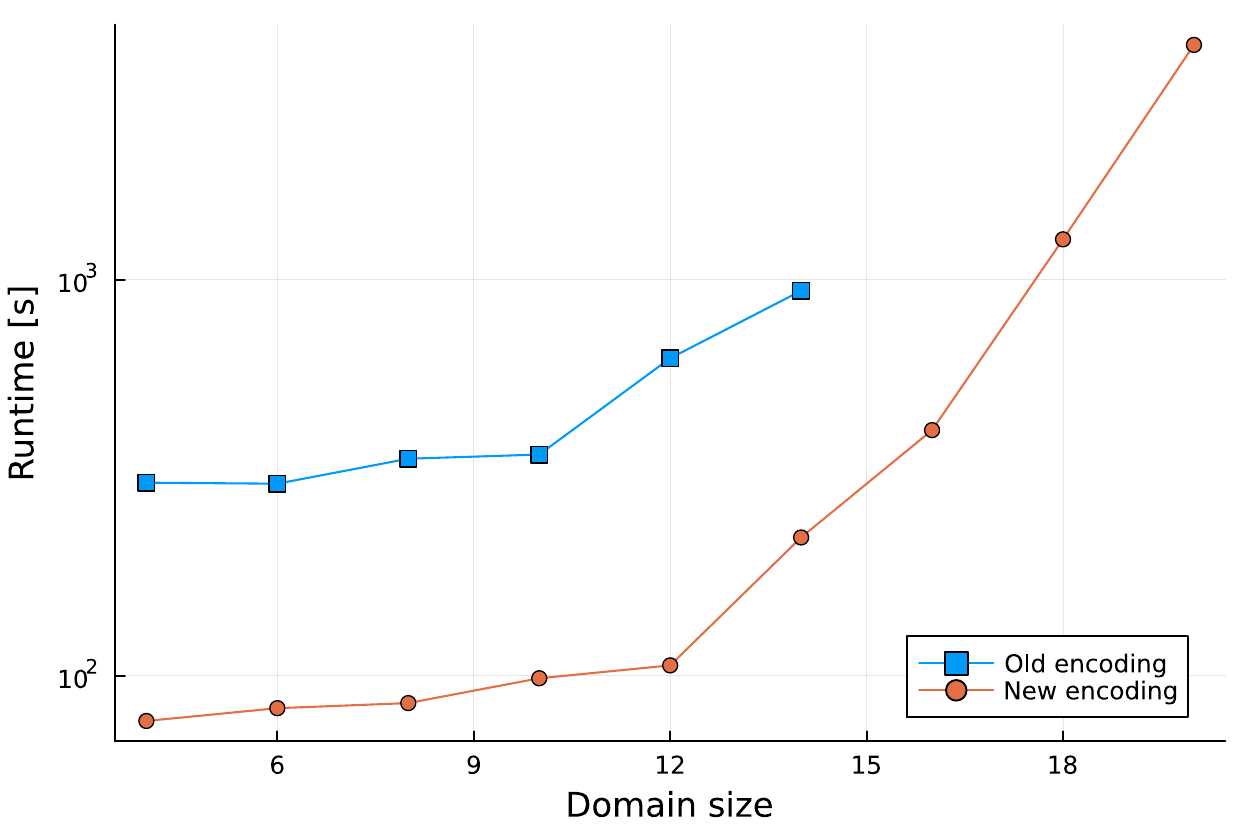}
         \caption{$4$-regular $3$-colored}
         \label{fig:4reg-3col}
     \end{subfigure}
     \begin{subfigure}[b]{0.32\linewidth}
         \centering
         \includegraphics[width=\textwidth]{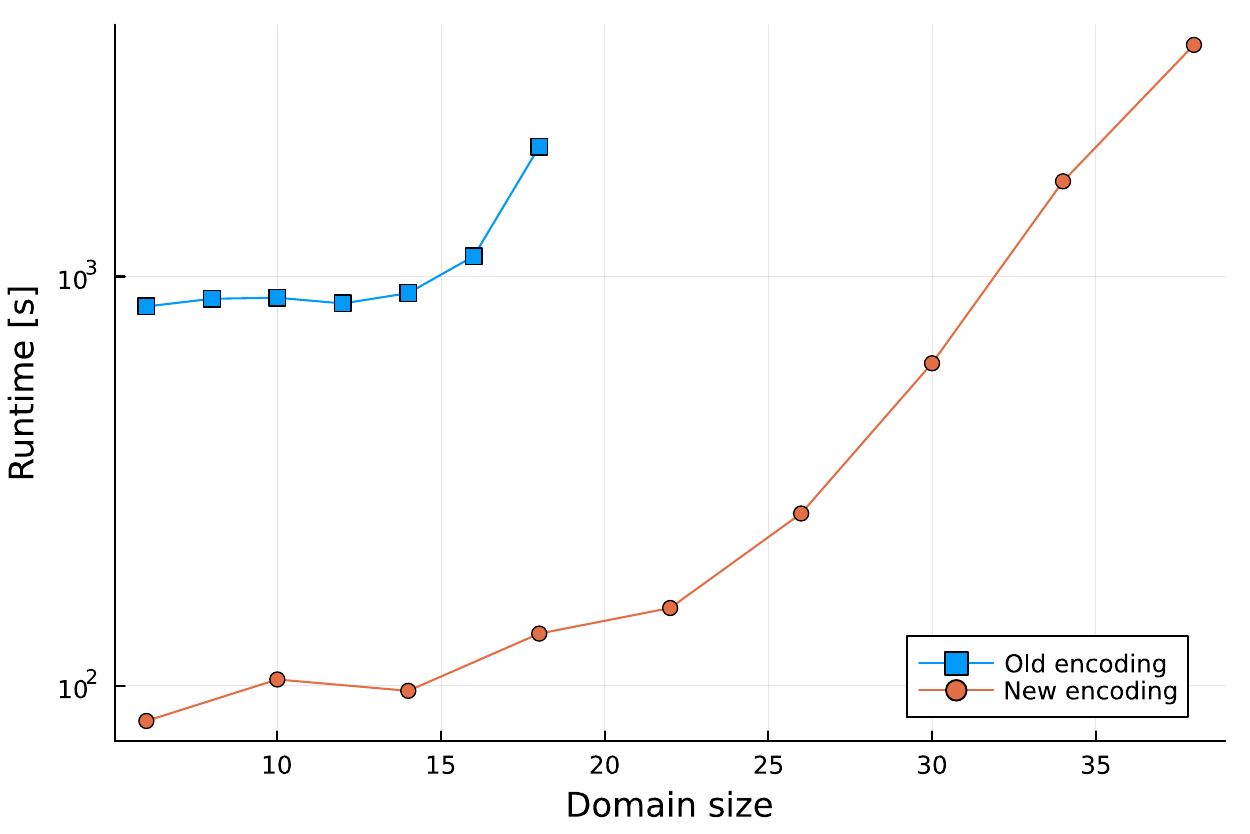}
         \caption{$5$-regular $2$-colored}
         \label{fig:5reg-2col}
     \end{subfigure}
    \caption{Runtime for counting $k$-regular $l$-colored graphs}
    \label{fig:kreg-lcol}
\end{figure}

\newpage
\subsubsection{Counting $k$-regular Directed Graphs}
In the examples so far, our \ctwo{} sentences contained only one $\forall\exists^{=k}$-quantified sentence.
Let us now see a case with more than one counting sentence, which should put an even larger emphasis on our upper bounds.
Directed $k$-(bi)regular graphs without loops require both the number of outgoing and incoming edges to be exactly $k$ for each vertex.
We can encode them using $\Gamma_3$, which uses two counting sentences:
\begin{align*}
    \Gamma_3 &= \left(\forall x\; \neg E(x, x)\right)\\ &\wedge \left(\forall x \exists^{=k} y\; E(x, y)\right)\\ &\wedge \left(\forall x \exists^{=k} y\; E(y, x)\right)
\end{align*}

Runtimes for counting over such digraphs may be inspected in Figure~\ref{fig:3reg-digraph}.
Table~\ref{tab:kreg-digraphs} depicts valid cell counts, which turn out to be quite high.
Due to those high values, which end up being polynomial degrees in our upper bounds, the problem is considerably more difficult.
Even for $k = 3$, the old encoding was extremely inefficient. Hence, we only provide runtimes for counting with the new encoding in that case.%
\footnote{Note that for the case of $k=3$, the corresponding OEIS sequence at \url{https://oeis.org/A007105} also contains only elements up to $n = 14$, which is the same as we managed to compute within a time limit of $10^5$ seconds.}

\vfill

\begin{table}[tbh]
    \centering
    \begin{tabular}{c|c|c|c|c}
         $k$ & 2 & 3 & 4 & 5 \\\hline
         $p_{old}$ & 16 & 64 & 256 & 1024\\
         $p_{new}$ & 9 & 16 & 25 & 36
    \end{tabular}
    \caption{Number of valid cells for $k$-regular digraphs}
    \label{tab:kreg-digraphs}
\end{table}

\vfill

\begin{figure*}[tbh]
    \hfill
    \begin{subfigure}{0.45\linewidth}
        \centering
    \includegraphics[width=\textwidth]{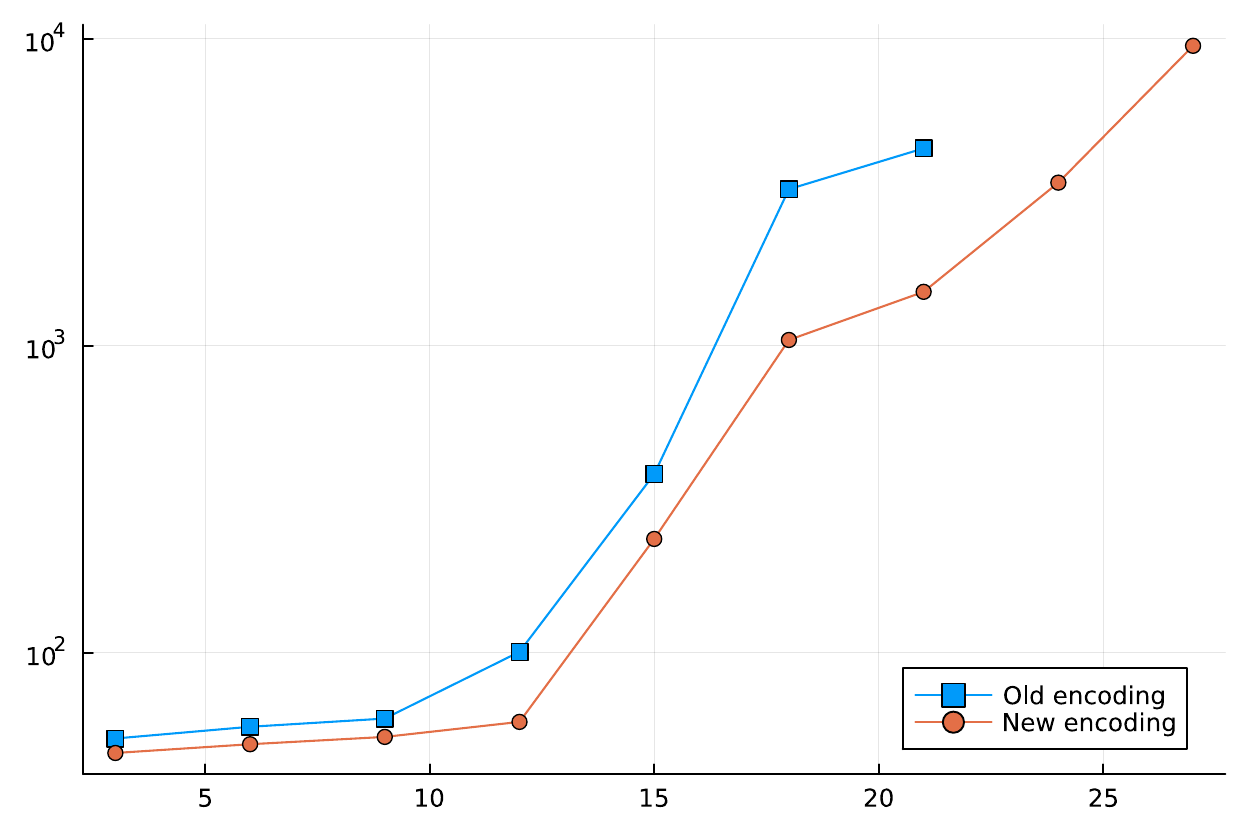}
    \caption{$2$-regular digraphs}
    \end{subfigure}
    \hfill
    \begin{subfigure}{0.45\linewidth}
        \centering
        \includegraphics[width=\textwidth]{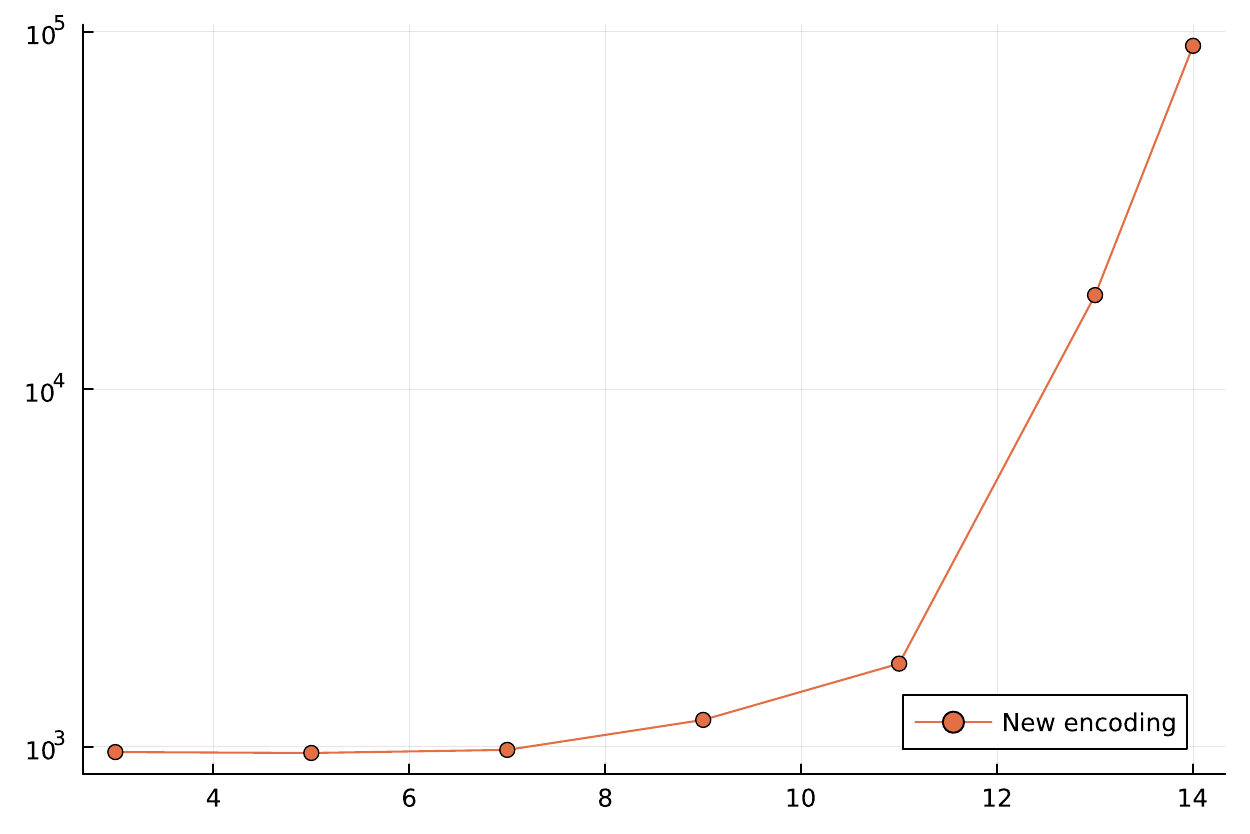}
        \caption{$3$-regular digraphs}
    \end{subfigure}
    \hfill
    \caption{Runtime for counting $k$-regular digraphs}
    \label{fig:3reg-digraph}
\end{figure*}

\subsubsection{Counting with the Linear Order Axiom}
Next, let us consider a sentence from the language of \ctwo{} extended with the linear order axiom.
Sentence $\Gamma_4$ encodes a graph similar to the Model A of the Barab\'{a}si-Albert model \citep{barabasi-laszlo02:random-graph}, an algorithm for generating random networks:
\begin{align*}
    \Gamma_4 &= \left(\forall x\; Eq(x, x)\right) \wedge (|Eq|=n)\\
    &\wedge \exists^{=k+1} x\; K(x)\\
    &\wedge \forall x\; \neg R(x, x)\\
    &\wedge \forall x \forall y\; K(x) \wedge K(y) \wedge \neg Eq(x, y) \Rightarrow R(x, y)\\
    &\wedge \forall x \exists^{=k} y\; R(x, y)\\
    &\wedge \forall x \forall y\; R(x, y) \wedge \neg \left(K(x) \wedge K(y)\right) \Rightarrow y \leq x\\
    &\wedge \forall x \forall y\; K(x) \wedge \neg K(y) \Rightarrow x \leq y\\
    &\wedge Linear(\leq)
\end{align*}

In a sense, the graph encoded by $\Gamma_4$ on $n$ vertices is sequentially \emph{grown}.
We start by ordering the vertices using the linear order axiom.
Then, a complete graph $K_{k+1}$ is formed on the first $k + 1$ vertices.
Afterward, we start growing the graph by \emph{appending} remaining vertices $i\in\{k+2, k+3,\ldots,n\}$ one at a time.
When appending a vertex $i$, we introduce $k$ outgoing edges that can only connect to the vertices $\{1, 2, \ldots, i - 1\}$, i.e., all the new edges have a form $(i, j)$ where $j\in\{1, 2, \ldots, i - 1\}$.
Ultimately, when counting the number of such graphs, we may not be interested in the same solutions differing by vertex ordering only, so we can divide the final number by $n!$.
The ordering through the linear order axiom is, however, a very useful modeling construct---without it, modeling graphs such as the one above would likely not be possible in a domain-lifted way.

From now on, let us refer to graphs defined by the sentence $\Gamma_4$ as BA$(k)$.
Figure~\ref{fig:ba3-counting} depicts the runtime for counting the graphs BA(3) on $n$ vertices using the old and the new encoding. The problems lead to 16 and 8 valid cells, respectively.

\begin{figure}[tb]
    \centering
    \begin{subfigure}{0.45\linewidth}
        \centering
        \includegraphics[width=\textwidth]{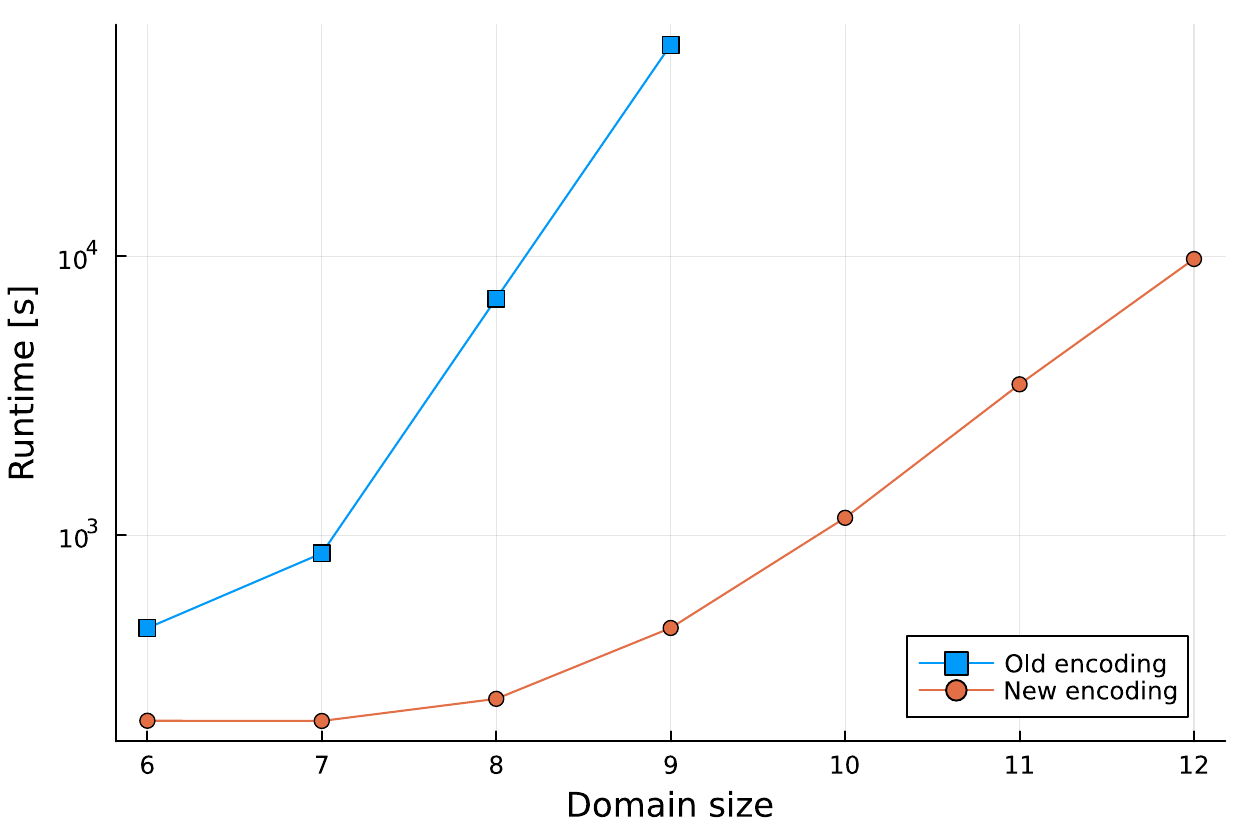}
        \caption{Counting BA(3) graphs}
        \label{fig:ba3-counting}
    \end{subfigure}
    \begin{subfigure}{0.45\linewidth}
        \centering
        \includegraphics[width=\textwidth]{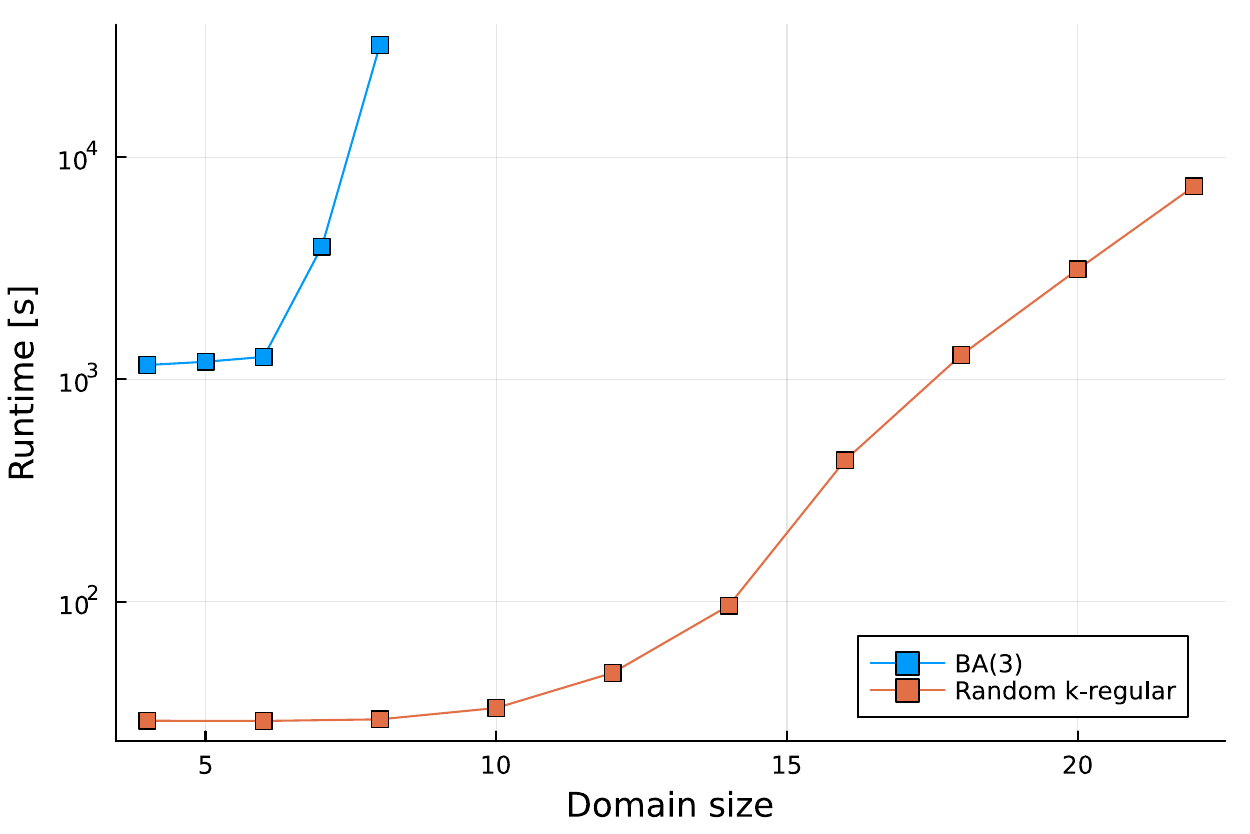}
        \caption{Inference on an MLN over BA(3) network}
        \label{fig:ba3-inference}
    \end{subfigure}
    \caption{Runtimes of tasks performed on BA(3)}
    \label{fig:ba3}
\end{figure}

\newpage
\subsection{Remarks on Performance}
The experiments above confirm that the new encoding is indeed more efficient, outperforming the old encoding on all tested instances.
While that should not come as a surprise, since we have derived a bound provably better than the old one, it is also not completely obvious because FastWFOMC uses many algorithmic tricks that partially make up for the inefficiencies of the old encoding.
However, not even our new bound allowed us to compute WFOMC within reasonable time for large domain sizes on all the tested problems.

Still, one should keep in mind that the alternative, that is, solving WFOMC by propositionalization to WMC scales extremely poorly, as was repeatedly shown in the lifted inference literature \citep{meert-etal14:wfomc-vs-wmc}.
Thus, we must rely on algorithms operating at the lifted level. Our work extends the domains that can be efficiently handled by lifted algorithms, but more work is needed to extend the reach of lifted inference algorithms further.

\subsection{Lifted Inference in Markov Logic Networks}
Markov Logic Networks (MLNs) are a popular first-order probabilistic language based on undirected graphs \citep{richardson-domingos06:mln}.
An MLN $\Phi$ is a set of weighted first-order logic formulas (possibly with free variables) with weights taking on values from the real domain or infinity:
$$\Phi = \{(w_1, \alpha_1), (w_2, \alpha_2), \ldots, (w_m, \alpha_m)\}$$
Given a domain $\Delta$, the MLN defines a probability distribution over possible worlds such as
\begin{align*}
    Pr_{\Phi, \Delta}(\omega) = \frac{\llbracket \omega \models \Phi_{\infty} \rrbracket}{Z} \exp\left(\sum_{(w_i, \alpha_i) \in \Phi_{\Real}} w_i \cdot N(\alpha_i, \omega) \right)
\end{align*}
where $\Phi_{\Real}$ denotes formulas with real-valued weights (soft constraints), $\Phi_{\infty}$ denotes formulas with infinity-valued weights (hard constraints), $\llbracket\cdot\rrbracket$ is the indicator function, $Z$ is the normalization constant ensuring valid probability values and $N(\alpha_i, \omega)$ is the number of substitutions to free variables of $\alpha_i$ that produce a grounding of those free variables that is satisfied in $\omega$.

Inference (and also learning) in MLNs is reducible to WFOMC \citep{broeck-etal14:wfomc-skolem}.
It is thus possible to efficiently perform exact lifted inference over an MLN using WFOMC as long as the network is encoded using a domain-liftable language.

The reduction works as follows:
For each $(w_i, \alpha_i(\mat{x}_i)) \in \Phi_\Real$, introduce a new formula $\forall \mat{x}_i: \xi_i(\mat{x}_i) \Leftrightarrow \alpha_i(\mat{x}_i)$, where $\xi_i$ is a fresh predicate, and set $w(\xi_i) = \exp(w_i), \overline{w}(\xi_i) = 1$ and $w(Q) =  \overline{w}(Q) = 1$ for all other predicates $Q$.
Formulas in $\Phi_{\infty}$ are added to the theory as additional constraints.
Denoting the new theory by $\Gamma$ and a query by $\phi$, we can compute the inference as
\begin{align*}
    Pr_{\Phi, \Delta}(\phi) = \frac{\textup{WFOMC}(\Gamma \wedge \phi, |\Delta|,\w)}{\textup{WFOMC}(\Gamma, |\Delta|, \w)}.
\end{align*}

We consider an MLN defined on the network BA($k$), i.e., encoded using the sentence $\Gamma_4$.
We extend the sentence with the predicate $Fr/2$ encoding undirected edges and we consider the standard MLN scenario where a smoker's friend is also a smoker.
That property, however, does not hold universally, instead, it is weighed, determining how important it is for an interpretation to satisfy it.
Putting it all together, we obtain an MLN in $\Phi_1$, namely
\begin{align*}
    \Phi_1 = \{ &\left(\infty, \forall x\; Eq(x, x) \right),\\
        &\left(\infty, (|Eq|=n) \right),\\
        &\left(\infty, \exists^{=k+1} x\; K(x) \right),\\
        &\left(\infty, \forall x\; \neg R(x, x) \right),\\
        &\left(\infty, \forall x \forall y\; K(x) \wedge K(y) \wedge \neg Eq(x, y) \Rightarrow R(x, y) \right),\\
        &\left(\infty, \forall x \exists^{=k} y\; R(x, y) \right),\\
        &\left(\infty, \forall x \forall y\; R(x, y) \wedge \neg \left(K(x) \wedge K(y)\right) \Rightarrow y \leq x \right),\\
        &\left(\infty, \forall x \forall y\; K(x) \wedge \neg K(y) \Rightarrow x \leq y \right),\\
        &\left(\infty, Linear(\leq) \right),\\
        &\left(\infty, \forall x \forall y\; Fr(x, y) \Leftrightarrow \left(R(x, y) \vee R(y, x)\right)\right),\\
        &\left(\ln w, Sm(x) \wedge Fr(x, y) \Rightarrow Sm(y) \right) \}
\end{align*}

We will be interested in the probability of there being exactly $m$ smokers.
To put the result into some context, we will compare them to a situation, where the underlying network is a random $k$-regular graph, i.e.,
\begin{align*}
    \Phi_2 = \{ &\left(\infty, \forall x\; \neg Fr(x, x) \right),\\ 
        &\left(\infty, \forall x \forall y\; Fr(x, y) \Rightarrow Fr(y, x)\right),\\
        &\left(\infty, \forall x \exists^{=k} y\; Fr(x, y) \right),\\
        &\left(\ln w, Sm(x) \wedge Fr(x, y) \Rightarrow Sm(y) \right) \}.
\end{align*}

Figure~\ref{fig:mln} shows the inference results for $n = 8$ and various weights $w$.
For both graphs, the distribution starts as binomial for small $w$.
Obviously, if we do not care about satisfying the smoker-friend property, random assignment of truth values to $Sm(t_1), Sm(t_2), \ldots, Sm(t_n)$ will lead to a binomial distribution.
As we increase the weight, the distributions gradually \emph{reverse} to the opposite scenario, a U-shaped distribution that prefers the extremes.
That is also intuitive, as we mostly work with one connected component where everyone is everyone's friend.
However, a random $3$-regular graph appears to go through the uniform distribution whereas our BA model seems not to.
That is due to the fact that the BA model is, by construction, always one connected component as opposed to a 3-regular graph that can also consist of two connected components on $n=8$ vertices.
Runtimes of the inference task for several domain sizes can be inspected in Figure~\ref{fig:ba3-inference}.

\vfill

\begin{figure*}[ht]
    \hfill
     \begin{subfigure}[b]{0.45\linewidth}
         \centering
         \includegraphics[width=\textwidth]{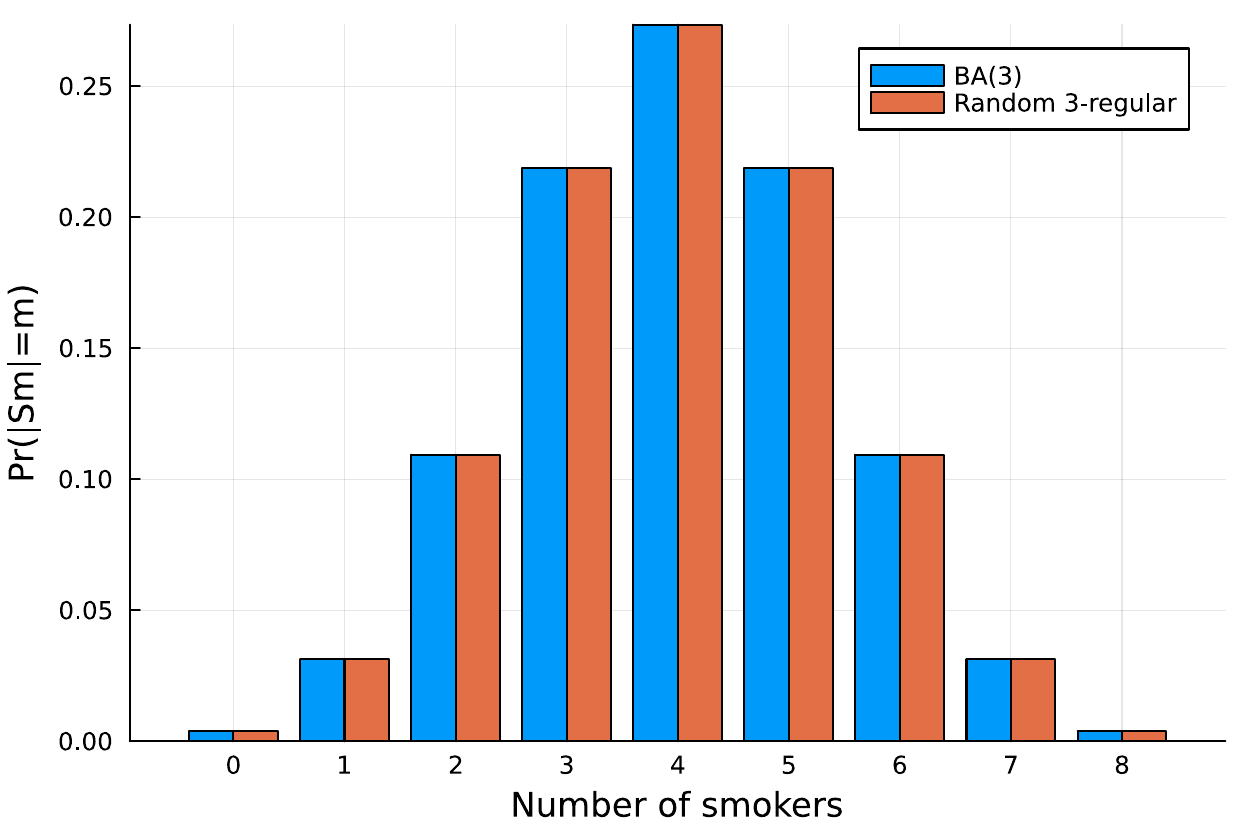}
         \caption{$w=1$}
         \label{fig:mln-1}
     \end{subfigure}
     \hfill
     \begin{subfigure}[b]{0.45\linewidth}
         \centering
         \includegraphics[width=\textwidth]{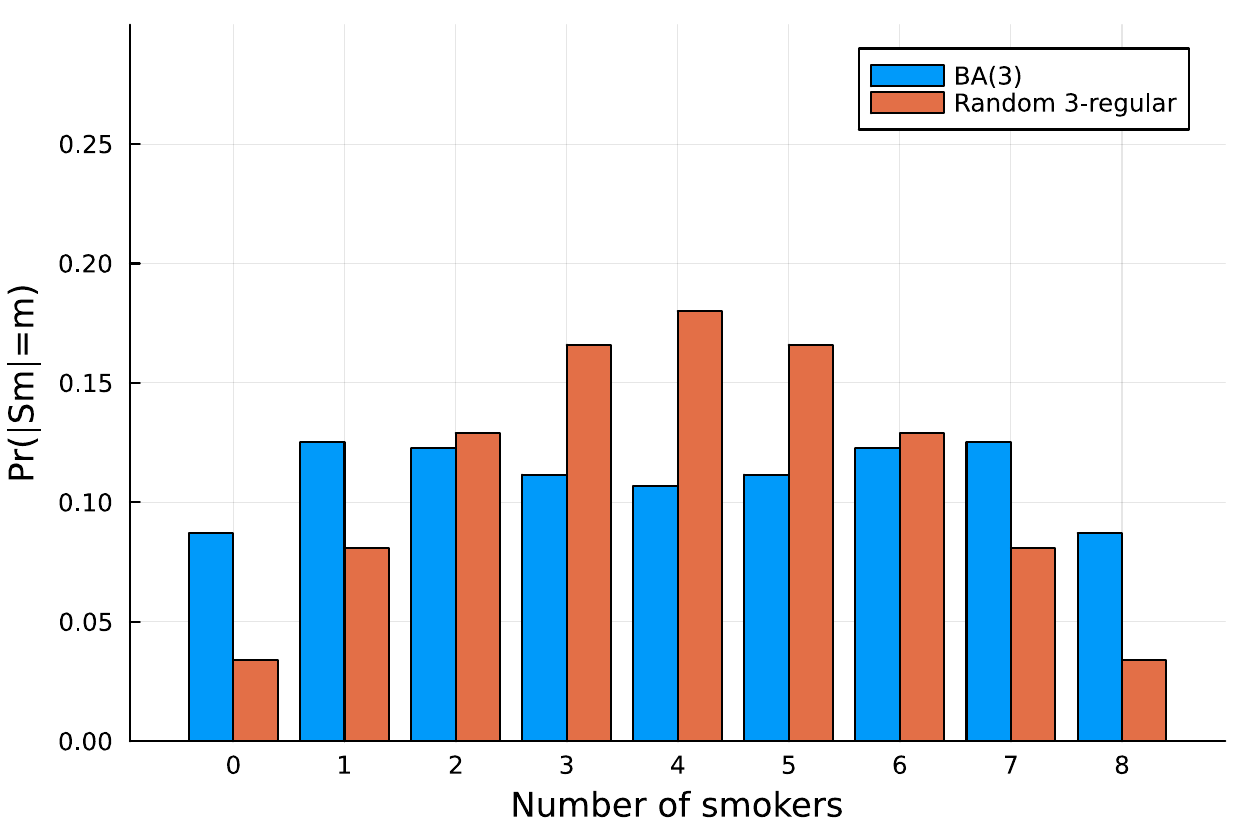}
         \caption{$w = \frac{3}{2}$}
         \label{fig:mln-3/2}
     \end{subfigure}
     \hfill

     \hfill
     \begin{subfigure}[b]{0.45\linewidth}
         \centering
         \includegraphics[width=\textwidth]{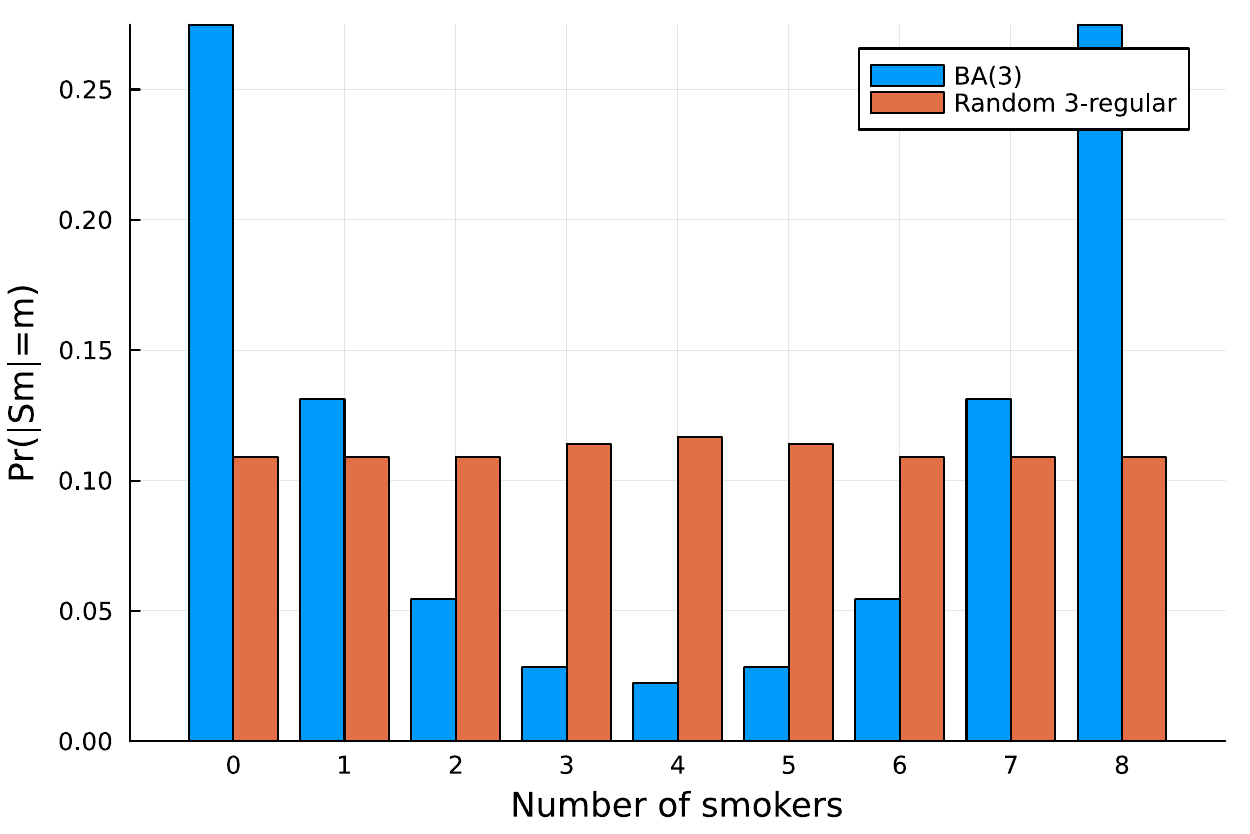}
         \caption{$w = 2$}
         \label{fig:mln-2}
     \end{subfigure}
     \hfill
     \begin{subfigure}[b]{0.45\linewidth}
         \centering
         \includegraphics[width=\textwidth]{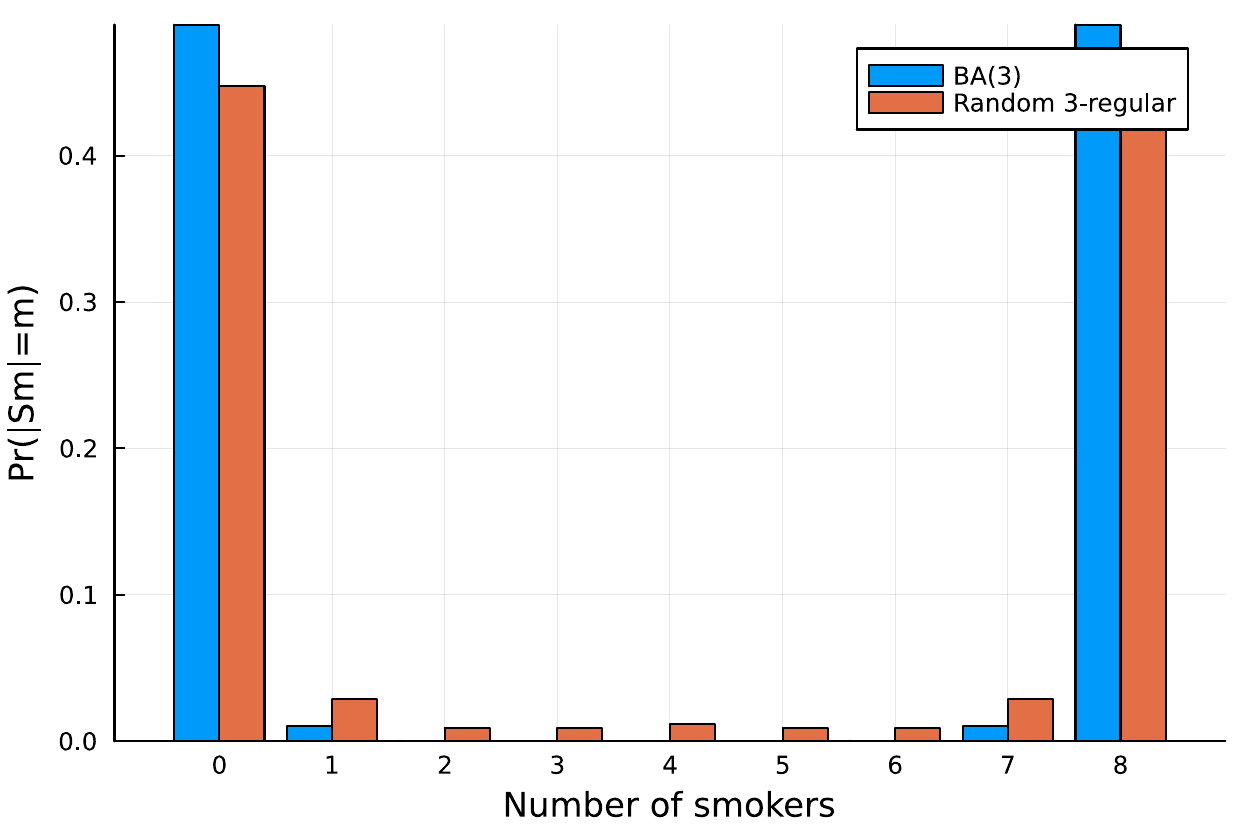}
         \caption{$w = 5$}
         \label{fig:mln-5}
     \end{subfigure}
    \hfill
    
     \caption{Probability of $m$ smokers}
    \label{fig:mln}
\end{figure*}

\vfill

\newpage

\section{Related Work}
\label{sec:other-works}
This work builds on a long stream of results from the area of lifted inference \citep{poole03:fove,braz05:fove2,jha10:lifted-inference-features,gogate11:ptp,taghipour13:fove3,braun16:lifted-junction-tree,dilkas23}.
Particularly, we continue in the line of research into the task of weighted first-order model counting \citep{broeck-etal11:knowledge-compilation,broeck11:domain-liftability-fo2,broeck-etal14:wfomc-skolem,beame-etal15:fo3-intractable,kazemi-etal16:sfo2+sru,kuusisto-lutz18:function-constraint,kuzelka21:wfomc-in-c2,bremen-kuzelka21b:tree-axiom,malhotra-serafini22:c2-closed-form,toth-kuzelka23:linear-order,malhotra-serafini23:dag-axiom,malhotra-etal23:graph-axioms}.

To the best of our knowledge, there is no other literature available on the exact complexity of computing WFOMC for \ctwo{} sentences.
The closest resource is the one proving \ctwo{} to be domain-liftable \citep{kuzelka21:wfomc-in-c2}, which we directly build upon and, in some sense, extend.
Besides that, \citet{malhotra-serafini22:c2-closed-form} later proposed a slightly different approach to dealing with counting quantifiers, although they did not analyze the method's exact complexity either.
However, as shown in the appendix, their techniques are also super-exponential in the counting parameters, not offering any speedup.
Another relevant resource, concerned with designing an efficient algorithm for computing WFOMC over \fotwo{}, is \citet{bremen-kuzelka21:fast-wfomc} whose FastWFOMC algorithm remains state-of-the-art and it can be used as a WFOMC oracle required to deal with cardinality constraints.

\section{Conclusion}
\label{sec:conclusion}
The best existing bound for the time complexity of computing WFOMC over \ctwo{} is polynomial in the domain size \citep{kuzelka21:wfomc-in-c2}.
However, as we point out, the polynomial's degree is exponential in the parameter $k$ of the counting quantifiers.
Using the new techniques presented in this paper, we reduce the dependency of the degree on $k$ to a quadratic one, thus achieving a super-exponential speedup of the WFOMC runtime with respect to the counting parameter $k$.

The new encoding can potentially improve any applications of WFOMC over \ctwo{} or make some applications tractable in the practical sense.
We support this statement further in the experimental section, where we provide runtime measurements for computing WFOMC of several \ctwo{} sentences and sentences from a domain-liftable \ctwo{} extension.

It remains an open question whether the complexity can be reduced even further.
Thus, we only consider our new bound a bound to beat, and we certainly hope that someone will beat it in the future.

\section*{Acknowledgments}
This work has received funding from the European Union’s Horizon Europe Research and Innovation
program under the grant agreement TUPLES No 101070149.
JT's work was also supported by a CTU grant no. SGS23/184/OHK3/3T/13.

\newpage
\begin{appendices}
\section{Bit Complexity}
All of the bounds presented in this paper assume mathematical operations to take constant time, which is a common assumption when it comes to asymptotic complexity.
However, in the context of WFOMC, it is quite common for the algorithm to end up working with extremely large values.

Consider counting the number of undirected graphs without loops on $n$ vertices, i.e., computing WFOMC of a formula
\begin{align*}
    \varphi = \left(\forall x\; \neg E(x, x)\right) \wedge \left(\forall x \forall y\; E(x, y) \Rightarrow E(y, x)\right)
\end{align*}
with $w(E)=\overline{w}(E)=1$.
Then
\begin{align*}
    \textup{WFOMC}(\varphi, n, \w) = 2^{\binom{n}{2}}
\end{align*}
and for $n = 12$, we already obtain
$$\textup{WFOMC}(\varphi, n, \w)=73786976294838206464,$$
i.e., a value requiring more than 64 bits for its machine representation,\footnote{The formula along with the value for $n=12$ can be checked at \url{https://oeis.org/A006125}.} leading to situations where bit complexity should also be considered.

In \citet{kuzelka21:wfomc-in-c2}, the author derives a bound on the number of bits required to represent each summand in the WFOMC computation, which turns out to be polynomial in $n$ and $\log M$, where $M$ is a bound on the number of bits required to represent the weights.
The total number of such summands is then also polynomial in $n$.
Thus, the polynomial bound for domain-liftability is not violated.

The derivation itself is rather involved.
We refer interested readers to Proposition 2 and its proof in \citet{kuzelka21:wfomc-in-c2}.
For our purposes, it suffices to say that the derivation would remain unchanged under our WFOMC transformations.

\section{Complexity of an Alternative Approach to Counting Quantifiers}
In this work, we focus on and improve techniques proposed in \citet{kuzelka21:wfomc-in-c2} for dealing with counting quantifiers when computing WFOMC.
A slightly different approach to counting in WFOMC was later proposed in \citet{malhotra-serafini22:c2-closed-form}.
For completeness, this section contains a proof sketch of their techniques also being super-exponential in the counting parameters, same as the approach from \citet{kuzelka21:wfomc-in-c2}.

Let us consider a concrete example of computing WFOMC over a formula encoding $m$-regular graphs,%
\footnote{We denote the degree by $m$ in this case to easily differentiate it from a vector $\mathbf{k}$ used in \citet{malhotra-serafini22:c2-closed-form}.}
i.e.,
\begin{align*}
    \Gamma &= \left(\forall x\; \neg E(x, x)\right) \wedge \left(\forall x \forall y\; E(x, y) \Rightarrow E(y, x)\right)\\
    &\wedge \left(\forall x \exists^{=m} y\; E(x, y)\right).
\end{align*}
Following Theorem~3 in \citet{malhotra-serafini22:c2-closed-form}, when counting over $\Gamma$, we first 
 replace the counting subformula by an atom $A(x)$ with a fresh predicate $A/1$, i.e.,
\begin{align*}
    \Gamma' &= \left(\forall x\; \neg E(x, x)\right) \wedge \left(\forall x \forall y\; E(x, y) \Rightarrow E(y, x)\right)\\ 
    &\wedge \left(\forall x\; A(x) \right) \wedge \left(\forall x\; A(x) \Leftrightarrow \left(\exists^{=m} y\; E(x, y)\right)\right).
\end{align*}

Second, we define additional formulas
\begin{align*}
    \Phi_1 &= \bigwedge_{i=1}^m \forall x \forall y\; P_i(x) \Rightarrow \neg \left(\left(A(x) \vee B(x)\right) \Rightarrow f_i(x, y)\right)\\
    \Phi_2 &= \bigwedge_{1 \leq i < j \leq m} \forall x \forall y\; f_i(x, y) \Rightarrow \neg f_j(x, y)\\
    \Phi_3 &= \bigwedge_{i=1}^m \forall x \forall y\; f_i(x, y) \Rightarrow E(x, y)\\
    \Phi_4 &= \forall x\; B(x) \Rightarrow \neg A(x)\\
    \Phi_5 &= \forall x \forall y\; M(x, y) \Leftrightarrow \left(\left(A(x) \vee B(x)\right) \wedge E(x, y)\right)\\
    \Phi_6 &= |A| + |B| = |f_1| = |f_2| = \ldots = |f_m| = \frac{|M|}{m},
\end{align*}
where $B, M, f_i$'s and $P_i$'s are fresh predicates.

Third, we eliminate the counting sentence in $\Gamma'$ using the new formulas as
\begin{align*}
    \Gamma'' &= \left(\forall x\; \neg E(x, x)\right) \wedge \left(\forall x \forall y\; E(x, y) \Rightarrow E(y, x)\right)\\ 
    &\wedge \left(\forall x\; A(x) \right) \wedge \bigwedge_{i=1}^5 \Phi_i.
\end{align*}

Finally, the counting problem can be posed as
\begin{align}
\label{eq:malhotra}
    \textup{WFOMC}(\Gamma, n, \w) = \sum_{(\mathbf{k}, \mathbf{h}) \models \Phi_6} \binom{n}{\mathbf{k}} \; \mathcal{T}_m(\Gamma'', \mathbf{k}, \mathbf{h}, \w),
\end{align}
where $\binom{n}{\mathbf{k}}$ is the multinomial coefficient, $\mathcal{T}_m(\Gamma'', \mathbf{k}, \mathbf{h}, \w)$ is some function whose complexity we can assume to be constant with respect to $n$, and the summation argument $(\mathbf{k}, \mathbf{h}) \models \Phi_6$ denotes all vector pairs $(\mathbf{k}, \mathbf{h})$ filtered so that they satisfy the cardinality constraints in $\Phi_6$.
We leave to \citet{malhotra-serafini22:c2-closed-form} what exactly it means for $(\mathbf{k}, \mathbf{h})$ to satisfy $\Phi_6$.
For our analysis, it suffices to say that vector $\mathbf{k}$ orders cells of $\Gamma''$ and $\mathbf{k}_i$ specifies how many constants from the domain satisfy the $i$-th cell in an interpretation.
Note that, in our example, the filtering will not reduce the number of instances of vector $\mathbf{k}$ that needs to be summed over.
Hence, the time complexity of evaluating Equation~\ref{eq:malhotra} is at least exponential in the dimensionality of $\mathbf{k}$.

% Intuitively, the vector $\mathbf{k}$ orders cells and its $i$-th component then specifies, how many constants from the domain satisfy the $i$-th cell in an interpretation.
% Vector $\mathbf{h}$ does a similar thing. It splits certain pairs of constants among maximal consistent conjunctions of binary non-reflexive atoms from the language of the input formula.%
% \footnote{Such constructs are also called 2-tables, e.g., in \citet{lucien23:wfoms}.}

Let us now inspect the structure of cells of $\Gamma''$ which will allow us to compute their total number, i.e., it will allow us to exactly determine the dimensionality of vector $\mathbf{k}$.
All cells must contain $\neg E(x, x)$ and $A(x)$ due to the non-counting conjuncts in $\Gamma'$.
Due to $\Phi_4$, cells must further contain $\neg B(x)$ and, by $\Phi_5$, $\neg M(x, x)$ as well.
Furthermore, due to $\Phi_3$, the cells must also contain $\neg f_i(x, x)$ for all $i$.
Atoms enforced so far cause the consequents of all implications in $\Phi_1$ to be satisfied, hence, we are free to choose the truth values of all $P_i(x)$ atoms.
Therefore, we have $2^m$ cells and the dimensionality of $\mathbf{k}$ is exponential in the counting parameter $m$.
Overall, we conclude that the approach from \citet{malhotra-serafini22:c2-closed-form} for computing WFOMC over \ctwo{} is, in fact, also super-exponential in the counting parameters.

\end{appendices}

\newpage

\bibliographystyle{named}
\bibliography{main}

\end{document}